\newtheorem{thm}{Theorem}[section]
\newtheorem{lem}[thm]{Lemma}
\theoremstyle{definition}
\theoremstyle{remark}
\newtheorem{rem}{Remark}
\numberwithin{equation}{section}
\newcommand{\wfun}{{\omega}}
\newcommand{\Prob}{{\mathbb{P}}}
\renewcommand{\d}{\partial}
\newcommand{\dbar}{\bar{\partial}}
\newcommand{\calA}{{\mathcal A}}
\newcommand{\calB}{{\mathcal B}}
\newcommand{\calO}{{\mathcal O}}
\newcommand{\calG}{{\mathcal G}}
\newcommand{\calH}{{\mathcal H}}
\newcommand{\calN}{{\mathcal N}}
\newcommand{\calE}{{\mathcal E}}
\newcommand{\calW}{{\mathcal W}}
\newcommand{\bigO}{{\mathcal O}}
\newcommand{\T}{{\mathbb T}}
\newcommand{\1}{{\mathbf{1}}}
\newcommand{\trace}{\operatorname{trace}}
\newcommand{\Subh}{\operatorname{Subh}}
\newcommand{\fluct}{\operatorname{fluct}}
\renewcommand{\Cap}{\operatorname{Cap}}
\newcommand{\R}{{\mathbb R}}
\newcommand{\Z}{{\mathbb Z}}
\newcommand{\E}{{\mathbb E}}
\newcommand{\C}{{\mathbb C}}
\newcommand{\D}{{\mathbb D}}
\newcommand{\para}{\lambda}
\newcommand{\supp}{\operatorname{supp}}
\newcommand{\re}{\operatorname{Re}}
\newcommand{\im}{\operatorname{Im}}
\newcommand{\erfc}{\operatorname{erfc}}
\newcommand{\Int}{\operatorname{Int}}
\newcommand{\Ext}{\operatorname{Ext}}
\newcommand{\dist}{\operatorname{dist}}
\newcommand{\const}{\operatorname{const.}}
\newcommand{\He}{\operatorname{He}}
\newcommand{\eps}{{\varepsilon}}
\newcommand*\bigcdot{\mathpalette\bigcdot@{.5}}
\newcommand*\bigcdot@[2]{\mathbin{\vcenter{\hbox{\scalebox{#2}{$\m@th#1\bullet$}}}}}
\begin{document}

	\keywords{Coulomb gas; fluctuations; obstacle problem; spectral gap; Heine distribution; universality; orthogonal polynomials; bifurcation.}

	\subjclass[2010]{60B20; 82D10; 41A60; 58J52; 33C45; 33E05; 31C20}

	\title[Fluctuations of Coulomb systems]{On fluctuations of Coulomb systems and universality of the Heine distribution}

	\begin{abstract} We consider a class of external potentials on the complex plane $\C$ for which the
		coincidence set to the obstacle problem contains a Jordan curve in the exterior of the droplet. We refer to this curve as a spectral outpost. We study the corresponding Coulomb gas at $\beta=2$.

		Generalizing recent work in the radially symmetric case, we prove that the number of particles which fall near the spectral outpost has an asymptotic Heine distribution, as the number of particles $n\to\infty$.
		
		We also consider a class of potentials with disconnected droplets whose connected components are separated by a ring-shaped spectral gap.
		We prove that the fluctuations of the number of particles that fall near a given component has an asymptotic discrete normal distribution, which depends on $n$.

		For the case of disconnected droplets we also consider fluctuations of general smooth linear statistics and show that they tend to distribute as the sum of a Gaussian field and an independent, oscillatory, discrete Gaussian field.

		Our methods involve a new asymptotic formula on the norm of monic orthogonal polynomials in the bifurcation regime
		and a variant of the method of limit Ward identities of Ameur, Hedenmalm, and Makarov.

	\end{abstract}

	\author{Yacin Ameur}
	\address{Yacin Ameur\\
		Department of Mathematics\\
		Lund University\\
		22100 Lund, Sweden}
	\email{ Yacin.Ameur@math.lu.se}

	\author{Joakim Cronvall}
	\address{Joakim Cronvall\\
		Department of Mathematics\\
		Lund University\\
		22100 Lund, Sweden}
	\email{Joakim.Cronvall@math.lu.se}

	\maketitle

	\section{Introduction}\label{intro} In the theory of Hermitian random matrices it is natural to study
	the number of eigenvalues which fall in the vicinity of a given connected component of the droplet (i.e.~ the support of the equilibrium measure).
	The connected components are compact intervals called ``cuts''; accordingly, disconnected droplets are known as the ``multi-cut regime''. It has been well-studied, see e.g. \cite{DIZ,CFWW,ClaeysGravaMcLaughlin,BG2,Fo} and references therein.

	In dimension two (eigenvalues of normal random matrices) the regime of disconnected droplets is no less natural, but is a relatively new area of research.
	
	A natural starting point
	is to study radially symmetric potentials; this was first done in a hard edge setting in \cite{C}.
	Another natural model, with discrete rotational symmetry, also appears in the literature, following \cite{BM}. Recently, some other examples of disconnected droplets also appeared in connection with insertion of point charges, e.g. see \cite{BY2}.

	In the radially symmetric case a disconnected droplet is a union of concentric annuli, one of which is perhaps a disc. In the soft edge regime,  a comprehensive study of such ensembles is carried out in our earlier works \cite{ACC1,ACC2}.

	In the present work we study for the first time general smooth linear statistics related to a class of disconnected droplets (with soft edges) with ring-shaped gaps obeying only some natural potential theoretic compatibility conditions.
	
	We shall identify two universality classes that we term
	``spectral outpost'' and ``spectral gap''.
	(Cf.~Fig.~\ref{fig1} and Fig.~\ref{fig2} below.)

	For the first type (spectral outpost) the droplet is connected, but the coincidence set for the obstacle problem contains a Jordan curve sitting in the exterior of the droplet. We refer to this curve as an outpost.
	
	The second type (spectral gap ensemble) corresponds to disconnected droplets whose connected components are
	separated by a ring-shaped spectral gap.
	
	For these classes, the uncertainty in the number of eigenvalues near a component of the droplet (or the outpost) are given by certain
	Heine distributions, which we interpret as the number of eigenvalues that are displaced from one component, across a spectral gap, to another. These Heine distributions carry certain universal potential theoretic and geometric information. For a ring-shaped spectral gap separating two components of the droplet, we obtain two Heine distributions which ``oscillate'' with the number of eigenvalues, and their difference has an oscillatory discrete Gaussian distribution.

	Before turning to the details, it is convenient to fix notation and recall some background on Coulomb gas ensembles in dimension two. We refer to \cite{BF} as a general source on Coulomb gas ensembles in dimension two.

	\subsection{Coulomb gas ensembles} \label{cge}

	By a confining potential, we mean a lower semicontinuous function $Q:\C\to\R\cup\{+\infty\}$ which is finite on some set of positive capacity and satisfies the growth condition
	\begin{equation}\label{growth_Q}
	\liminf_{z\to\infty}\frac {Q(z)}{\log|z|^2}>1.
	\end{equation}

	Given a large integer $n$, the Coulomb gas (at $\beta=2$) with respect to $Q$ is a random sample $\{z_j\}_1^n$ from the Gibbs measure on $\C^n$
	\begin{align}\label{gibbs}d\Prob_n(z_1,\ldots,z_n)=\frac 1 {Z_n}\int_{\C^n}\prod_{1\le i< j\le n}|z_i-z_j|^2\prod_{i=1}^n e^{-nQ(z_i)}\, dA(z_i)\end{align}
	where $dA=dxdy/\pi$ denotes the normalized area measure on $\C$ and $Z_n=Z_n(Q)$ is the partition function, i.e., the normalizing constant making $\Prob_n$ a probability measure.

	As is well known, the sample can be identified with eigenvalues of normal random matrices with respect to a certain weighted distribution.

	We now recall some basic potential theoretic objects, such as ``droplet'' and ``coincidence set''.

	Given a compactly supported Borel probability measure $\mu$ on $\C$, the $Q$-weighted logarithmic energy is defined by
	$$I_Q[\mu]=\int_{\C^2}\log \frac 1 {|z-w|}d\mu(z)\, d\mu(w)+\mu(Q),$$
	where we write $\mu(f):=\int f\, d\mu$. As is well-known \cite{ST} there exists a unique compactly supported Borel probability measure $\sigma=\sigma_Q$ which minimizes
	$I_Q[\mu]$ over all such $\mu$.
	
	This measure $\sigma$ is known as Frostman's equilibrium measure; it provides the classical approximation of the Coulomb gas; for instance, the empirical measures $n^{-1}\sum_1^n\delta_{z_j}$ converge in a certain probabilistic sense to $\sigma$ as $n\to\infty$. (See e.g.~\cite{D}.)

	The support $$S=S[Q]:=\supp\sigma$$
	is called the droplet in potential $Q$.
	
	Assuming that $Q$ is $C^2$-smooth in a neighbourhood of $S$, the equilibrium measure is absolutely continuous and has the apriori structure \cite{ST}
	$$d\sigma(z)=\Delta Q(z)\cdot \1_S(z)\, dA(z)$$
	where we write $\Delta=:\d\dbar$ for the normalized Laplacian.
The complex derivatives $\d$ and $\dbar$ are given  by
	$\d=\frac 1 2(\d_x+\frac 1 i\d_y)$ and $\dbar=\frac 1 2(\d_x-\frac 1 i\d_y)$ where $z=x+iy$.

	We next recall a few facts pertaining to the obstacle problem associated with the obstacle $Q(z)$.

	For this, we introduce the obstacle function $\check{Q}(z)$ by
	$$\check{Q}(z)=\sup\{f(z)\,;\, f\in \operatorname{Subh}_{Q,1}\},$$
	where $\Subh_{Q,1}$ denotes the class of subharmonic functions $f:\C\to\R$ such that $f\le Q$ everywhere and $f(z)=2\log|z|+\bigO(1)$ as $z\to\infty$.
	
	It is well-known that $\check{Q}$
	belongs to the class $C^{1,1}(\C)$ of differentiable functions with Lipschitz continuous gradients and that $\check{Q}$ is harmonic in $\C\setminus S$, cf.~\cite{ST}. Also $\check{Q}$ belongs to $\Subh_{Q,1}$.
	
	The coincidence set $S^*$ is defined by
	$$S^*=S^*[Q]:=\{z\in\C\,;\, Q(z)=\check{Q}(z)\}.$$

	We always have the inclusion $S\subset S^*$, cf. \cite{ST}.
	In the sequel, we assume that $Q$ is smooth in a neighbourhood of $S^*$ and
	obeying the strict subharmonicity $\Delta Q>0$ along $\d S^*$. Following \cite{ACC2}, we refer to a connected component of $S^*\setminus S$ as a \textit{spectral outpost}.

	Under these conditions, an outpost has area zero, but might very well have positive capacity. It turns out that outposts of positive capacity affect the fluctuations of the Coulomb gas in an essential way, and by studying this effect, we will find tools to handle droplets with ring-shaped spectral gaps.

	\medskip
	
	\textit{In what follows, the external potential $Q(z)$ is assumed to satisfy the conditions above.}

	\subsubsection*{General notation} $\hat{\C}=\C\cup\{\infty\}$ is the Riemann sphere,
	$\D_e=\{|w|>1\}\cup\{\infty\}$ is the exterior disc, $\T=\{z\in\C\,;\,|z|=1\}$ is the unit circle.

	If $\Gamma$ is a Jordan curve, we define $\Int \Gamma$ and $\Ext \Gamma$ as the bounded and unbounded component of $\hat{\C}\setminus \Gamma$ respectively.

	If $E$ is a bounded subset of $\C$ we write $\d_* E$ for its outer boundary, i.e., the boundary of the unbounded component of $\C\setminus E$.

	By ``smooth'', we always mean $C^\infty$-smooth.

	\subsection{Linear statistics} Let $\{z_j\}_1^n$ be a random sample from  \eqref{gibbs}. For suitable (say continuous and bounded) real-valued functions $f$ we introduce the random variables (or linear statistics)
	$$\trace_n f=\sum_{j=1}^n f(z_j),\qquad \fluct_n f=\sum_{j=1}^n f(z_j)-n\sigma(f).$$

	A basic result in \cite{AM} asserts that if the droplet $S$ is connected, $S=S^*$, and the boundary $\d S$ is everywhere smooth, then the fluctuations have an asymptotic Gaussian distribution. Continuing from previous work in \cite{ACC1,ACC2} we shall here study fluctuations in some cases where $S^*$ is disconnected.

	Our main tool
	is the cumulant generating function of $\fluct_n f$
	\begin{equation}\label{cgf}F_{n,f}(s):=\log \E_n \exp (s\fluct_n f),\qquad (s\in\R).\end{equation}

	The function \eqref{cgf} is closely related to the partition function with respect to perturbed potentials
	\begin{equation}\label{ppot}\tilde{Q}=Q-\frac s n f,\qquad (s\in\R).\end{equation}
	Indeed, if we denote
	by $Z_{n,sf}$ the partition function with respect to \eqref{ppot} then
	\begin{equation}\label{cgfdiff}F_{n,f}'(s)=\frac d {ds}\log Z_{n,sf}-n\sigma(f).\end{equation}

	\subsection{Class of outpost potentials}\label{secop} Let $Q$ be a potential obeying the conditions in Section \ref{cge} and assume that the droplet $S$ is \textit{connected}.
	We also require that $Q$ is real-analytic near the outer boundary
	$$C_1:=\d_* S.$$
	
	By Sakai's regularity theorem, $C_1$ is a finite union of regular, real-analytic arcs and possibly a finite number of singular points,
	see \cite{LM} or \cite[Section 3]{AC}.
	We assume that there are no singular points and thus that $C_1$ is
	an everywhere regular (real-analytic) Jordan curve.

	Let $\phi_1:\Ext(C_1)\to \D_e$ be the unique conformal mapping of the form
	\begin{equation}\label{otf}\phi_1(z)=\frac 1 {r_1} z+a_{0,1}+a_{1,1}\frac 1 z+\cdots\end{equation}
	for $z$ in a neighbourhood of $\infty$, where $r_1=\Cap(C_1)>0$ is the capacity of $C_1$.

	Next fix any number $r_2>r_1$ and define an analytic Jordan curve $C_2$ by
	$$C_2=\{z\in\Ext(C_1)\,;\,|\phi_1(z)|=r_2/r_1\}.$$

	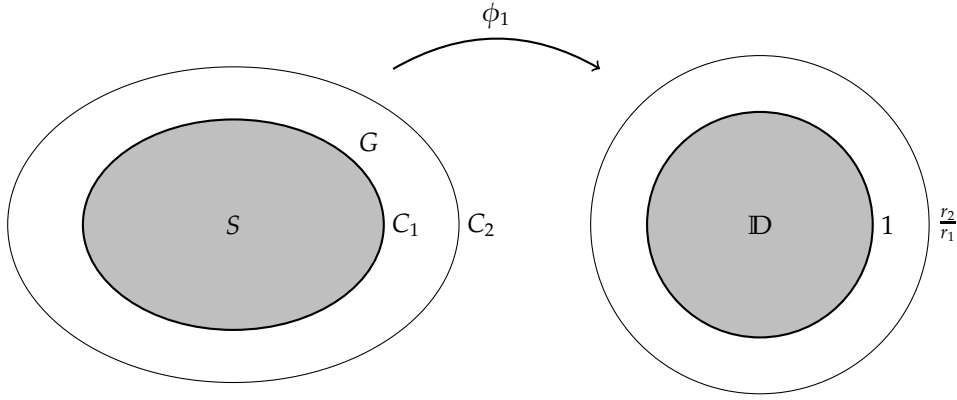
\begin{figure}
		\begin{tikzpicture}
			\filldraw[thick, fill=gray!50]  (0,0) ellipse (2cm and 1.4cm);
			\draw ellipse (3cm and 2.1cm);

			\filldraw[thick, fill=gray!50] (7,0) circle (1.5cm);
			\draw (7,0) circle (2.25cm);

			\tikzset{
				point/.style = {circle, fill, inner sep=1.5pt},
				func/.style = {->, thick}
			}
			\node at (0,0) {$S$};
			\node at (2.3,0) {$C_1$};
			\node at (3.3,0) {$C_2$};
			\node (x1) at (2,2) {};
			\node (x2) at (5,2) {};
			\node at (7,0) {$\mathbb{D}$};
			\node at (8.7,0) {$1$};
			\node at (9.5,0) {$\frac {r_2}{r_1}$};
			\node at (1.8,1.1) {$G$};
			
			\draw[func] (x1) to[bend left] node[above] {$\phi_1$} (x2);

		\end{tikzpicture}
		\caption{Coincidence set $S^*$ for an outpost potential}
		\label{fig1}
	\end{figure}

	It is easily seen that $C_2$ has capacity $r_2$ and that the normalized conformal map $\phi_2:\Ext(C_2)\to \D_e$ satisfies
	$$\phi_2(z)=\frac {r_1}{r_2}\phi_1(z).$$

	We assume that the coincidence set satisfies $S^*=S\cup C_2$ and we write $G$ for the \textit{gap} between $C_1$ and $C_2$, i.e., $G$ is the bounded, ring-shaped domain with boundary $C_1\cup C_2$.
	
	Denote by \begin{equation}\label{varpi}\varpi(z)=\frac{\log|\phi_1(z)|}{\log(r_2/r_1)}\end{equation} the harmonic function on $G$ whose boundary values are $0$ on $C_1$ and $1$ on $C_2$.
	
	By standard theory for the Dirichlet problem \cite{B,GM} there exists a holomorphic function $h_1(z)$ on $G$ and a real constant $c$ such that the harmonic function on $G$
	\begin{equation}\label{diric}H(z):=\re h_1(z)+c \varpi(z)\end{equation}
	has boundary values
	$$H(z)=\tfrac{1}{2}\log (\Delta Q(z)),\qquad (z\in C_1\cup C_2).$$

	We shall require the following \textit{compatibility condition}:
	
	\medskip

	(C) The function $h_1(z)$ extends to a (bounded) holomorphic function on $\Ext C_1$.
	
	\medskip
	
	Assuming condition (C), we make the decomposition \eqref{diric} unique by normalizing $h_1(z)$ to be real at $z=\infty$. We note that this $h_1(z)$ is also the unique normalized holomorphic function on $\Ext C_1$ with $\re h_1(z)=\tfrac{1}{2}\log (\Delta Q(z))$ on $C_1$.
	
	We also introduce
	\begin{equation}\label{h1h2}h_2(z):=h_1(z)+c\end{equation}
	and note that $h_2(z)$ is the unique normalized holomorphic function on $\Ext C_2$ which satisfies $\re h_2(z)=\frac{1}{2}\log(\Delta Q(z))$ on $C_2$.

	\medskip
	
	We refer to a potential satisfying the above assumptions (including the standing assumptions in Section \ref{cge}) as an \textit{outpost potential}; see Figure \ref{fig1}.

	\begin{rem} \label{rem1}
		In our analysis below, it is convenient (but not important) to impose the additional condition that the droplet $S$ be simply connected, i.e., $S$ equals to the closure of $\Int C_1$. This is tacitly assumed in our proofs in the succeeding sections.
	\end{rem}

	\subsubsection{Examples of outpost potentials} \label{ex1} To construct outpost potentials, we may start with an arbitrary
	quasi-harmonic potential \footnote{or Hele-Shaw potential} $Q_1$. This means that
	$$Q_1(z)=\Delta_1\cdot |z|^2+\calH(z)$$
	for all $z$ in a neighbourhood of the droplet $S:=S[Q_1]$ where $\Delta_1>0$ is an arbitrary but fixed number and $\calH(z)$ is harmonic in a neighbourhood of $S$.
	
	We assume in addition that $C_1:=\d_* S$ is a regular Jordan curve.

	Now consider the exterior conformal map $\phi_1:\Ext(C_1)\to\D_e$ of the form \eqref{otf} and define $C_2:=\phi_1^{-1}((\frac {r_2} {r_1})\cdot \T)$, where $r_2>r_1$ is arbitrary but fixed.

	Next consider the obstacle function $\check{Q}_1(z)$. Fix small neighbourhoods $N_1$ of $S$ and $N_2$ of $C_2$ whose closures are disjoint: $\overline{N_1}\cap \overline{N_2}=\emptyset$.

	Finally fix an arbitrary number $\Delta_2>0$ and define $Q(z)$ by the prescription
	$$Q(z):=\begin{cases}Q_1(z)&\text{if}\quad z\in \overline{N_1}\cr
		\check{Q}_1(z)+\dfrac {\Delta_2}{2|\phi_1(z)\phi_1'(z)|^2}\left[|\phi_1(z)|^2-(\frac {r_2}{r_1})^2 \right]^{\,2}&\text{if}\quad  z\in \overline{N_2}\cr
		+\infty &\text{otherwise}
	\end{cases}.$$

	It is readily checked that $Q(z)$ is an admissible potential with droplet $S$ and that $\check{Q}(z)=\check{Q}_1(z)$. Hence the coincidence set $S^*:=S^*[Q]$ satisfies $S^*=S\cup C_2$.

	We also see that $\Delta Q=\Delta_1$ on $C_1$ and by a straightforward computation we have $\Delta Q=\Delta_2$ on $C_2$.
	
	Hence the condition (C) holds with
	$$h_1(z)\equiv \tfrac{1}{2}\log \Delta_1,\qquad c=\tfrac{1}{2}\log\frac {\Delta_2}{\Delta_1}.$$
	
	\begin{rem}
		Complements of droplets of quasi-harmonic potentials form a very rich and well-studied class of sets, known as (generalized) quadrature domains. \footnote{It is standard to call them ``quadrature domains'' even if they are disconnected}
		See e.g. \cite{LM,GTV,S,TBAZW} and references therein.
	\end{rem}

	\subsection{Number of particles near the outpost} Let $Q(z)$ be an outpost potential as in Section \ref{secop}.
	
	We also fix a smooth, bounded, real-valued function $\wfun(z)$ on $\C$ with $\wfun=0$ in a neighbourhood of $S$ and
	$\wfun=1$ on a neighbourhood of $C_2$. \footnote{$\omega(z)$ is not to be confused with the harmonic measure $\varpi(z)$ for $G$}
	
	Write
	$$N_n[C_2]:=\fluct_n \wfun=\sum_{j=1}^n \wfun(z_j),$$
	which is essentially just
	the number of particles that fall near the outpost $C_2$.

	Recall from \cite{ACC2} that a random variable $X$ taking values in $\Z_+=\{0,1,2,\ldots\}$ is said to have a Heine distribution with parameters $(\theta,q)$, where $\theta>0$, $0<q<1$, if
	\begin{equation}\label{pdf}\Prob(\{X=j\})=\frac 1 {(-\theta;q)_\infty}\frac {q^{\frac 1 2 j(j-1)}\theta^j}{(q;q)_j},\qquad (j\in\Z_+),\end{equation}
	where the $q$-Pochhammer symbol $(z;q)_j$ is given by
	\begin{equation}\label{qpoc0}(z;q)_j=\prod_{i=0}^{j-1}(1-zq^i)\end{equation}
	and \begin{equation}\label{qpoc}(z;q)_\infty=\lim_{j\to\infty}(z;q)_j=\prod_{i=0}^\infty(1-zq^i).\end{equation}
	
	The fact that the Heine distribution is a probability distribution follows from the $q$-binomial theorem (see eg \cite{GR} or \cite{ACC2}).

	We write $X\sim \He(\theta,q)$ to denote that $X$ has the probability function \eqref{pdf}.
	
	It is convenient to recall the following elementary fact; we refer to the proof of \cite[Corollary 1.3]{ACC2} for a derivation.

	\begin{lem} \label{vwe} Suppose that $X\sim \He(\theta,q)$. The cumulant generating function $F_X(s)=\log\E\exp(sX)$ is given by
		\begin{equation}\label{cgf_he}F_X(s)=\log[(-\theta e^s;q)_\infty]-\log[(-\theta;q)_\infty].\end{equation}
	\end{lem}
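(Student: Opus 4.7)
The plan is to compute the moment generating function $\E e^{sX}$ directly from the probability mass function \eqref{pdf} and then take its logarithm. Since $X$ takes values in $\Z_+$, I would write
\begin{equation*}
\E e^{sX} \;=\; \sum_{j=0}^\infty e^{sj}\,\Prob(\{X=j\}) \;=\; \frac{1}{(-\theta;q)_\infty}\sum_{j=0}^\infty \frac{q^{\frac12 j(j-1)}(\theta e^s)^j}{(q;q)_j}.
\end{equation*}

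The sum on the right is precisely the left-hand side of Euler's $q$-exponential identity (one form of the $q$-binomial theorem),
\begin{equation*}
\sum_{j=0}^\infty \frac{q^{\frac12 j(j-1)}z^j}{(q;q)_j} \;=\; (-z;q)_\infty,
\end{equation*}
applied with $z=\theta e^s$. (This identity is standard and is recorded, for instance, in \cite{GR}; it is exactly what was invoked in the paragraph following \eqref{qpoc} to justify that \eqref{pdf} defines a probability distribution, which corresponds to the case $s=0$.) Substituting, I obtain
\begin{equation*}
\E e^{sX} \;=\; \frac{(-\theta e^s;q)_\infty}{(-\theta;q)_\infty},
\end{equation*}
and taking logarithms yields \eqref{cgf_he}.

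There is no real obstacle here, so the only thing to double-check is convergence of the series for all $s\in\R$: since $0<q<1$, the factor $q^{j(j-1)/2}$ decays super-exponentially in $j$ and dominates the growth of $(\theta e^s)^j$ for every fixed $s$, so the rearrangements and the identification with $(-\theta e^s;q)_\infty$ are justified. The logarithm in \eqref{cgf_he} is likewise well-defined because $(-\theta e^s;q)_\infty>0$ for $\theta>0$ and $0<q<1$.
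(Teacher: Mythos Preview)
Your proof is correct and is exactly the natural derivation: compute $\E e^{sX}$ from the probability mass function and recognize the resulting series as $(-\theta e^s;q)_\infty$ via Euler's identity. The paper does not give its own proof but simply refers to \cite[Corollary 1.3]{ACC2}; your argument is the standard one that lies behind that reference.
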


	We have the following theorem, which generalizes \cite[Corollary 1.10]{ACC2} beyond the radially symmetric case.

	\begin{thm} \label{main1} Assume an outpost potential as in Section \ref{secop}. Then the random variables $N_n[C_2]$ converge in distribution as $n\to\infty$ to a Heine distribution with parameters $(\theta,q)$ where
		$$\theta=\frac {r_1} {r_2}e^{-c},\qquad q=\big(\frac {r_1} {r_2}\big)^2,$$
		where $r_k=\Cap C_k$ for $k=1,2$ and $c$ is the constant in \eqref{diric} appearing in the solution of the Dirichlet problem with boundary values $\frac 1 2\log \Delta Q$.

		More precisely, if $X\sim \He(\theta,q)$ and if $F_X(s)$ is as in \eqref{cgf_he}, then
		$$F_{n,\omega}(s)=F_X(s)+\bigO\biggl(\frac{\log^{5/2}n}{\sqrt{n}}\biggr),\qquad (n\to\infty),$$
		where $\delta_n$ is given by \eqref{deltan} and the $\bigO$-constant is uniform for $|s|\le \log n$.
	\end{thm}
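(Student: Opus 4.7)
The plan is to express the cumulant generating function in terms of norms of monic orthogonal polynomials, localize the sum to a narrow range of degrees near $n$, and plug in a new sharp asymptotic for these norms in the \emph{bifurcation regime} where the weighted extremal measure puts mass on both $S$ and $C_2$. Since $\wfun\equiv 0$ on a neighbourhood of $S=\supp\sigma$, we have $\sigma(\wfun)=0$. Combined with \eqref{cgfdiff} and the $\beta=2$ product formula $Z_{n,s\wfun}=n!\prod_{k=0}^{n-1}h_{k}(s)$, where $h_k(s)$ is the squared $L^{2}$-norm of the monic degree-$k$ polynomial orthogonal with respect to the weight $e^{-nQ+s\wfun}dA$, this gives
\begin{equation*}
F_{n,\wfun}(s)=\log\frac{Z_{n,s\wfun}}{Z_n}=\sum_{k=0}^{n-1}\log\frac{h_{k}(s)}{h_{k}(0)}.
\end{equation*}
I would then split the sum at a threshold $k_{0}=n-A\log n$ (with $A$ large): the \emph{bulk range} $k<k_{0}$, where the monic polynomial $P_{k,0}$ concentrates near $S$ and $\wfun$ sees essentially none of its mass, and the \emph{bifurcation range} $k_{0}\le k\le n-1$.

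For the bulk range, I expect a standard exponential-decay estimate on $|P_{k,0}|^{2}e^{-nQ}$ restricted to $\supp\wfun$ (based on the strict growth of the obstacle function $\check Q-Q$ off of $S^{*}$ together with the subharmonicity of $|P_k|^2 e^{-nQ+s\omega}$) to yield $\log[h_{k}(s)/h_{k}(0)]=\bigO(e^{-cn})$ uniformly for $|s|\le \log n$, which gives a negligible total contribution. The heart of the proof is then the bifurcation range. Writing $i:=n-1-k$, I would develop sharp asymptotics for $h_{k}(s)$ showing that it splits, up to a lower-order error, into a ``droplet piece'' (mass of $P_{k,s}$ near $S$) and an ``outpost piece'' (mass near $C_{2}$), whose ratio is controlled by $\theta q^{i}e^{s}$. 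Geometrically, the factor $q=(r_{1}/r_{2})^{2}$ comes from the degree-$k$ versus $n$ mismatch under the conformal map $\phi_1$ (and its comparison with $\phi_{2}=(r_{1}/r_{2})\phi_{1}$), while the factor $e^{-c}$ arises from the Dirichlet solution \eqref{diric}--\eqref{h1h2} matching $\tfrac{1}{2}\log\Delta Q$ on both curves. The target identity has the form
\begin{equation*}
\frac{h_{k}(s)}{h_{k}(0)}=\frac{1+\theta q^{i}e^{s}}{1+\theta q^{i}}\,\bigl(1+\bigO\bigl(|s|\log^{3/2}n/\sqrt n\,\bigr)\bigr),\qquad i=n-1-k.
\end{equation*}
Establishing this uniformly in $i$ is the main obstacle: it requires constructing $P_{k,s}$ by matching a classical outer factor on $\Ext C_{1}$ (built from $\phi_{1}$ and $h_{1}$) with an analogous outer factor on $\Ext C_{2}$ across the gap $G$, installing local parametrices at $C_{1}$ and $C_{2}$, and tracking subleading corrections with sufficient precision to secure the stated error rate.

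With this bifurcation formula in hand, summation gives
\begin{equation*}
F_{n,\wfun}(s)=\sum_{i=0}^{\infty}\log\frac{1+\theta q^{i}e^{s}}{1+\theta q^{i}}+\bigO\bigl(\log^{5/2}n/\sqrt n\bigr),
\end{equation*}
where the $\log^{5/2}n/\sqrt n$ comes from summing $\bigO(\log n)$ terms each of size $\bigO(\log^{3/2}n/\sqrt n)$, while extending the sum past $i=A\log n$ costs only $\bigO(q^{A\log n}e^{|s|})$, which is negligible for $|s|\le\log n$ once $A$ is large enough. By Lemma~\ref{vwe}, the leading term is precisely $F_{X}(s)$ for $X\sim\He(\theta,q)$, giving the result. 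A variant of the Ameur--Hedenmalm--Makarov limit Ward identity would serve as an auxiliary tool to confirm the decoupling between droplet and outpost contributions and to control correlations of $P_{k,s}$ across the gap that are not directly visible from the outer-factor asymptotics.
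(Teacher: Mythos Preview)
Your overall strategy coincides with the paper's: write $F_{n,\omega}(s)=\sum_{k=0}^{n-1}\log(h_k(s)/h_k(0))$, discard the low-degree terms, and in the bifurcation window derive a two-term asymptotic for $h_k(s)$ that produces the ratio $(1+\theta q^{i}e^{s})/(1+\theta q^{i})$, then sum and recognize the $q$-Pochhammer expression via Lemma~\ref{vwe}. A few calibrations against the paper are worth noting. First, the paper takes a window of width $\log^2 n$ (not $A\log n$), with per-term error $\bigO(\delta_n)=\bigO(\sqrt{\log n/n})$ from the norm asymptotic; summing $\log^2 n$ such terms gives the stated $\bigO(\log^{5/2}n/\sqrt n)$, whereas your per-term bound $\bigO(|s|\log^{3/2}n/\sqrt n)$ with $|s|\le\log n$ over $\log n$ terms would give $\bigO(\log^{7/2}n/\sqrt n)$. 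Second, the low-degree contribution is $\bigO(n^{-N})$ rather than $\bigO(e^{-cn})$: the relevant decay near $C_2$ comes from $|\phi_1|^{2(j-n)}$, which is only polynomially small in $n$ when $n-j$ is of logarithmic order. Third, for the bifurcation asymptotic the paper does not build Riemann--Hilbert parametrices; it writes down an explicit monic quasi-polynomial $\Phi_{j,n}$ in terms of $\phi_1,q_1,h_1$ (with an equivalent $\phi_2,q_2,h_2$ expression supplied by the identity $\phi_1 e^{q_1/2}=\phi_2 e^{q_2/2}$), computes its norm via the Hedenmalm--Wennman foliation flow, checks approximate orthogonality to lower-degree polynomials, and corrects it to a true polynomial with a H\"ormander $\dbar$-estimate. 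Finally, the limit Ward identity plays no role in this theorem; it is used only later for the Gaussian fluctuations of class-$\calG$ test functions.
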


	\begin{rem} For $X\sim \He(\theta,q)$ the expectation is $\E X=\sum_{j=0}^\infty \frac {\theta q^j}{1+\theta q^j}$; cf.~\cite[Proposition 1.11]{ACC2}.
	\end{rem}

	\begin{rem} In the theory of Hermitian random matrices, outposts in the form of a small $n$-dependent interval which shrinks to a point has been studied in the works \cite{A97,BL,Cl,Mo}. This is also known as the ``birth of a cut''. It is shown in \cite{BL,Cl,Mo} that if one rescales suitably about the outpost, then as $n\to\infty$,
		the process of eigenvalues near the outpost converges to GUE$(k)$ where $k$ is deterministic, and depends on the rate at which the outpost contracts to a point. (If the interval degenerates to a singleton then $k=0$, i.e., there are no eigenvalues near the outpost.)
		
		Clearly, this $1$-dimensional situation is very different from ours, where the outpost is a Jordan curve which attracts a random, Heine distributed number of particles. We do not have a deep explanation, but in some heuristic sense it is not unreasonable that the cases should be different. Indeed, the particles near the outpost are very distant from each other in our two-dimensional setting, but this spatial independence is of course lost in the situation of the birth of a cut.

		We expect (see a conjecture in \cite{ACC2}) that an outpost of capacity zero (in particular, a singleton set) should attract no particles, i.e., the number of particles near the outpost converges to zero in distribution, as $n\to\infty$.
		This hypothesis is supported by computations in the radially symmetric case in \cite[Theorem 1.12]{ACC2}, as well as by results in dimension 1 (already mentioned above).
		
	\end{rem}
	
	\begin{rem} For a potential as in Example \ref{ex1}, i.e., $\Delta Q\equiv \Delta_k$ on $C_k$ for $k=1,2$, the parameter $\theta$ in Theorem \ref{main1} becomes $$\theta=\tfrac {r_1} {r_2}\sqrt{\tfrac {\Delta_1}{\Delta_2}}.$$
		In particular, we recover the result on radially symmetric potentials in \cite[Corollary 1.10]{ACC2} as a special case of Theorem \ref{main1}.
	\end{rem}

	\subsection{Class of spectral gap potentials}  \label{csg} We now
	fix a parameter $\tau_*$ with $0<\tau_*<1$.
	
	Consider a potential $Q(z)$ such that the droplet $S$ consists of two connected components
	$$S=S_{\tau_*}\cup (S\setminus S_{\tau_*})$$ where $S_{\tau_*}=S[Q/\tau_*]$ is the droplet of mass
	$\tau_*$, i.e., $\sigma(S_{\tau_*})=\tau_*$. We also assume that the droplet equals to the coincidence set: $S=S^*$.

	It is convenient (though not essential) to assume that $S_{\tau_*}$ is simply connected. In addition we assume that the outer boundary $\d_* S$ is an everywhere smooth Jordan curve.

	Let us write $G$ for the bounded component of $\C\setminus S$; we refer to $G$ as the ``spectral gap'', cf.~Figure \ref{fig2}.
	
	We assume that $Q(z)$ is real-analytic in a neighbourhood of the boundary of $G$,
	$$\d G=C_1\cup C_2,$$
	where $C_{1}$ and $C_{2}$ denote the inner boundary and the outer boundary of $G$ respectively. We assume throughout that these curves are everywhere regular (real-analytic).

	As before we write
	\begin{align}\phi_{1}(z)=\frac 1 {r_{1}}z+a_{1,0}+a_{1,1}\frac 1 z+\cdots\end{align}
	for the exterior conformal map $\Ext(C_{1})\to\D_e$ where $r_{1}=\Cap C_{1}$.
	(To emphasize the dependency on $\tau_*$ we sometimes write $\phi_{1,\tau_*}$ and $r_{1,\tau_*}$.)

	We also assume that $C_{2}$ is a level curve for $|\phi_{1}|$, i.e., that there is a number $r_2=r_{2,\tau_*}>r_{1}$ such that
	\begin{align}\label{a1}C_{2}=\phi_{1}^{-1}\biggl(\bigl(\frac {r_{2}} {r_{1}}\bigr)\cdot \T\biggr).\end{align}
	
	Writing $\phi_2=\phi_{2,\tau_*}$ for the normalized conformal map $\Ext(C_{2})\to\D_e$, we then have
	$$\phi_{2}(z)=\frac {r_{1}} {r_{2}} \phi_{1}(z)=\frac 1 {r_{2}}z+a_{2,0}+a_{2,1}\frac 1 z+\cdots.$$
	
	We finally require a compatibility condition for the spectral gap $G$, having  the same appearance as for outpost potentials. For ease of reference, we spell it out again:
	
	\medskip
	
	(C) If $H(z)$ is the harmonic function on $G$ with boundary values $\frac{1}{2}\log \Delta Q$ and if \begin{align}\label{dir2}H(z)=\re h_1(z)+c\varpi(z)\end{align} where $h_1$ is holomorphic in $G$ and $\varpi(z)$ is the harmonic function on $G$ with boundary values $\varpi=0$ on $C_1$ and $\varpi=1$ on $C_2$, then $h_1$ extends analytically to $\Ext C_1$. We normalize $h_1$ so that $h_1(\infty)$ is real, and we write $h_2:=h_1+c$.
	
	\begin{figure}
		\begin{tikzpicture}
			\filldraw[thick, fill=gray!50]  (0,0) ellipse (3.8cm and 2.3cm);
			\filldraw[thick, fill=white!50]  (0,0) ellipse (3cm and 2.1cm);
			\filldraw[thick, fill=gray!50]  (0,0) ellipse (2cm and 1.4cm);

			\filldraw[thick, fill=gray!50] (8,0) ellipse (2.6cm and 2.35cm);
			\filldraw[thick, fill=white!50] (8,0) circle (2.25cm);
			\filldraw[thick, fill=gray!50] (8,0) circle (1.5cm);

			\tikzset{
				point/.style = {circle, fill, inner sep=1.5pt},
				func/.style = {->, thick}
			}
			\node at (0,0) {$S_{\tau_*}$};
			\node at (1.7,0) {$C_1$};
			\node at (2.7,0) {$C_2$};
			\node at (1.8,1.1) {$G$};
			\node at (3,-1.9) {$S\setminus S_{\tau_*}$};
			\node (x1) at (2,2) {};
			\node (x2) at (5,2) {};
			\node at (8,0) {$\mathbb{D}$};
			\node at (9.2,0) {$1$};
			\node at (10,0) {$\frac {r_{2}}{r_{1}}$};

			\draw[func] (x1) to[bend left] node[above] {$\phi_1$} (x2);

		\end{tikzpicture}
		\caption{Droplet of a spectral gap potential}
		\label{fig2}
	\end{figure}
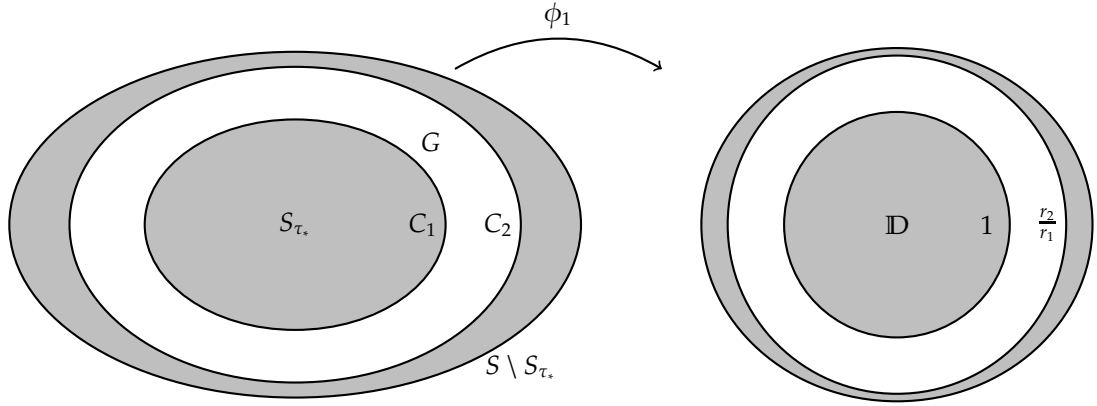

	\medskip

	In what follows, we refer to a potential meeting all the above requirements as a \textit{spectral gap potential}.

	\begin{rem} The main geometrical condition on the spectral gap is that $C_2$ is a level curve for $|\phi_1|$. We do not see a reason why our results should be true for other $C_2$.
	\end{rem}

	\subsubsection{Examples of spectral gap potentials} Let $Q_0(z)$ be any outpost potential meeting the requirements in Section \ref{secop}; for example
	$Q_0(z)$ may be as in Section \ref{ex1}.

	We form a new potential $Q$ by setting $Q=Q_0/\tau$ where the constant $\tau>1$ is close enough to $1$ to ensure that the droplet $S=S[Q]$ has two connected components and the boundary
	$\d S$ consists of everywhere regular Jordan curves. Setting $\tau_*=1/\tau$ it is now easy to see that $S$ is decomposed in the connected components
	$S=S_{\tau_*}\cup (S\setminus S_{\tau_*})$. The conditions \eqref{a1} and (C) are automatic.

	\subsection{Number of particles near the ring $S\setminus S_{\tau_*}$} Fix a smooth test function $\wfun(z)$ with $\wfun=0$ in a neighbourhood of $S_{\tau_*}$ and $\wfun=1$ in a neighbourhood
	of $S\setminus S_{\tau_*}$.

	The random variable
	$$\trace_n\omega=\sum_{j=1}^n \wfun(z_j)$$ can be interpreted as number of particles that fall near the ring-shaped component $S\setminus S_{\tau_*}$ of the droplet.

	Since $\sigma(S_{\tau_*})=\tau_*$ we have $\sigma(S\setminus S_{\tau_*})=1-\tau_*$ and thus
	$$\fluct_n \wfun=\sum_{j=1}^n \wfun(z_j)-n(1-\tau_*).$$

	The cumulant generating function $F_{n,\wfun}(s)$ of $\fluct_n \wfun$ satisfies
	\begin{equation}\label{cgf_ring}F_{n,\wfun}'(s)=\frac d {ds}\log Z_{n,s\wfun}-n(1-\tau_*).\end{equation}

	Define a number $x_n$ in the interval $0\le x_n<1$ to be the fractional part of $n\tau_*$, i.e.,
	$$x_n=\{ n\tau_*\}:=n\tau_*-\lfloor n\tau_*\rfloor.$$

	We have the following result.

	\begin{thm}\label{mth2} Assume a spectral gap potential as in Section \ref{csg}. Then for large $n$, the distribution of the random variable $\fluct_n \wfun$ is well approximated by the difference $X_n^+-X_n^-$ where $X_n^+,X_n^-$ are independent Heine distributed random variables,
		$$X_n^+\sim\He\biggl( e^{-c}\bigl(\frac {r_{1}}{r_{2}}\bigr)^{1+2x_n};\bigl(\frac {r_{1}}{r_{2}}\bigr)^2\biggr),\qquad X_n^-\sim\He\biggl(e^{c}\bigl(\frac {r_{1}}{r_{2}}\bigr)^{1-2x_n};\bigl(\frac {r_{1}}{r_{2}}\bigr)^2\biggr),$$
		where $c$ is the constant in \eqref{dir2}.

		More precisely we have the following convergence of cumulant generating functions
		$$F_{n,\wfun}(s)=F_{X_n^+}(s)+F_{-X_n^-}(s)+\bigO\biggl(\frac{\log^{5/2 }n}{\sqrt{n}}\biggr),\qquad (n\to\infty),$$
		where $\delta_n$ is given in \eqref{deltan} and the $\bigO$-constant is uniform for $|s|\le \log n$.
	\end{thm}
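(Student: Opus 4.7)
The plan is to adapt the orthogonal-polynomial / partition-function strategy that underlies Theorem~\ref{main1}, now in the two-component setting, and to extract \emph{two} independent Heine contributions coming from the two sides of the bifurcation index $n\tau_*$. Integrating \eqref{cgf_ring} in $s$ gives
$$F_{n,\wfun}(s)=\log\frac{Z_{n,s\wfun}}{Z_n}-sn(1-\tau_*),\qquad Z_{n,s\wfun}=n!\prod_{j=0}^{n-1}h_{j,n}(s),$$
where $h_{j,n}(s)$ is the squared $L^2$-norm of the $j$-th monic orthogonal polynomial for the weight $e^{-nQ+s\wfun}\,dA$. The task reduces to controlling $\sum_{j=0}^{n-1}\log\bigl(h_{j,n}(s)/h_{j,n}(0)\bigr)$.

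Because $S=S^*=S_{\tau_*}\cup(S\setminus S_{\tau_*})$ with $\sigma(S_{\tau_*})=\tau_*$, the indices split into three regimes: (i) $j/n$ bounded strictly below $\tau_*$, where the monic orthogonal polynomial concentrates on $S_{\tau_*}$ and contributes nothing to leading order since $\wfun\equiv 0$ there; (ii) $j/n$ bounded strictly above $\tau_*$, where it concentrates on $S\setminus S_{\tau_*}$ and contributes essentially $s$ per index since $\wfun\equiv 1$; and (iii) the bifurcation window $|j-n\tau_*|\lesssim\sqrt{\log n}$. Summing (i) and (ii), together with the ``bulk-$s$'' contribution from the outer half of the window, gives $sn(1-\tau_*)$ plus a smooth linear-statistic remainder controlled by the variant of the Ameur--Hedenmalm--Makarov limit Ward-identity method developed in the paper. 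This exactly cancels the subtracted $sn(1-\tau_*)$ in the definition of $F_{n,\wfun}(s)$.

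The heart of the argument lies in the bifurcation window. There the paper's new norm asymptotic represents $h_{j,n}(s)$ as a sum $A_{j,n}(s)+B_{j,n}(s)$ of two competing saddle-point contributions, one concentrated near $C_1$ and one near $C_2$, built from the conformal maps $\phi_1,\phi_2$ and the Dirichlet solution $H=\re h_1+c\varpi$ from \eqref{dir2}. The $s$-dependence enters only through $B_{j,n}(s)$ as a factor $e^s$, since $\wfun\equiv 1$ near $C_2$ and $\equiv 0$ near $C_1$. Writing $j=\lfloor n\tau_*\rfloor+k$, $x_n=\{n\tau_*\}$, and $q=(r_1/r_2)^2$, the ratio $B_{j,n}(0)/A_{j,n}(0)$ evolves geometrically in $k$ with common ratio $q^{-1}$, modulated by prefactors $e^{\pm c}$ and $q^{\pm x_n}$ prescribed by the Dirichlet data. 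Summing the log-ratio over $k\in\Z$ and separating into two one-sided series (with a shift $k\mapsto -k-1$ on the negative half), the $q$-binomial identity packages each series as a $q$-Pochhammer logarithm $\log\bigl[(-\theta_\pm e^{\pm s};q)_\infty/(-\theta_\pm;q)_\infty\bigr]$ with $\theta_\pm=e^{\mp c}(r_1/r_2)^{1\pm 2x_n}$. Lemma~\ref{vwe} then identifies these as $F_{X_n^+}(s)$ and $F_{-X_n^-}(s)$, respectively.

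The main obstacle is to obtain the bifurcation-regime norm asymptotic with sufficient uniformity in $s\in[-\log n,\log n]$ and $j$ to yield the aggregate error $\bigO(\log^{5/2}n/\sqrt n)$ once summed over the $\sqrt{\log n}$-sized window. One must track not only the leading Gaussian exponents $q^{(k-x_n)^2/2}$ of each $A_{j,n},B_{j,n}$ but also the sub-leading amplitudes, and show that they combine into the clean prefactors $e^{\mp c}(r_1/r_2)^{\pm 2x_n}$ appearing in $\theta_\pm$. A closely related delicate point is the precise placement of the fractional shift $x_n$: any rounding ambiguity in $\lfloor n\tau_*\rfloor$ propagates directly into $\theta_\pm$ and destroys the Heine identification.
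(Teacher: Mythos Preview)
Your overall strategy is exactly the paper's (Section~\ref{numgap}): reduce $F_{n,\wfun}(s)$ to $\sum_j\log(\tilde h_{j,n}(s)/\tilde h_{j,n}(0))-sn(1-\tau_*)$, discard out-of-window indices via the spectral-gap analogue of Lemma~\ref{L-lemma}, insert the two-term asymptotic of Theorem~\ref{insert2} in the window, split at $j=\lfloor n\tau_*\rfloor$, and match each half to a $q$-Pochhammer logarithm via Lemma~\ref{vwe} after using the identity $e^{n(q_{1,\tau_*}-q_{2,\tau_*})(\infty)}=(r_1/r_2)^{2n\tau_*}$ from Lemma~\ref{lura2}. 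Three details in your sketch are off, however, and would not survive a careful write-up.

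First, the bifurcation window must have width of order $\log^2 n$ (or at least $C\log n$), not $\sqrt{\log n}$: outside a window of size $C\log n$ the analogue of Lemma~\ref{L-lemma} (which rests on $|\phi_1|^{2(j-n\tau_*)}\lesssim n^{-N}$ on the far component) fails, and for $|s|\le\log n$ the tail $\sum_{k>M}e^s q^k$ is not small when $M\sim\sqrt{\log n}$. The stated error $\log^{5/2}n/\sqrt n$ is precisely $(\log^2 n)\cdot\delta_n$, i.e.\ (window size)$\times$(per-term error from Theorem~\ref{insert2}). Second, no Ward identity is used or usable here: regimes (i) and (ii) contribute exactly $0$ and $s$ per index, up to $\bigO(n^{-N})$, by the Lemma~\ref{L-lemma} mechanism; there is no ``smooth linear-statistic remainder'', and the limit Ward identity in the paper applies only to class-$\calG$ test functions (Theorems~\ref{m7} and~\ref{main3}), not to $\wfun\in\calH$. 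Third, there are no ``Gaussian exponents $q^{(k-x_n)^2/2}$'' in $A_{j,n},B_{j,n}$: by Theorem~\ref{insert2} each $h_{j,n}^k$ equals $r_{k,\tau_*}^{2j+1}$ times a factor independent of $j$, so the ratio $h_{j,n}^1/h_{j,n}^0=e^{-c}(r_2/r_1)^{2(j-n\tau_*)+1}$ is purely geometric in $k$, which is exactly what makes the $q$-Pochhammer identification immediate---no subleading-amplitude tracking is required beyond the $(1+\bigO(\delta_n))$ already present in Theorem~\ref{insert2}.
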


	\begin{rem} The difference $Y_n:=X_n^+-X_n^-$ can be expressed in terms of the discrete normal distribution. Indeed, we recall from \cite[Remark after Corollary 1.3]{ACC2}
		that
		\begin{equation}\label{dN}\Prob(\{Y_n=j\})=\frac 1 {Z_{\theta,q}} \theta_n^jq^{\frac 1 2 j(j-1)},\qquad (j\in\Z)\end{equation}
		where
		$$\theta_n=e^{-c}\bigl(\frac {r_1}{r_2}\bigr)^{1+2x_n},\qquad q=\bigl(\frac {r_1}{r_2}\bigr)^2,$$
		while $Z_{\theta,q}$ is the normalizing constant.
		A random variable having the probability function \eqref{dN} is said to have a discrete normal distribution.
	\end{rem}

	\begin{rem}
		The normal and the discrete normal distributions are characterized as the probability distributions on $\R$ and $\Z$ respectively which maximizes the entropy among all probability distributions with a given expectation and variance \cite{K}.
		
		However, the Heine distribution does not maximize the entropy among corresponding
		probability distributions on $\Z_+$.  Further the Heine distribution is not infinitely divisible.
		
		In a way, these properties of the Heine distribution reinforce the principle
		that convergence of fluctuations is very different from the convergence appearing in central limit theorems, e.g.~in \cite[Section 4]{GK}.
	\end{rem}

	\subsection{A convergence result for smooth linear statistics} We continue to consider a spectral gap potential $Q(z)$
	as in Section \ref{csg}.
	
	Write $C^\infty_b(\C)$ for the linear space of smooth, bounded, real-valued functions on $\C$.
	
	Consider the subspaces $\calG$ and $\calH$ where
	\begin{enumerate}[label=(\roman*)]
		\item \label{ett} $\calG$ consists of all functions $f(z)$ of the type
		$$f=f_0(z)+\re g(z),$$
		$g(z)\in C^\infty_b(\C)$ is holomorphic (and bounded) on each connected component of $\hat{\C}\setminus S$; the function $f_0(z)\in C^\infty_b(\C)$ satisfies
		$$f_0=0\qquad \text{along}\qquad \d S.$$
		\item \label{tva} $\calH$ consists of all constant multiples $\para\wfun(z)$ where $\para\in\R$ and $\wfun(z)$ is a fixed smooth, bounded function with $\wfun=0$ in a neighbourhood of $S_{\tau_*}$ and $\wfun=1$ in a neighbourhood of $S\setminus S_{\tau_*}$.
	\end{enumerate}

	We have the following lemma; a proof is given in Section \ref{decomp}.

	\begin{lem} \label{de1} Each $f\in C^\infty_b(\C)$ can be represented as $f=f_1+f_2$ where $f_1\in\calG$ and $f_2\in\calH$.
	\end{lem}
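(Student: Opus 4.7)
The plan is to find a real scalar $\lambda$ and a function $g \in C^\infty_b(\C)$ that is holomorphic on each component of $\hat{\C}\setminus S$, such that $f - \lambda\wfun - \re g$ vanishes along $\d S$. Once this is achieved, setting $f_2 := \lambda\wfun \in \calH$ and $f_0 := f - \lambda\wfun - \re g$, I obtain $f_1 := f_0 + \re g = f - \lambda\wfun \in \calG$, and the required decomposition $f = f_1 + f_2$ holds. Note that $\hat{\C}\setminus S$ has exactly two components: the spectral gap $G$ and the unbounded component $\Omega := \Ext \d_*S$. Since $\wfun\equiv 0$ on $C_1$ and $\wfun\equiv 1$ on $C_2\cup\d_*S$, the requirement $\re g = f - \lambda\wfun$ on $\d S$ reads $\re g = f$ on $C_1$ and $\re g = f - \lambda$ on $C_2 \cup \d_*S$.

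Next I solve a Dirichlet problem in each component of $\hat{\C}\setminus S$. On the simply connected domain $\Omega \subset \hat{\C}$, the bounded harmonic extension of $f - \lambda$ from $\d_*S$ is smooth up to the boundary (since $f$ and $\d_*S$ are smooth) and has a single-valued harmonic conjugate, yielding a bounded holomorphic $g_\Omega \in C^\infty(\overline\Omega)$ regardless of $\lambda$. On the ring $G$, let $u_\lambda$ be the harmonic extension of the boundary data, and decompose $u_\lambda = u_0 - \lambda\varpi$, where $u_0$ solves the Dirichlet problem with boundary values $f$ on all of $\d G$. The function $u_\lambda$ admits a single-valued holomorphic lift on $G$ precisely when its conjugate period around a loop $\gamma \subset G$ separating $C_1$ from $C_2$ vanishes. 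Writing this period as $P(u_\lambda) = P(u_0) - \lambda P(\varpi)$ and observing that $P(\varpi) = 2\pi/\log(r_2/r_1) \neq 0$ (since the multi-valued conjugate of $\varpi$ on $G$ is $\arg(\phi_1)/\log(r_2/r_1)$), there is a unique $\lambda = P(u_0)/P(\varpi)$ killing the period. For this $\lambda$ I obtain a bounded holomorphic $g_G \in C^\infty(\overline G)$ with $\re g_G = u_\lambda$.

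Finally I patch $g_G$ and $g_\Omega$ into a global $g \in C^\infty_b(\C)$ by choosing any smooth bounded extension across $S$ via a partition of unity; this is possible because both pieces are smooth on the closure of their respective domains. The resulting $f_0 := f - \lambda\wfun - \re g$ lies in $C^\infty_b(\C)$ and vanishes along $\d S$ by construction, giving $f = (f_0 + \re g) + \lambda\wfun \in \calG + \calH$ as required. The only substantive step is the period balancing on the ring $G$; it reduces to the nondegeneracy $P(\varpi) \neq 0$, and the remaining regularity and gluing are routine Dirichlet theory.
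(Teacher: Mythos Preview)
Your proof is correct and follows essentially the same route as the paper: solve the Dirichlet problem on the ring $G$, write the solution as $\re g_G + \lambda\varpi$ (your period-balancing argument is exactly what fixes $\lambda$ and makes $g_G$ single-valued), extend smoothly, and read off $f_0$. The only organizational difference is that the paper first localizes $f$ via a partition of unity and invokes \cite[Lemma 5.1]{AM} for the part supported near the unbounded component $U=\Omega$, whereas you handle $\Omega$ directly by solving the Dirichlet problem there (simply connectedness guarantees a single-valued conjugate for any $\lambda$). Your version is a bit more self-contained and makes the period condition on $G$ explicit; otherwise the two arguments are the same.
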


	The representation in Lemma \ref{de1} is not unique but, as we will see, it reduces the study of fluctuations to linear statistics of classes $\calG$ and $\calH$. To describe the fluctuations we need some preparation.
	
	First, for a continuous function $f(z)$ we define its Poisson modification $f^S(z)$. This is the continuous function on $\hat{\C}$ defined by
	\begin{enumerate}
		\item $f^S=f$ on $S$,
		\item $f^S$ is harmonic (and bounded) on each connected component of $\hat{\C}\setminus S$.
	\end{enumerate}
	
	Note that $f^S$ typically has a ``jump'' in the first derivatives across the boundary $\d S$.
	
	For $f_1=f_0+\re g$ as in \ref{ett}, we have
	\begin{equation}f_1^S(z):=\re g(z)+f_0(z)\cdot \1_S(z).\end{equation}
	
	If $G$ is the gap domain bounded by $C_1$ and $C_2$, then $f_2=\para\omega$ in \ref{tva} has
	\begin{equation}f_2^S(z)=\para\varpi(z),\qquad  (z\in G),
	\end{equation}
	while $f_2^S= 0$ on $\Int C_1$ and $=\para$ on $\Ext C_2$; $\varpi$ is the harmonic measure from \eqref{varpi}.

	\smallskip
	
	We shall find that fluctuations of linear statistics in the class $\calG$ have an asymptotic Gaussian distribution, generalizing the Gaussian field convergence in \cite{AM}. It is expedient to recall some details concerning this fundamental result.

	In the following we write $L$ for a fixed, smooth function supported in a small neighbourhood of $S$ with
	\begin{equation}\label{Ldef}L=\log\Delta Q\qquad\text{in a neighbourhood of } S.\end{equation}

	We next define the Neumann jump operator $\calN^S$ associated with the droplet $S$.
	
	\smallskip

	Given a smooth function $h$
	and a point $p\in \d S$ we write $\d_N h(p)$ for the directional derivative in the direction $N$ which is normal to $\d S$ at $p$ and points out from $S$.

	We define a function $h^S$ on $\C\setminus S$ as the bounded solution to the Dirichlet problem with boundary values $h^S=h$ on $\d S$.
	We note that $h^S$ restricted to $\C\setminus S$ is smooth up to the boundary (see \cite{B,GM} for details).

	The Neumann jump operator is now defined at the point $p$ by
	$$\calN^S(h)=-\d_N (h-h^S),$$
	where the normal derivative is taken in the exterior of the boundary.

	As before, the cumulant generating function is written
	$$F_{n,f}(t)=\log \E_n \exp(t\fluct_n f),\qquad (t\in \R).$$

	We have the following theorem, which we prove (in Section \ref{brief}) using a variant of the technique of limit Ward identities from \cite{AM} and \cite[Section 6]{ACC1}.

	\begin{thm} \label{m7} Let $f\in\calG$. Then there is a number $\beta>0$ such that
		$$F_{n,f}(t)=te_f+\frac {t^2} 2 v_f+\bigO(n^{-\beta})$$
		where, with $L(z)$ as in \eqref{Ldef},
		\begin{align}e_f&=\frac 1 2\int_Sf\cdot \Delta\log\Delta Q\, dA+\frac 1 {8\pi}\oint_{\d S}\d_n f\,ds+\frac 1 {8\pi}\oint_{\d S} f\cdot\calN^S(L)\,ds,\label{ef}\\
			v_f&=\frac 1 4 \int_\C |\nabla (f^S)|^2\, dA.\label{vf}
		\end{align}
		In particular $\fluct_n f$ converges in distribution as $n\to\infty$ to the normal $N(e_f,v_f)$.
	\end{thm}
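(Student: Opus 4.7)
The strategy is to adapt the limit Ward identity method of \cite{AM} and \cite[Section 6]{ACC1} to the spectral gap setting, exploiting the defining structural feature of the class $\calG$.

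First I would differentiate the cumulant generating function. Since $e^{F_{n,f}(t)}=\E_n e^{t\fluct_n f}$, the tilted probability measure $d\Prob_n\cdot e^{t\fluct_n f}/\E_n e^{t\fluct_n f}$ is precisely the Coulomb gas $\Prob_{n,t}$ associated with the perturbed potential $\tilde Q=Q-(t/n)f$, and consequently
\begin{equation*}
F_{n,f}'(t)=\int_\C f(z)\,R_{n,t}(z)\,dA(z)-n\sigma(f),
\end{equation*}
where $R_{n,t}$ is the one-point intensity of $\Prob_{n,t}$. Once the right-hand side is expanded to precision $o(1)$ uniformly for $t$ in a bounded interval, integration from $0$ to $t$ with the initial condition $F_{n,f}(0)=0$ gives the theorem; the normal convergence then follows from L\'evy continuity.

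The crucial structural point is that perturbations from $\calG$ leave the potential-theoretic picture essentially frozen. Since $f_0$ vanishes on $\d S$ and $\re g$ is harmonic on each component of $\hat\C\setminus S$, a standard stability argument for the obstacle problem yields $S[\tilde Q]=S$ and $S^*[\tilde Q]=S^*$ when $|t|/n$ is small, up to normal corrections of order $1/n$. In particular the spectral gap $G$ and the outer curve $C_2$ survive the perturbation, and $\sigma_{\tilde Q}=\sigma+O(1/n)$ in total variation. This is why $\calG$-perturbations produce a pure Gaussian limit: there is no bifurcation and no oscillatory discrete component, in contrast with $\calH$ treated in Theorem \ref{mth2}. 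With the droplet frozen, the Ward identity can be applied to $R_{n,t}$. Writing $R_{n,t}=n\Delta\tilde Q\cdot\1_S+r_{n,t}$ and testing against $f$, the bulk piece produces $\frac{1}{2}\int_S f\,\Delta\log\Delta Q\,dA$, integration by parts in a collar of $\d S$ gives $\frac{1}{8\pi}\oint_{\d S}\d_n f\,ds$, and the Neumann jump of $L=\log\Delta Q$ arises when matching the interior and exterior expansions of $\log R_{n,t}$ across $\d S$; together these identify the leading constant as $e_f$. A further differentiation in $t$ (equivalently, control of the truncated two-point function via the Ward identity) yields $v_f$ as the Dirichlet energy of the Poisson modification $f^S$, which for $f=f_0+\re g$ equals $f_0\cdot\1_S+\re g$ and whose energy $\frac{1}{4}\int_\C|\nabla f^S|^2\,dA$ is identified via Green's identity.

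The main obstacle will be the uniform error estimate $O(n^{-\beta})$ in the Ward identity in the presence of the ring-shaped gap. In the AHM setting the droplet is connected and $S=S^*$, which permits a single exterior/interior split; here one must verify that the contributions from the two components $C_1$ and $C_2$ of $\d G$ do not conspire to produce a discrete or oscillatory leading term (of the type driving Theorem \ref{mth2}) for $f\in\calG$. The vanishing condition $f_0|_{\d S}=0$, together with the compatibility hypothesis (C) which forces $h_1$ to extend holomorphically across $G$, is exactly what decouples $\calG$ from $\calH$; making this decoupling quantitative via the bifurcation-regime asymptotics for the norms of the monic orthogonal polynomials (announced in the abstract) is the technical heart of the argument.
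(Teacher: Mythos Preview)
Your high-level strategy---differentiate the CGF, apply the limit Ward identity, integrate back---matches the paper's, but you misidentify the technical core and invoke machinery that the paper does not use for this theorem.

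First, the bifurcation-regime asymptotics for monic orthogonal polynomials play \emph{no role} in the proof of Theorem~\ref{m7}. Those asymptotics (Theorems~\ref{insert} and~\ref{insert2}) drive the Heine-distribution results for test functions in $\calH$, but for $f\in\calG$ the paper works entirely through the limit Ward identity of \cite[Proposition~1.2]{ACC1}, quoted here as Lemma~\ref{skog}. Likewise, the compatibility condition~(C) is not invoked anywhere in Section~\ref{brief}; it enters only in the orthogonal-polynomial analysis. So your closing paragraph, naming these as ``the technical heart of the argument'', points in the wrong direction.

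Second, you are missing the actual key step: the construction of an explicit vector field $v$ that inverts the Ward operator, i.e.\ satisfies
\[
v\,\Delta Q+\dbar v\cdot \d(Q-\check{Q})=\dbar p \qquad\text{on }\C\setminus\d S,
\]
for $p=f_++f_0$. The paper takes $v=v_++v_0$ with $v_+=\dbar f_+/\Delta Q$ and, crucially,
\[
v_0=\frac{\dbar f_0}{\Delta Q}\,\1_S+\frac{f_0}{\d(Q-\check{Q})}\,\1_{\C\setminus S}.
\]
The defining properties of $\calG$---that $f_0=0$ on $\d S$ and that $f_+$ is holomorphic off $S$---are exactly what make this $v$ bounded, Lipschitz across $\d S$, and well-defined in the exterior (where $\d(Q-\check Q)$ vanishes only on $\d S^*$), so that Lemma~\ref{skog} applies. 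The Ward identity then yields $\nu_n(f)$ and $\tilde\nu_n(f)-\nu_n(f)$ by direct computation; the $\bigO(n^{-\beta})$ error is inherited from Lemma~\ref{skog}, and no droplet-stability analysis for the perturbed obstacle problem is needed or used. Your proposed route through $R_{n,t}=n\Delta\tilde Q\cdot\1_S+r_{n,t}$ and a boundary-collar expansion is not what the paper does and would require you to supply the edge estimates yourself rather than import them from \cite{ACC1}.
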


	We have the following main result.

	\begin{thm} \label{main3} Assume a spectral gap potential as in Section \ref{csg}. Let $f=f_1+\para\wfun$ where $f_1\in\calG$, and let $Y\sim N(e_{f_1},v_{f_1})$ be a Gaussian random variable with cumulant generating function
		$$F_Y(t)=te_{f_1}+\frac {t^2} 2 v_{f_1}.$$
		Also let $X_n^+$, $X_n^-$ be Heine distributed random variables with parameters as in Theorem \ref{mth2}. Then we have the convergence of cumulant generating functions
		$$F_{n,f}(t)=F_{\para X_n^+}(t)+F_{-\para X_n^-}(t)+F_Y(t)+\bigO(n^{-\beta}),\qquad (n\to\infty),$$
		where $\beta>0$ is a constant and the convergence is uniform for $|t|\le \log n$.
	\end{thm}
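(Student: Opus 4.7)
By Lemma \ref{de1}, write $f=f_{1}+\lambda\omega$ with $f_{1}\in\calG$. From \eqref{cgfdiff} we have $F_{n,f}(t)=\log[Z_{n}(Q-(t/n)f)/Z_{n}(Q)]-tn\sigma(f)$, so the theorem reduces to an asymptotic expansion of the partition function of a perturbed potential. The plan is to decouple the Gaussian contribution from $f_{1}$ and the Heine/discrete-Gaussian contribution from $\omega$ via a two-parameter deformation. Set
\[
Q_{s,u}:=Q-\frac{s}{n}f_{1}-\frac{u\lambda}{n}\omega,\qquad \Phi_{n}(s,u):=\log\frac{Z_{n}(Q_{s,u})}{Z_{n}(Q)},
\]
so that $F_{n,f}(t)=\Phi_{n}(t,t)-tn\sigma(f)$, and telescope along the path $(0,0)\to(t,0)\to(t,t)$.

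\textbf{Two legs of the telescope.} The first leg is $\Phi_{n}(t,0)=F_{n,f_{1}}(t)+tn\sigma(f_{1})$, and since $f_{1}\in\calG$, Theorem \ref{m7} immediately gives $\Phi_{n}(t,0)=tn\sigma(f_{1})+F_{Y}(t)+\bigO(n^{-\beta})$, uniformly for $|t|\le\log n$. For the second leg, with the auxiliary potential $\tilde{Q}_{t}:=Q-(t/n)f_{1}$,
\[
\Phi_{n}(t,t)-\Phi_{n}(t,0)=F^{(\tilde{Q}_{t})}_{n,\lambda\omega}(t)+tn\lambda\,\sigma_{\tilde{Q}_{t}}(\omega),
\]
where $F^{(P)}_{n,g}$ denotes the cgf of $\fluct_{n}g$ in the ensemble with potential $P$. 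One must then extend Theorem \ref{mth2} to the $n$-dependent, slowly-varying potential $\tilde{Q}_{t}$: since $\tilde{Q}_{t}-Q=\bigO(1/n)$ in $C^{\infty}$, the geometric invariants $(r_{1},r_{2},c)$ shift by $\bigO(1/n)$, and the only potentially non-negligible change is in the mass fraction, where $n\tau_{*}^{(\tilde{Q}_{t})}-n\tau_{*}$ may be of order $t$.

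\textbf{Main obstacle: cancellation.} The technical crux is to establish
\[
F^{(\tilde{Q}_{t})}_{n,\lambda\omega}(t)+tn\lambda\bigl[\sigma_{\tilde{Q}_{t}}(\omega)-\sigma(\omega)\bigr]=F_{\lambda X_{n}^{+}}(t)+F_{-\lambda X_{n}^{-}}(t)+\bigO(n^{-\beta}),
\]
in which the Heine random variables $X_{n}^{\pm}$ carry the \emph{unperturbed} parameters of $Q$. The defining properties of $\calG$---namely $f_{0}=0$ on $\d S$ and holomorphicity of $g$ off each component of $\hat{\C}\setminus S$---should ensure, through a Green's identity across $\d S$ together with the Dirichlet characterization \eqref{dir2} of $c$, that the apparent shift in the Heine parameters produced by the $f_{1}$-perturbation is exactly offset by the cross term $tn\lambda[\sigma_{\tilde{Q}_{t}}(\omega)-\sigma(\omega)]$. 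Making this matching rigorous requires a perturbative refinement of the bifurcation-regime norm asymptotics for orthogonal polynomials (the new ingredient highlighted in the abstract) applied uniformly in the family $\tilde{Q}_{t}$, $|t|\le\log n$; this is the main difficulty.

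\textbf{Conclusion.} Summing the two legs and invoking $\sigma(f)=\sigma(f_{1})+\lambda\sigma(\omega)$ cancels the linear-in-$t$ normalization terms and yields
\[
F_{n,f}(t)=F_{Y}(t)+F_{\lambda X_{n}^{+}}(t)+F_{-\lambda X_{n}^{-}}(t)+\bigO(n^{-\beta}),
\]
uniformly for $|t|\le\log n$, which is the claim. An alternative route, parallel to Theorem \ref{m7}, is to derive the expansion of $F'_{n,f}(t)$ directly via a variant of the limit Ward identity of \cite{AM}, identifying the right-hand side as the sum of the derivatives of $F_{Y}$, $F_{\lambda X_{n}^{+}}$ and $F_{-\lambda X_{n}^{-}}$, and then integrating in $t$.
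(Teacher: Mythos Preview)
Your two-parameter deformation is the right idea, but you traverse it in the wrong order. You go $(0,0)\to(t,0)\to(t,t)$: first perturb by $f_1$, then by $\lambda\omega$. This forces you to redo Theorem~\ref{mth2} for the $n$-dependent potential $\tilde{Q}_t=Q-(t/n)f_1$, and you correctly flag the resulting shift in $n\tau_*$ (hence in $x_n$) as the ``main obstacle''. You do not resolve it; the appeal to a cancellation via Green's identity and the Dirichlet characterization of $c$ is speculative, and making it rigorous would indeed require a uniform-in-$t$ perturbative analysis of the bifurcation-regime orthogonal polynomial asymptotics---substantial extra work.

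The paper avoids all of this by taking the \emph{other} path $(0,0)\to(0,t)\to(t,t)$. First apply Theorem~\ref{mth2} directly to the unperturbed $Q$, giving $F_n(0,t)=F_{\lambda X_n^+}(t)+F_{-\lambda X_n^-}(t)+\bigO(\delta_n)$ with the correct (unperturbed) Heine parameters. Then compute the second leg by integrating
\[
\frac{\partial F_n}{\partial s}(s,t)=\tilde{\E}_{n,h}(\fluct_n f_1),\qquad h=sf_1+t\lambda\omega,
\]
using the limit Ward identity (Lemma~\ref{skog}) with this $h$. The point you missed is that $\omega$ is \emph{locally constant on $S$}, so $\partial\omega=\dbar\omega=0$ on $S$, and the cross term $\sigma(v\cdot\partial h)$ in the Ward identity reduces to $s\,\sigma(v\cdot\partial f_1)$. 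Hence, by the same computation as in the proof of Theorem~\ref{m7}, one gets $\tilde{\E}_{n,h}(\fluct_n f_1)=e_{f_1}+s\,v_{f_1}+\bigO(n^{-\beta})$ with \emph{no $t$-dependence}. Integrating in $s$ from $0$ to $t$ yields $F_Y(t)$, and the proof is complete. No perturbative refinement of the orthogonal polynomial asymptotics is needed: the decoupling falls out for free from the fact that $\nabla\omega\equiv 0$ on the support of $\sigma$. Reversing your path order turns a hard problem into a two-line observation.
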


	In particular, for $f\in \calG$ the fluctuations $\fluct_n f$ converge to a Gaussian field, and for $f\in\calH$ $\fluct_n f$ has an oscillatory discrete Gaussian distribution.

	Note that Theorem \ref{main3} does not follow from Theorem \ref{main1} and Theorem \ref{mth2}, since it claims an additional asymptotic independence among $X_n^+,X_n^-,Y$. However, with an additional twist, our proofs can be adapted to accommodate the required independence (cf.~Section \ref{Pmain3}).
	
	\subsection{Orthogonal polynomials in the bifurcation regime} Our analysis depends on an asymptotic, leading order formula for the monic orthogonal polynomials $\tilde{p}_{j,n}(z)=z^j+a_{n,j-1}z^{j-1}+\cdots +a_{n,0}$ with respect to the norm $\int_\C|f|^2 e^{-n\tilde{Q}}\, dA$. (See \eqref{ppot}.)
	
	When $j$ is far from the critical index $n\tau_*$ the weighted polynomial $\tilde{p}_{j,n}(z)e^{-n\tilde{Q}(z)/2}$ is essentially supported near the outer boundary of the $\tau$-droplet, where $\tau=j/n$; its leading order asymptotic behaviour is then known due to \cite{HW}, see also \cite[Section 5]{AC}. (Here, we put $\tau_*=1$ if $Q$ is an outpost potential.)

	The situation is very different if $j=n\tau_*+\bigO(\log^2 n)$.
	
	In this situation which we call the \textit{bifurcation regime} the weighted polynomial $\tilde{p}_{j,n}(z)e^{-n\tilde{Q}(z)/2}$ has two peaks along disjoint Jordan curves close to $C_1$ and $C_2$, respectively. The asymptotics is described in Theorem \ref{gondor}
	for $\tau_*=1$ and Theorem \ref{gondor2} for $\tau_*<1$.
	
	Using those results it is possible to work out asymptotics for $1$- and $2$-point correlation functions, in a way somewhat similar to \cite{ACC1,ACC0}. However, such an investigation would carry us too far afield.

\begin{rem} The bifurcation regime differs from the ``soft-edge regime'' studied in sources such as \cite{HW,ACC1}, which accommodates the well-known $\erfc$-asymptotic. The bifurcation regime is associated with $n$-dependent oscillations as in e.g.~\cite{ACC1}, and involves much fewer indices $j$ near the critical $n\tau_*$, i.e., just $\bigO(\log^2 n)$ indices. By contrast, the soft edge asymptotics studied in \cite{HW} involves $\bigO(\sqrt{n\log n})$ terms, of which $\bigO(\log^2 n)$ constitutes a negligible share. See also \cite{ACC0} for other sorts of edge conditions.
\end{rem}

	\subsection{A surmise on the large-$n$ expansion of the free energy} It is believed that there exists a large $n$-expansion of the free energy $\log Z_n$ of the form
	\begin{equation}\label{expo}\log Z_n=C_0n^2+C_1n\log n+C_2n+C_3\log n+C_4+o(1),\qquad
		(n\to\infty)\end{equation}
	where $C_0,\ldots,C_4$ are certain geometric functionals (e.g. $C_0=-I_Q[\sigma]$).

	In a remarkable paper \cite{ZW}, Zarbrodin and Wiegmann conjecture (``up to constants'') a formula for the constant term $C_4$, for potentials $Q$ such that the droplet $S$ is connected. Let us denote by $F(Q,S)$ this term.  The formula for $F(Q,S)$ is given in terms of certain spectral determinants and is called the Polyakov-Alvarez formula.
	
	Recently, in \cite{BKS}, the Polyakov-Alvarez formula was verified (with the correct constants) for radially symmetric and globally subharmonic potentials.
	
	The global subharmonicity condition on the potential is removed in \cite{ACC2} where disconnected droplets (finite union of concentric annuli) are studied in great details. To be precise, we assume in \cite{ACC2} that
	$$S=\bigcup_{\nu=0}^N S_\nu,\qquad S_\nu=\{a_\nu\le |z|\le b_\nu\}$$
	where $0\le a_0<b_0<\cdots<a_N<b_N$.
	
	The term $C_4=C_4(n)$ in \eqref{expo} varies with $n$, but is bounded as $n\to\infty$; in \cite{ACC2} we find that
	\begin{equation}\label{pred}C_4=F(Q,S)+\mathcal{G}_n,\end{equation}
	where $F(Q,S)=\sum_{\nu=0}^NF(Q,S_\nu)$ is the straightforward sum of ``Polyakov-Alvarez terms'' of the connected components of the droplet.
	
	The term $\mathcal{G}_n$ in \eqref{pred} measures the displacements of particles between the connected components, and
	is given in terms of $q$-Pochhammer symbols:
	\begin{align*}\mathcal{G}_n&=\sum_{\nu=0}^{N-1}(x_\nu\log\mu_\nu-x_\nu^2\log\rho_\nu)+\sum_{\nu=0}^{N-1}\log[(-\rho_\nu\mu_\nu;\rho_\nu^2)_\infty]+\sum_{\nu=0}^{N-1}\log[(-\rho_\nu \mu_\nu^{-1};\rho_\nu^2)_\infty].
	\end{align*}
	Here
	$$\rho_\nu=\frac {b_\nu} {a_{\nu+1}},\qquad \mu_\nu=\sqrt{\frac {\Delta Q(b_\nu)}{\Delta Q(a_{\nu+1})}}\,\rho_\nu^{2x_\nu},$$
	and
	\begin{equation}\label{xnu}x_\nu=x_{\nu,n}=\tau_*^{(\nu)} n-\lfloor \tau_*^{(\nu)} n\rfloor\end{equation}
	where $\tau_*^{(\nu)}=\sigma(\{|z|\le b_\nu\})$ for $\nu=0,\ldots,N-1$.
	
	(Alternatively, one can express $\calG_n$ in terms of the Jacobi theta function, instead of the $q$-Pochhammer symbols. See the last remark in \cite[Section 1.2]{ACC2}.)
	
	We now observe that the parameters $x_\nu,\rho_\nu,\mu_\nu$ all have natural intrinsic meanings, beyond the radially symmetric case.
	
	Indeed, let $G_\nu=\{b_\nu<|z|<a_{\nu+1}\}$ be the $\nu$th spectral gap where $0\le \nu\le N-1$.
	Write $C_{1,\nu}=\{|z|=b_\nu\}$, $C_{2,\nu}=\{|z|=a_{\nu+1}\}$ for the inner and outer boundary component of $G_\nu$ respectively.
	
	Then:
	\begin{itemize}
		\item $\rho_\nu$ is the conformal type of $G_\nu$, i.e., $G_\nu$ is biholomorphic to the annulus $\{\rho_\nu<|z|<1\}$,
		\item $\mu_\nu= e^{-c_\nu}\rho_\nu^{2x_\nu}$ where $c_\nu$ is such that the solution $H_\nu(z)$ to the Dirichlet problem on $G_\nu$ with boundary values $\frac{1}{2}\log \Delta Q$ can be written
		$$H_\nu =\re g_\nu + c_\nu\varpi_\nu,$$
		for an analytic function $g_{\nu}$ on $G_\nu$ and where $\varpi_\nu$ is the harmonic function on $G_\nu$ which is equal to zero on the inner boundary and one on the outer boundary of $G_\nu$,
		
		\item $x_\nu$ is given by \eqref{xnu} where
		$\tau_*^{(\nu)}=\sum_{j=0}^\nu\sigma(S_j)$.
	\end{itemize}
	
	With these conventions, it seems plausible that the prediction \eqref{pred} should hold for droplets $S=\cup_0^N S_\nu$ where the components $S_\nu$ are separated by suitable ring-shaped gaps $G_\nu$ (to be precise: each gap $G_\nu$ should obey the conditions in Section \ref{csg}).

	In the case of radially symmetric $Q$, the expansion \eqref{pred} is obtained by a detailed saddle-point analysis of the norms of monic orthogonal polynomials, in
	\cite{ACC2}.

	\subsection{Comments.}

	Disconnected droplets appear among the scale of ensembles
	with discrete rotational symmetry from \cite{BGM,BM,BY,DS}. For this class, the boundary of the droplet $S$ is a lemniscate.
	Unlike our situation of ring-shaped spectral gaps, the complement $\C\setminus S$ is connected, and
	this affects the Coulomb gas in an essential way. The reason is that under Laplacian growth, the droplet grows only near the outer boundary $\d_* (S^*)$ of the coincidence set.

	As far as we are aware, fluctuations of systems with discrete rotational symmetry have so far not been studied in the literature, and this could be potentially quite interesting. From experience with other models it is natural to conjecture that the number of particles near a given component of the droplet should be asymptotically periodic in $n$ since the equilibrium mass of each component is equal; see \cite{ACC1,CFWW} or \cite[Remark 1.5]{ACCL}.

	In addition to soft edge ensembles, a lot of work has been done on random normal matrix ensembles with other boundary conditions, such as ``hard edge'', ``soft/hard edge'', and many others, see \cite{BF} as well as the introduction to \cite{ACC0}. In the regime of disconnected droplets, fluctuations with respect to a class of radially symmetric potentials with hard edge boundary conditions are studied in the works \cite{C,ACCL,Berezin}.
	The hard edge theory is parallel and is equally interesting.
	
The recent works \cite{BY2,CK,KKL} give examples of disconnected droplets created by inserting a point charge. In particular \cite{BY2} provides an example without rotational symmetry.
The note \cite{AL} studies rotationally symmetric situations where the Laplacian vanishes along a circle.

	As far as we are aware, the Heine distribution was introduced in the random matrix context only recently, in our paper \cite{ACC2}. However, the Heine distribution appears (in a very different way) in connection with free fermions, cf.~ \cite{BB,Bo}. (We thank Y.~
	Liao for this information.)

	The survey \cite{F} overviews fluctuation theorems, also in the context of beta-ensembles.
	We recall that the simultaneous works \cite{BBNY2,LSe} propose generalizations of the fluctuation result obtained for eigenvalues of random normal matrices in \cite{AM} (cf \cite{RV} for the Ginibre ensemble) to beta-ensembles. However our present results do not seem to be known in the literature on beta-ensembles. See also the concluding remarks in \cite{AT} for some related comments.

	\subsection{Plan of this paper} In Section \ref{aaf} we provide an approximation formula for monic orthogonal polynomials, which incorporates a bifurcation when the degree of the polynomial is close to the critical value ($n$ or $n\tau_*$). For these degrees, the polynomials have two distinct peaks, near the curves $C_1$ and $C_2$ respectively.

	In Section \ref{bifur1} we provide asymptotic formulas for the squared norms of the weighted orthogonal polynomials for an outpost ensemble. We also prove a theorem on pointwise asymptotics for orthonormal polynomials in the bifurcation regime, see Theorem \ref{gondor}. Again these display a bifurcation as the degree passes the critical value.

	In Section \ref{proof: outpost} we prove Theorem \ref{main1} on the number of particles near an outpost.

	In Section \ref{bifur2}, we give our main result for norms of monic orthogonal polynomials in spectral gap ensembles, Theorem \ref{insert2}. We also prove Theorem \ref{gondor2} on pointwise asymptotics for the orthonormal polynomials in the bifurcation regime.

	In Section \ref{numgap} we prove Theorem \ref{mth2} on the number of particles near the component $S\setminus S_{\tau_*}$.

	In Section \ref{decomp} we prove the decomposition Lemma \ref{de1} for smooth functions.
	
	In Section \ref{brief} we use this to prove Theorem \ref{m7} on Gaussian fluctuations, by adapting the method using Ward's
	identity from \cite{AM,ACC1}.
	
	In Section \ref{Pmain3} we prove Theorem \ref{main3}.

	\subsubsection*{Further notation} In what follows, the notation $A_n\lesssim B_n$ means that there is $n_0$ such that $A_n\le CB_n$ where $C$ is some constant independent of $n\ge n_0$.

	Exterior discs are denoted $\D_e(\rho)=\{|w|>\rho\}\cup\{\infty\}$.

	For a measurable ``weight function'' $\phi(z)$, we define the $\phi$-norm by $$\|f\|_\phi^2:=\int_\C|f|^2e^{-\phi}\, dA$$ and we write $L^2_\phi$ for the corresponding $L^2$-space of complex-valued, measurable functions. Also we let $(\cdot, \cdot)_{\phi}$ denote the corresponding inner product.

    It is also convenient to use the abbreviation	\begin{equation}\label{deltan}\delta_n:=\sqrt{\tfrac{\log n} n}.\end{equation}

	\section{An approximation formula for monic orthogonal polynomials} \label{aaf}
	In this section we introduce an approximation formula for orthogonal polynomials in the ``bifurcation regime'' where the methods from \cite{HW} are not immediately applicable and need to be re-worked.
	
	We also discuss some related integration techniques, which are used in subsequent sections to prove that our approximation is ``good''.
	
	\smallskip
	
	Assume that $Q$ is an outpost potential satisfying the conditions in Section \ref{secop} and fix
	a smooth bounded function $\wfun(z)$ with $\wfun=0$ in a neighbourhood of the droplet $S$ and $\wfun=1$ in a neighbourhood of the outpost $C_2$. Also fix a real parameter $s$ and set
	\begin{equation}\label{wfunn}\tilde{Q}(z)=Q(z)-\frac s n \wfun(z).\end{equation}

	We shall give an approximation $\Phi_{j,n}(z)$ for the monic orthogonal polynomial $\tilde{p}_{j,n}(z)$ of degree $j$ with respect to the weight function $e^{-n\tilde{Q}}$ on $\C$.

	\smallskip

	\smallskip
	
	We remark that
	once the case of outposts is understood, the corresponding analysis of a spectral gap potential will be relatively simple (cf.~ Section \ref{bifur2} below).

	\subsection{Perturbed potentials and Hilbert spaces} Let $Q$ be an outpost potential as in Section \ref{secop} and form the perturbed potential $\tilde{Q}$ as in \eqref{wfunn}.

	Consider the space $L^2_{n\tilde{Q}}$ of measurable functions $f$ of finite squared norm
	$$\|f\|_{n\tilde{Q}}^2:=\int_\C|f|^2 e^{-n\tilde{Q}}\, dA.$$

	We want to estimate the squared norm
	\begin{equation}\label{genn} \tilde{h}_{j,n}(s):=\|\tilde{p}_{j,n}\|_{n\tilde{Q}}^2\end{equation}
	where $$\tilde{p}_{j,n}(z)=z^j+a_{j-1,n}z^{j-1}+\cdots+a_{0,n}$$ is the monic orthogonal polynomial of degree $j$ with respect to the norm in $L^2_{n\tilde{Q}}$.

	By Andr\'{e}ief's identity, the free energy of the ensemble in potential $\tilde{Q}$ is given by
	$$\log Z_{n,s\wfun}=\log (n!)+\sum_{j=0}^{n-1}\log \tilde{h}_{j,n}(s).$$

	In view of the relation \eqref{cgfdiff}, the function $F_n(s):=\log \E_n \exp (s\fluct_n \wfun)$ satisfies
	$$F_n(s)=\sum_{j=0}^{n-1}\log \frac {\tilde{h}_{j,n}(s)}{\tilde{h}_{j,n}(0)}.$$

	\subsection{Monic quasi-polynomials} \label{mqp} For $n$ large and
	$$j=n+\bigO(\log^2 n)$$
	we now define a ``monic quasi-polynomial'' $\Phi_{j,n}(z)$, which will subsequently be shown to approximate $p_{j,n}(z)$ well in an appropriate, weighted sense.

	To define $\Phi_{j,n}(z)$, we first introduce two auxiliary functions $q_1(z)$ and $q_2(z)$.

	For $k=1,2$, $q_k(z)$ is the bounded holomorphic function on $\Ext C_k$ which solves the Dirichlet problem
	$$\re q_k=Q\qquad\text{on}\qquad  C_k,$$
	and $\im q_k(\infty)=0$. Note that $q_k(z)$ continues analytically across $C_k$ to a neighbourhood of $\Ext(C_k)\cup C_k$.

	Our starting point is the following lemma, where we recall that $\phi_1$ and $\phi_2$ are the exterior conformal maps in Section \ref{secop}.

	\begin{lem} \label{lura} We have the identity
		\begin{equation}\label{lur}\phi_1(z)e^{\frac 1 2q_1(z)}=\phi_2(z)e^{\frac 1 2 q_2(z)}\end{equation}
		for all $z$ in a neighbourhood of infinity; in particular
		\begin{equation}\label{zick}\frac 1 {r_1}e^{\frac 1 2q_1(\infty)}=\frac 1 {r_2}e^{\frac 1 2q_2(\infty)}.\end{equation}
	\end{lem}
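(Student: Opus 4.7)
\medskip

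\textbf{Proof plan for Lemma \ref{lura}.} The plan is to reduce the claimed identity to a pure statement about the auxiliary functions $q_1$ and $q_2$, and then to pin down $q_2-q_1$ as a constant via uniqueness for the Dirichlet problem on $\Ext(C_2)$. Since $\phi_2=(r_1/r_2)\phi_1$, the identity \eqref{lur} is equivalent to
\[
q_2(z)-q_1(z)\equiv 2\log(r_2/r_1)\qquad (z\in \Ext(C_2)),
\]
so everything reduces to establishing this single scalar identity.

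First, I would identify $\re q_1$ on all of $\Ext(C_1)$ with the obstacle function. The function $u(z):=\check{Q}(z)-2\log|\phi_1(z)|$ is bounded and harmonic on $\Ext(C_1)$ (recall $\check{Q}(z)=2\log|z|+\mathrm{O}(1)$ near infinity and $\phi_1(z)\sim z/r_1$), and on $C_1$ we have $|\phi_1|=1$ and $\check{Q}=Q$, so $u=Q$ on $C_1$. By the normalization $\im q_1(\infty)=0$ and uniqueness of the bounded solution to the Dirichlet problem on $\Ext(C_1)$, it follows that $\re q_1 = u$ throughout $\Ext(C_1)$.

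Next, evaluate on $C_2$: since $C_2\subset S^{*}$, we have $\check{Q}=Q$ on $C_2$; since $C_2=\phi_1^{-1}\{|w|=r_2/r_1\}$, we have $|\phi_1|=r_2/r_1$ on $C_2$. Combining with the previous step,
\[
\re q_1(z)= Q(z)-2\log(r_2/r_1)\qquad (z\in C_2),
\]
and since $\re q_2=Q$ on $C_2$ by definition, $\re(q_2-q_1)\equiv 2\log(r_2/r_1)$ on $C_2$. Now $g(z):=q_2(z)-q_1(z)-2\log(r_2/r_1)$ is holomorphic on $\Ext(C_2)$, bounded (both $q_1$ and $q_2$ are bounded on $\Ext(C_2)$ by construction), has $\re g\equiv 0$ on $C_2$, and $\im g(\infty)=0$ because both $q_1$ and $q_2$ have vanishing imaginary part at $\infty$. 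By the uniqueness of bounded harmonic functions on $\Ext(C_2)$ with prescribed boundary values, $\re g\equiv 0$ on $\Ext(C_2)$; hence $g$ is a purely imaginary constant, forced to be zero by the condition at $\infty$. This proves \eqref{lur}.

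Finally, to obtain \eqref{zick} I would divide both sides of \eqref{lur} by $z$ and let $z\to\infty$, using the expansions $\phi_k(z)=z/r_k+\mathrm{O}(1)$ and the finiteness of $q_k(\infty)$. The main (minor) obstacle is ensuring that $q_1$ and $q_2$ are indeed bounded and holomorphic on the common domain $\Ext(C_2)$ so that the Dirichlet uniqueness argument applies; this is built into the construction of the $q_k$ as solutions to the exterior Dirichlet problem with real-analytic boundary data, and $\Ext(C_2)\subset \Ext(C_1)$ by the level-curve definition of $C_2$.
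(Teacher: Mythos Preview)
Your proof is correct and follows essentially the same approach as the paper: both arguments hinge on the identity $\check{Q}=2\log|\phi_1|+\re q_1$ on $\Ext(C_1)$ (established by uniqueness for the exterior Dirichlet problem / maximum principle), and both conclude via a uniqueness argument on $\Ext(C_2)$. The only cosmetic difference is that the paper derives the analogous identity $\check{Q}=2\log|\phi_2|+\re q_2$ on $\Ext(C_2)$ as well and then argues that the quotient $\phi_1 e^{q_1/2}/(\phi_2 e^{q_2/2})$ is a unimodular constant, real at infinity, whereas you substitute $\phi_2=(r_1/r_2)\phi_1$ at the outset and show directly that $q_2-q_1$ is the constant $2\log(r_2/r_1)$; these are equivalent repackagings of the same uniqueness step.
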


	\begin{proof} We first observe the identity
		\begin{equation}\label{idem}\check{Q}(z)=2\log|\phi_1(z)|+\re q_1(z)=2\log|\phi_2(z)|+\re q_2(z)\end{equation}
		for all $z$ in the exterior of $C_2$.

		To verify \eqref{idem}, we note that the function $2\log|\phi_1(z)|+\re q_1(z)$ agrees with $Q$ on $C_1$ and grows like $2\log|z|+\calO(1)$ as $z\to\infty$.
		The identity $\check{Q}(z)=2\log|\phi_1(z)|+\re q_1(z)$ for $z\in\Ext(C_1)$ hence follows by a suitable version of the maximum principle. A similar argument shows that
		$\check{Q}(z)=2\log|\phi_2(z)|+\re q_2(z)$ for $z\in\Ext(C_2)$. This shows \eqref{idem}.

		It follows from \eqref{idem} that the quotient
		$$\frac {\phi_1(z)e^{\frac 1 2q_1(z)}} {\phi_2(z)e^{\frac 1 2q_2(z)}}$$
		has absolute value $1$ on $\Ext C_2$ and thus equals to a unimodular constant. Since the quotient is real at infinity, the constant has to be one. This proves \eqref{lur} and \eqref{zick} follows
		by taking $z$ to $\infty$.
	\end{proof}

	Note that the left hand side of \eqref{lur} gives an analytic continuation of the right hand side, which is apriori just defined in $\Ext(C_2)$.

	\smallskip

	Now fix a positive integer $j$ and recall from \eqref{diric} that $h_1(z)$ denotes the normalized holomorphic function on $\Ext C_1$ with boundary values $\log\Delta Q$ on $C_1$.
	
	We consider the function in $\Ext(C_1)$ defined by
	\begin{align}\label{pjn}\Phi_{j,n}(z)=\frac {r_1^{j+1/2}} {e^{nq_1(\infty)/2}e^{h_1(\infty)/2}}\sqrt{\phi_1'(z)}\phi_1(z)^je^{nq_1(z)/2}e^{ h_1(z)/2},\end{align}
	which satisfies $\Phi_{j,n}(z)=z^j+O(z^{j-1})$ as $z\to\infty$.
	
	Note that the function $\Phi_{j,n}(z)$ is analytic in $(\Ext C_1)\setminus \{\infty\}$ and continues analytically across $C_1$ to a neighbourhood of $(\Ext(C_1)\cup C_1)\setminus\{\infty\}$.
	Therefore $\Phi_{j,n}(z)$ is unambiguously defined in such a neighbourhood.

	Using Lemma \ref{lura} and recalling the function $h_2=h_1+c$ from \eqref{h1h2} we have also (if $z\in\Ext(C_2)$)
	\begin{equation}\label{altpjn}\Phi_{j,n}(z)=\frac {r_2^{j+1/2}} {e^{nq_2(\infty)/2}e^{h_2(\infty)/2}}\sqrt{\phi_2'(z)}\phi_2(z)^je^{nq_2(z)/2}e^{ h_2(z)/2}.\end{equation}

	We next modify $\Phi_{j,n}(z)$ on a compact subset of $\Int (C_1)$ to a smooth function on $\C$. More precisely, we replace $\Phi_{j,n}$ by
	$\chi\Phi_{j,n}$ (defined to be zero where $\chi$ vanishes), where the smooth function $\chi$ is defined as follows.
	
	Fix two compact sets $K,K'\subset \Int(C_1)$ with $K$ contained in the interior of $K'$.
	We assume that $K$ is large enough that $\phi_1$ is well defined with $\phi_1'\ne 0$ on the closure of $\C\setminus K$. We also assume that $\Int(C_1)\setminus K$ is contained in the droplet $S$ and that
	the Laplacian $\Delta Q$ is bounded below by a positive constant on $\Int(C_1)\setminus K$.
	We fix a $\chi$ with $\chi=0$ on $K$ and $\chi=1$ on $\C\setminus K'$.

	\smallskip

	The following lemma will be used to show that for $j=n+\bigO(\log^2 n)$, the $L^2_{n\tilde{Q}}$-norm of $\chi \Phi_{j,n}$ is essentially concentrated in the union of the rings
	$$\calB_k=\{z\in\C\,;\, \dist(z,C_k)\le M\delta_n\},\qquad (k=1,2),$$
	where $M$ is a large enough constant and $\delta_n=\sqrt{\log n/n}$, see \eqref{deltan}.

	\begin{lem} \label{negl0} For $k=1,2$ write
		$$\calA_{k,n}=\{z\,;\, M\delta_n\le \dist(z,C_k)\le \delta\},$$
		where $\delta>0$ is fixed. Suppose that the integer $j$ is close to $n$ in the sense that
		\begin{equation}\label{bost}j=n+\bigO(\log^2 n),\qquad (n\to\infty).\end{equation}
		Then, given any $N>0$, we have by choosing $M$ large enough and $\delta>0$ small enough
		$$\int_{\calA_{k,n}}|\phi_k|^{2j}e^{n\re q_k}e^{\re h_k}e^{-nQ}|\phi_k'|\, dA=\bigO(n^{-N}),\qquad (n\to\infty).$$
	\end{lem}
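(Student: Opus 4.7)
The plan is to rewrite the integrand so that its size is controlled by an expression of the form $\exp(-cnr^2 + Cr\log^2 n)$, where $r=\dist(z,C_k)$, and then exploit $r\ge M\delta_n$ on $\calA_{k,n}$ to extract arbitrary polynomial decay in $n$. To this end, set $u_k(z):=2\log|\phi_k(z)|+\re q_k(z)$. By \eqref{idem}, $u_k=\check{Q}$ on $\Ext(C_k)$, and since $C_k$ is real-analytic, $\phi_k$ continues analytically across $C_k$, so $u_k$ is harmonic in a two-sided neighborhood of $C_k$. The integrand then factors as
\begin{equation*}
|\phi_k|^{2j}e^{n\re q_k}\,e^{\re h_k}\,e^{-nQ}\,|\phi_k'|
=\exp\!\bigl(-n(Q-u_k)+2(j-n)\log|\phi_k|\bigr)\,e^{\re h_k}\,|\phi_k'|,
\end{equation*}
in which the bounded factor $e^{\re h_k}|\phi_k'|$ plays no role.

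The key step is the two-sided quadratic lower bound
\begin{equation*}
Q(z)-u_k(z)\ge c_0 r^2,\qquad r=\dist(z,C_k)\le \delta,
\end{equation*}
for some $c_0,\delta>0$. The $C^{1,1}$-regularity of $\check{Q}$, together with $\check{Q}=Q$ on $S^*$ and $\check{Q}=u_k$ on $\Ext(C_k)$, forces $u_k=Q$ and $\nabla u_k=\nabla Q$ along $C_k$. Consequently $Q-u_k$ and all its tangential derivatives vanish on $C_k$, so at every $p\in C_k$ only the second normal derivative of $Q-u_k$ contributes to its Laplacian, which gives $\d_\nu^2(Q-u_k)(p)=4\Delta(Q-u_k)(p)=4\Delta Q(p)$. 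A Taylor expansion in the signed normal coordinate then yields $Q(z)-u_k(z)=2\Delta Q(p)\,r^2+\bigO(r^3)$, where $p$ is the point of $C_k$ nearest $z$. Strict subharmonicity $\Delta Q\ge c>0$ on the compact curve $C_k$ and shrinking $\delta$ give the claimed bound, valid on both sides of $C_k$.

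Finally, $\log|\phi_k(z)|=\bigO(r)$ near $C_k$ (since $|\phi_k|=1$ on $C_k$) and $|j-n|=\bigO(\log^2 n)$, so $\bigl|2(j-n)\log|\phi_k(z)|\bigr|\le C_1 r\log^2 n$; the integrand is therefore bounded by $C\exp(-c_0 nr^2+C_1 r\log^2 n)$. This bound is maximized at $r_*=C_1\log^2 n/(2c_0 n)\ll M\delta_n$, so on $\calA_{k,n}$ it is monotonically decreasing in $r$, and at $r=M\delta_n$ the exponent equals $-c_0M^2\log n+\bigO((\log n)^{5/2}/\sqrt{n})\le -\tfrac{c_0M^2}{2}\log n$ for $n$ large. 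Since $\calA_{k,n}$ has bounded area, the integral is bounded by a constant times $n^{-c_0M^2/2}$, and choosing $M$ with $c_0M^2/2\ge N$ yields the claim. The only delicate point is the quadratic estimate above, which hinges on the $C^{1,1}$-matching of $\check{Q}$ forcing $\nabla(Q-u_k)=0$ on $C_k$ so that the normal second derivative alone supplies the positive Laplacian $4\Delta Q$ on both sides of $C_k$.
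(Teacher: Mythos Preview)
Your proof is correct and follows essentially the same route as the paper's: both rewrite the integrand using $u_k=2\log|\phi_k|+\re q_k$ (the paper calls this $V$), invoke the quadratic lower bound $Q-u_k\ge c_0\,\dist(z,C_k)^2$, control $|\phi_k|^{2(j-n)}$ by $\exp(\bigO(\log^2 n)\cdot\dist(z,C_k))$, and then let the Gaussian in $r$ dominate for $r\ge M\delta_n$. The only difference is cosmetic: you supply a self-contained derivation of the quadratic bound via the $C^{1,1}$-matching $\nabla(Q-u_k)=0$ on $C_k$ and the identity $\d_\nu^2(Q-u_k)=4\Delta Q$, whereas the paper cites \cite[Lemma~3.5]{AC}; and in the final step the paper absorbs the linear term into a slightly smaller $c$ and evaluates $\int_{M\delta_n}^\delta e^{-cnu^2}\,du$, while you bound the integrand pointwise by its value at $r=M\delta_n$ and multiply by the area---both yield $n^{-cM^2}$ up to harmless factors.
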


	\begin{proof}

		In the set $\calA_{k,n}$ we have the identity
		\begin{equation}\label{weh}V=2\log|\phi_1|+\re q_1\end{equation}
		where $V$ is the harmonic continuation of $\check{Q}|_{\Ext(C_1)}$ inwards across $C_1$. By choosing $\delta>0$ small enough, this continuation is defined in $\calA_{k,n}$.

		We thus have
		\begin{equation}\label{inert}\int_{\calA_{1,n}}|\phi_1|^{2j}e^{\re h_k}e^{n\re q_1}e^{-nQ}|\phi_1'|\, dA=\int_{\calA_{1,n}}|\phi_1|^{2(j-n)}e^{\re h_k}|\phi_1'|e^{n(V-Q)}\, dA.\end{equation}

		A Taylor expansion as in \cite[Lemma 3.5]{AC} shows that, by perhaps taking $\delta>0$ somewhat smaller, there is a constant $c>0$ such that for all $z\in\calA_{1,n}$,
		\begin{equation}\label{tay1}(Q-V)(z)\ge c(\dist(z,C_1))^2.\end{equation}

		Further, if $\delta$ is chosen small enough that $\calA_{1,n}\subset \C\setminus K'$, then by \eqref{bost} we have the estimate
		\begin{equation}\label{jlarge} e^{-B\log^2 n \dist(z,C_1)}\le |\phi_1(z)|^{2(j-n)}\le e^{B \log^2 n \dist(z,C_1)} \end{equation}
		for a large enough constant $B$.

		Inserting the estimates \eqref{tay1}, \eqref{jlarge} in the right hand side of \eqref{inert}, using also that $|\phi_1'|$ and $e^{\re h_k}$ are bounded above and below on $\calA_{1,n}$, we find, if necessary by choosing $c>0$ somewhat smaller
		$$\int_{\calA_{1,n}}|\phi_1|^{2(j-n)}e^{\re h_k}|\phi_1'|e^{n(V-Q)}\, dA\lesssim \int_{M\delta_n}^\delta e^{-cnu^2}\, du\lesssim n^{-N},$$
		where the last estimate holds provided that the constant $M=M(c,N)$ is chosen large enough.

		A similar estimate for $\calA_{2,n}$ (omitted here) finishes the proof.
	\end{proof}

	\subsection{Integration in flow-coordinates}
	
	We now recall the foliation flow formalism from \cite{HW} (cf.~\cite[Section 5]{AC}).

	In what follows we let $V(z)$ be the harmonic continuation of $\check{Q}$ from $\Ext(C_1)$ inwards across $C_1$.

	Let $N_1$ be the $\delta$-neighbourhood of $C_1$ and $N_2$ the $\delta$-neighbourhood of $C_2$, where $\delta>0$ is small enough.

	Given a ``small'' parameter $t\in\R$
	we set
	$$\Gamma_{1,t}=\{z\in N_1\,;\, (Q-V)(z)=t^2\},\qquad \Gamma_{2,t}=\{z\in N_2\,;\, (Q-\check{Q})(z)=t^2\}.$$

	Then for small $t\ne 0$, $\Gamma_{1,t}$ is the union of two analytic Jordan curves $C_{1,t}^-$ (inside $C_1$) and $C_{1,t}^+$ (outside $C_1$). We set $C_{1,t}=C_{1,t}^-$ if $t< 0$ and
	$C_{1,t}=C_{1,t}^+$ if $t> 0$. We also set $C_{1,0}=C_1$.

	Likewise $\Gamma_{2,t}$ is the union of $C_{2,t}^-$ (inside $C_2$) and
	$C_{2,t}^+$ (outside $C_2$) and we set $C_{2,t}=C_{2,t}^-$ if $t< 0$ and
	$C_{2,t}=C_{2,t}^+$ if $t> 0$.

	For fixed $t$ with $|t|$ small we write $$U_{k,t}:=\Ext(C_{k,t})$$ for $k=1,2$ and write $\psi_{k,t}$ for the normalized \footnote{I.e., $\psi_{k,t}(\infty)=\infty$ and $\psi_{k,t}'(\infty)>0$}
	conformal mapping
	$$\psi_{k,t}:\D_e\to \phi_k(U_{k,t}).$$

	The mappings $\psi_{k,t}$ continue analytically across $\T$ for $k=1,2$ and
	$$(Q-V)\circ \phi_k^{-1}\circ\psi_{k,t}\equiv t^2\qquad \text{on}\qquad \T.$$

	Recall that $\delta_n=\sqrt{\frac {\log n} n}$ and define a neighbourhood $D_n^k$ of $\T$ (for $k=1,2$) by
	\begin{equation}\label{dnk}D_n^k=\bigcup_{-2M\delta_n\le t\le 2M\delta_n}\psi_{k,t}(\T),\end{equation}
	where $M$ is a suitable constant.

	Put
	$$\tilde{D}_n=\{(t,w)\,;\, w\in\T,\, -2M\delta_n\le t\le 2M\delta_n\}$$
	and define the flow maps $\Psi_k:\tilde{D}_n\to D_n^k$ by
	$$\Psi_k(t,w):=\psi_{k,t}(w).$$

	By \cite[Lemma 5.3]{AC}, the Jacobian of the mapping $\Psi_k$ satisfies
	\begin{equation}\label{jac}J_{\Psi_k}(t,w)=\bigl(\frac {|\phi_k'|}{\sqrt{2\Delta Q}}\bigr)\circ \phi_k^{-1}\cdot(1+\bigO(t)),\qquad (t\to 0).\end{equation}

	For $k=1,2$ and $f$ a suitable function (defined on $\phi_k(\C\setminus K)$) we define a new function (on $\C\setminus K$) by
	\begin{equation}\label{posit}\Lambda_{k,j,n}[f]:=\phi_k'\cdot \phi_k^j\cdot e^{\frac 1 2 nq_k}\cdot (f\circ \phi_k).\end{equation}

	Now write $\tau=j/n$, where $\tau$ is ``close'' to $1$.
	
	For $\rho<1$ close enough to $1$ we define a function $R_{k,\tau}$ on the exterior disc $\D_e(\rho)=\{|w|>\rho\}$ by
	\begin{equation}\label{ridg}R_{k,\tau}(z) =(Q-2\tau\log|\phi_k|-\re q_k) \circ \phi_k^{-1}(z).\end{equation}

	The map $\Lambda_{k,j,n}$ is then unitary
	$L^2_{nR_{k,\tau}}(\D_e(\rho))\to L^2_{nQ}(\phi_k^{-1}(\D_e(\rho)))$, i.e.,
	$$\int_{\phi_k^{-1}(\D_e(\rho))}\Lambda_{k,j,n}[f]\overline{\Lambda_{k,j,n}[g]}\, e^{-nQ}\, dA=\int_{\D_e(\rho)}f\bar{g}\, e^{-nR_{k,\tau}}\, dA.$$
	
	We also note that $\Lambda_{k,j,n}$ preserves holomorphicity, i.e., it restricts to a unitary map between Bergman spaces.

	\section{Bifurcation of orthogonal polynomials at an outpost}\label{bifur1}
	In this section, we prove an estimate for the squared norm $$\tilde{h}_{j,n}(s)=\|\tilde{p}_{j,n}\|_{n\tilde{Q}}^2$$ where $\tilde{p}_{j,n}$ is the order $j$ monic orthogonal polynomial in $L^2_{n\tilde{Q}}$.
	
	Our estimate is obtained by substituting the approximation $\chi \Phi_{j,n}$ for $\tilde{p}_{j,n}$ when $j=n+\bigO(\log^2 n)$; it is shown subsequently, in Subsection \ref{alo}, that $\log(\tilde{h}_{j,n}(s)/\tilde{h}_{j,n}(0))$ is negligible for $j\le n-\log^2 n$.

	\subsection{Approximate norm} Let $j$ be an integer which is close to $n$ in the sense
	$$j=n+\bigO(\log^2 n),\qquad (n\to\infty).$$

	We shall estimate the $L^2_{n\tilde{Q}}$-norm of $\chi\Phi_{j,n}$. To this end, we start by estimating the essential contribution, coming from integration over the rings $D_n^1$ and $D_n^2$ in \eqref{dnk}. When this is done, we will prove that
	the remaining integral is negligible.

	For $k=1,2$ we define a function $f_k$ on $\phi_k(\C\setminus K)$ by
	$\Lambda_{k,j,n}[f_k]=\Phi_{j,n}$, i.e., (by \eqref{pjn}, \eqref{altpjn})
	\begin{equation}\label{efk}f_{k}\circ \phi_k= r_k^{j+1/2}e^{-nq_k(\infty)/2} e^{-h_k(\infty)/2}(\phi_k')^{-1/2} e^{h_k/2}.\end{equation}

	Thus with $R_{k,\tau}(z)$ as in \eqref{ridg}
	\begin{align*}\int_{\phi_1^{-1}(D_n^1)\cup \phi_2^{-1}(D_n^2)}&|\chi\Phi_{j,n}|^2e^{-n\tilde{Q}}\, dA\\
		&=
		\sum_{k=1,2} e^{s(k-1)}r_k^{2j+1}e^{-nq_k(\infty)}e^{-h_k(\infty)} \int_{D_n^k}(|\phi_k'|^{-1} e^{\re h_k})\circ \phi_k^{-1} \cdot e^{-nR_{k,\tau}}\, dA.\end{align*}

	Note that $R_{k,1}\circ\psi_{k,t}\equiv t^2$ on $\T$ and (cf.~\eqref{ridg})
	\begin{equation}\label{fjupp}(R_{k,\tau}-R_{k,1})(z) = 2(1-\tau)\log|z|,\qquad z\in\D_e(\rho).\end{equation}

	Using the Jacobian \eqref{jac} and recalling that $\re h_k=\frac 1 2\log \Delta Q$ on $C_k$
	\begin{align}\int_{D_n^k}&(|\phi_k'|^{-1} e^{\re h_k})\circ \phi_k^{-1}\cdot e^{-nR_{k,\tau}}\, dA\nonumber\\
		&=\frac {1} \pi\int_{\tilde{D}_n}\frac {e^{\re h_k} \cdot e^{-n(R_{k,\tau}-R_{k,1})\circ\phi_k}}{\sqrt{2\Delta Q}}\circ \phi_k^{-1}\circ\psi_{k,t}(w)
		\cdot (1+\bigO(t))\, e^{-nt^2}\, dt\, |dw|\nonumber\\
		&=\frac {1} {\sqrt{2}\pi}\int_{\tilde{D}_n} (\frac {e^{\re h_k}} {\sqrt{\Delta Q}})\circ \phi_k^{-1} \circ \psi_{k,t} (w)\cdot |\psi_{k,t}(w)|^{2(n-j)}(1+\bigO(t))\, e^{-nt^2}\, dt\, |dw|.\label{redu}
	\end{align}
	
	Here and in the sequel we write ``$|dw|$'' or for the arclength measure on $\T$; for the curves $C_k$ we prefer to write ``$ds$''.

	Using that $|\psi_{k,t}(w)|=1+\bigO(t)$ uniformly on $\T$
	the expression \eqref{redu} is seen to be equal to
	$\sqrt{\tfrac{2\pi}{n}} \cdot (1+\bigO(\frac{n-j}{n}))$.

	In all, we have shown that
	\begin{align}\int_{\phi_1^{-1}(D_n^1)\cup \phi_2^{-1}(D_n^2)}&|\chi\Phi_{j,n}|^2e^{-n\tilde{Q}}\, dA\nonumber\\
		&=\sqrt{\frac{2\pi}{n}}( r_1^{2j+1}e^{-nq_1(\infty)} e^{-h_1(\infty)}+ e^sr_2^{2j+1}e^{-nq_2(\infty)}e^{-h_2(\infty)})\cdot (1+\bigO(\delta_n)).\label{mainc}
	\end{align}

	It remains to estimate the contribution coming from $\C\setminus (\phi_1^{-1}(D_n^1)\cup \phi_2^{-1}(D_n^2))$. To this end, the main work has been done in Lemma \ref{negl0}, which shows that the contribution
	coming from the ring-domains $\calA_{k,n}$ adds only a multiplicative error of the order $n^{-N}$ where $N>0$ is arbitrarily large:
	$$\int_{\calA_{k,n}}|\chi\Phi_{j,n}|^2e^{-n\tilde{Q}}\lesssim n^{-N-1/2}r_k^{2j+1}e^{-nq_k(\infty)},$$
	where the implied constant depends on the maximum of $1/\sqrt{\Delta Q}$ over $C_k$.
	
	This error term can be absorbed in the $\bigO$-term in \eqref{mainc} provided that $|s|$ is not too large; $|s|\le \log n$ will do.
	The remaining integral is easy to estimate, and leads to an error term which is exponentially small as $n\to\infty$. The details may be left to the reader.

	In summary, we have shown the following result.

	\begin{lem} \label{lemm} For $j = n +\bigO(\log^2 n)$ the squared norm in $L^2_{n\tilde{Q}}$ of $\chi \Phi_{j,n}$ satisfies
		\begin{align*}\| \chi \Phi_{j,n}\|^2_{n\tilde{Q}} =\sqrt{\frac{2\pi}{n}}&( r_1^{2j+1}e^{-nq_1(\infty)}e^{-h_1(\infty)}+ e^sr_2^{2j+1}e^{-nq_2(\infty)}e^{-h_2(\infty)})\cdot (1+\bigO(\delta_n)),
		\end{align*}
		where the $\bigO$-constant is uniform for $|s|\le \log n$ as $n\to \infty$.
	\end{lem}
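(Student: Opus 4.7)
The plan is to decompose the integration domain into three regions and estimate each separately. The support of $\chi\Phi_{j,n}$ splits as the union of the two flow tubes $\phi_k^{-1}(D_n^k)$ from \eqref{dnk}, the intermediate rings $\calA_{k,n}$ appearing in Lemma \ref{negl0}, and a far region at distance at least $\delta$ from both $C_1$ and $C_2$. The main contribution comes from the flow tubes; the intermediate rings contribute at most $\bigO(n^{-N})$ for any fixed $N$ by Lemma \ref{negl0}; and the far region contributes exponentially little, since $Q$ exceeds $\check Q$ by a positive constant there.

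For the flow tubes, on each $\phi_k^{-1}(D_n^k)$ I would apply the unitary change of variables $\Lambda_{k,j,n}$, which sends $\Phi_{j,n}$ to the function $f_k$ in \eqref{efk} and pulls out of the integral the constant prefactor $r_k^{2j+1}e^{-nq_k(\infty)}e^{-h_k(\infty)}$, together with an extra factor $e^s$ on the $C_2$ tube coming from $\wfun\equiv 1$ there. I would then pass to foliation coordinates via $\Psi_k(t,w)=\psi_{k,t}(w)$, use the Jacobian \eqref{jac}, and combine the identity $R_{k,1}\circ\psi_{k,t}\equiv t^2$ on $\T$ with \eqref{fjupp} to extract the Gaussian weight $e^{-nt^2}$ together with a residual factor $|\psi_{k,t}(w)|^{2(n-j)}$. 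Since $\re h_k=\tfrac12\log\Delta Q$ on $C_k$, the factor $e^{\re h_k}$ coming from $|f_k|^2$ cancels $(\Delta Q)^{-1/2}$ from the Jacobian on $C_k$ itself, with a smooth first-order error $1+\bigO(t)$ in the tube. What remains is the perturbed Gaussian integral
$$\frac{1}{\sqrt 2\,\pi}\int_{\tilde D_n} |\psi_{k,t}(w)|^{2(n-j)}\bigl(1+\bigO(t)\bigr)\, e^{-nt^2}\, dt\, |dw|,$$
which evaluates to $\sqrt{2\pi/n}\,(1+\bigO(\delta_n))$ by standard Gaussian moment estimates: the bounds $|t|\le 2M\delta_n$ and $n-j=\bigO(\log^2 n)$ together give $|\psi_{k,t}(w)|^{2(n-j)}=1+o(1)$, while the odd $\bigO(t)$ Jacobian correction contributes $\bigO(\delta_n)$ after integration against $e^{-nt^2}$.

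For the intermediate rings I would apply Lemma \ref{negl0} directly to the explicit formula \eqref{pjn}: since $|\phi_k'|$, $e^{\re h_k}$, and $e^{s\wfun/n}$ (for $|s|\le\log n$) are all uniformly bounded on $\calA_{k,n}$, the lemma yields a contribution bounded by $n^{-N-1/2}r_k^{2j+1}e^{-nq_k(\infty)}$ for arbitrary $N$, which is absorbed into the $\bigO(\delta_n)$ error of the main term once the common prefactor is factored out. The far region is handled by the pointwise bound $|\Phi_{j,n}(z)|^2\lesssim e^{n\check Q(z)}$, obtained by combining \eqref{pjn} with $\check Q=2\log|\phi_k|+\re q_k$ from Lemma \ref{lura}; since $Q>\check Q+\eta$ there for some $\eta>0$, the integral is exponentially small in $n$.

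The main technical obstacle is securing the uniform cancellation of $(\Delta Q)^{-1/2}$ with $e^{\re h_k}$ throughout the tube rather than only on the curve $C_k$; this reduces to a first-order Taylor expansion in the flow parameter $t$, which together with $|t|\le 2M\delta_n$ produces an additive error of order $\delta_n$ in the integrand, matching the claimed remainder after Gaussian integration. Uniformity for $|s|\le\log n$ is immediate, since $s$ enters only via the flat $e^s$ prefactor on the $C_2$ side, and the exponentially small far-region bound comfortably absorbs the polynomial factor $e^{|s|}\le n$.
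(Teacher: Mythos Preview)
Your proposal is correct and follows essentially the same approach as the paper: the same three-region decomposition (flow tubes $\phi_k^{-1}(D_n^k)$, intermediate rings $\calA_{k,n}$, far region), the same use of the unitary map $\Lambda_{k,j,n}$ and foliation coordinates with Jacobian \eqref{jac}, the same cancellation of $e^{\re h_k}$ against $(\Delta Q)^{-1/2}$ on $C_k$, and Lemma \ref{negl0} for the rings. One minor imprecision: the $\bigO(t)$ Jacobian correction is not literally odd in $t$, but this does not matter since $\int |t|e^{-nt^2}\,dt\big/\int e^{-nt^2}\,dt=\bigO(n^{-1/2})\le\bigO(\delta_n)$ anyway.
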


	\subsection{Approximate orthogonality}
	We shall now show that the monic quasi-polynomial $\Phi_{j,n}$ (defined in \eqref{pjn}, \eqref{altpjn}) is approximately orthogonal to lower order polynomials.

	\begin{lem} \label{appo} Suppose that $j=n+\bigO(\log^2 n)$.
		If $p(z)$ is a polynomial of degree $\ell<j$ then
		$$|(p,\chi\Phi_{j,n})_{n\tilde{Q}}|\le C\delta_n\|p\|_{n\tilde{Q}}\|\chi \Phi_{j,n}\|_{n\tilde{Q}},$$
		where the constant $C$ is uniform for $|s|\le \log n$ and $\delta_n$ is as in \eqref{deltan}.
	\end{lem}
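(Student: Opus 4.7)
The strategy is to localize the inner product to the flow-ring neighborhoods of $C_1$ and $C_2$, transfer to the exterior disc via the unitarity of $\Lambda_{k,j,n}$, change to flow coordinates, and observe that the leading $t=0$ contribution vanishes \emph{exactly} by a Cauchy-type contour argument; the $\bigO(\delta_n)$ bound then comes comfortably from the lowest-order nonzero $t$-correction.

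First I would apply Cauchy--Schwarz together with Lemma~\ref{negl0} applied to $\chi\Phi_{j,n}$ (which satisfies the same estimate since $\chi\equiv 1$ on $\calA_{k,n}$) to reduce the integration to $\bigcup_{k=1,2}\phi_k^{-1}(D_n^k)$ at the cost of an $\bigO(n^{-N})\|p\|_{n\tilde Q}\|\chi\Phi_{j,n}\|_{n\tilde Q}$ error for arbitrary $N$. On $\phi_k^{-1}(D_n^k)$ we have $\wfun\equiv k-1$ and $\chi\equiv 1$, so the unitarity of $\Lambda_{k,j,n}$ (with $\Lambda_{k,j,n}[f_k]=\Phi_{j,n}$) yields
\begin{equation*}
(p,\chi\Phi_{j,n})_{n\tilde Q,\,\phi_k^{-1}(D_n^k)}=e^{s(k-1)}\int_{D_n^k}\Lambda_{k,j,n}^{-1}(p)(w)\overline{f_k(w)}\,e^{-nR_{k,\tau}(w)}\,dA(w).
\end{equation*}
Changing variables $w=\psi_{k,t}(\zeta)$ and using $R_{k,\tau}\circ\psi_{k,t}=t^2+2(1-\tau)\log|\psi_{k,t}|$ together with \eqref{jac}, the weight factorizes as $e^{-nt^2}|\psi_{k,t}(\zeta)|^{2(j-n)}J_{\Psi_k}(t,\zeta)$; I would Taylor-expand the remaining $\zeta$-dependent factors around $t=0$ and perform the Gaussian $t$-integration.

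The decisive step is the evaluation of the $t=0$ term. It contributes a $(2\pi n)^{-1/2}$ multiple of $\int_\T\Lambda_{k,j,n}^{-1}(p)(\zeta)\overline{f_k(\zeta)}\,g_k(\zeta)|d\zeta|$ with $g_k=(|\phi_k'|/\sqrt{\Delta Q})\circ\phi_k^{-1}$. Substituting the explicit formulas \eqref{pjn}/\eqref{altpjn} and \eqref{efk}, using the $C_k$-boundary identities $\overline{q_k}=2Q-q_k$ and $\overline{h_k}=\log\Delta Q-h_k$ (which follow from $\re q_k=Q$ and $\re h_k=\tfrac12\log\Delta Q$), and rewriting $|d\zeta|=d\zeta/(i\zeta)$, this circle integral collapses to
\begin{equation*}
\mathrm{const}\cdot\oint_\T\frac{p(\phi_k^{-1}(\zeta))\,e^{-nq_k(\phi_k^{-1}(\zeta))/2-h_k(\phi_k^{-1}(\zeta))/2}}{\zeta^{j+1}\sqrt{\phi_k'(\phi_k^{-1}(\zeta))}}\,d\zeta.
\end{equation*}
The integrand is holomorphic on $\D_e\cup\{\infty\}$ (all factors pull back to holomorphic functions of $\zeta$ via $\phi_k^{-1}$) and decays like $\zeta^{\ell-j-1}=\bigO(\zeta^{-2})$ at $\infty$ since $\deg p=\ell<j$. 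Deforming the contour outward to $|\zeta|=R$ and letting $R\to\infty$, Cauchy's theorem in the annulus $\{1\le|\zeta|\le R\}$ forces this integral to vanish identically.

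For the higher-order corrections, the first-order $t$-term integrates to zero by parity of $e^{-nt^2}$ on the symmetric interval $|t|\le 2M\delta_n$. The second-order correction, including the expansion $|\psi_{k,t}(\zeta)|^{2(j-n)}=1+\bigO(t\log^2 n)$ (since $|j-n|=\bigO(\log^2 n)$) and the $\bigO(t)$ Jacobian error, yields after Gaussian integration a contribution of order $\log^4 n/n^{3/2}$, which combined with the norm scales from Lemma~\ref{lemm} is comfortably within the stated $\bigO(\delta_n)\|p\|_{n\tilde Q}\|\chi\Phi_{j,n}\|_{n\tilde Q}$ bound, uniformly for $|s|\le\log n$ via the bounded $e^{s(k-1)}$ prefactor. \emph{The main obstacle is the exact vanishing of the $t=0$ term just described}: without the cancellation afforded by the factor $\zeta^{-j-1}$, which couples the degree mismatch $\ell<j$ to the analyticity of the integrand on $\D_e$, only the trivial $\bigO(1)$ Cauchy--Schwarz bound would survive. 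Tracking the branch choices for $\sqrt{\phi_k'}$ and the compensation $|\phi_k'|/\sqrt{\overline{\phi_k'}}=\sqrt{\phi_k'}$ during the algebraic reduction is the most delicate bookkeeping.
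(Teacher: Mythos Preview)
Your overall strategy---localize to the flow rings, transfer via $\Lambda_{k,j,n}$, change to flow coordinates, and exploit holomorphicity on $\D_e$ together with the degree mismatch $\ell<j$---is correct and coincides with the paper's. The Cauchy-type vanishing you isolate is indeed the heart of the matter. However, you apply it only at $t=0$, whereas the paper applies it for \emph{every} $t$ in the flow interval. Writing $P_k:=\Lambda_{k,j,n}^{-1}(p)$, the paper observes that for each fixed $t$ the map $w\mapsto (P_k/\sqrt{\phi_k'\circ\phi_k^{-1}})\circ\psi_{k,t}(w)$ is holomorphic on $\D_e$ (since $\psi_{k,t}(\D_e)\subset\D_e(\rho)$ and $\psi_{k,t}(\infty)=\infty$) and vanishes at $\infty$ (because $\deg p<j$ forces $P_k(\infty)=0$). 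The mean-value property on $\D_e$ then gives $\oint_\T(\cdot)\,|dw|=0$ identically in $t$, so the only surviving error is the $(1+\bigO(t))$ Jacobian correction. This leaves $|I_k|\lesssim\delta_n\cdot r_k^{j+1/2}e^{-nq_k(\infty)/2}\int_{\tilde D_n}|P_k\circ\psi_{k,t}|e^{-nt^2}\,dt\,|dw|$, and a single Cauchy--Schwarz converts the integral to $(\int e^{-nt^2}dt)^{1/2}\|P_k\|_{L^2(D_n^k,e^{-nR_\tau})}\le n^{-1/4}\|p\|_{n\tilde Q}$ by unitarity of $\Lambda_{k,j,n}$.

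Your Taylor-in-$t$ route, by contrast, has a genuine gap in the error analysis. The parity argument kills only the \emph{exact} linear Taylor term $t\,G_t(0,\zeta)$; it does not apply to $\bigO(t)$-type remainder estimates such as the $\bigO(t)$ in the Jacobian or the $\bigO(t\log^2 n)$ in $|\psi_{k,t}|^{2(j-n)}$, which you nonetheless lump under ``second-order correction''. More seriously, to bound the true quadratic remainder in terms of $\|p\|_{n\tilde Q}$ you must control $\partial_t^2[P_k\circ\psi_{k,t}]$ on $\T$, hence derivatives of $P_k$ itself, in terms of its weighted $L^2$-norm; your stated ``$\log^4 n/n^{3/2}$'' is written as if it were an absolute bound independent of $p$. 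This can be repaired via Cauchy estimates on $\D_e(\rho)$, but that step is nontrivial and absent from your sketch. The simpler fix is to notice that your own contour argument already works for all $t$ (nothing in it is special to $\psi_{k,0}=\mathrm{id}$), which eliminates the Taylor expansion entirely and closes the proof exactly as the paper does.
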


	\begin{proof} To ease the notation we first assume that $s=0$ and accordingly we write $Q$ instead of $\tilde{Q}$.

		For $k=1,2$ we let $P_k=P_{k,j,n,p}$ be the holomorphic function in $\D_e(\rho)$ such that $$\Lambda_{k,j,n}[P_k]=p.$$
		(See \eqref{posit}).

		Note that $P_k(z)=O(z^{\ell-j})$ as $z\to\infty$.

		The critical part in the estimation of the scalar product $(p,\chi\Phi_{j,n})_{n\tilde{Q}}$ comes from the ring-domains $\phi_k^{-1}(D_{n}^k)$ for $k=1,2$, i.e., from
		the two integrals
		$$I_k=\int_{\phi_k^{-1}(D_{n}^k)}p\overline{\Phi_{j,n}}\, e^{-nQ}\, dA,\qquad (k=1,2).$$

		To estimate the $I_k$'s, we use the Jacobian \eqref{jac} to write
		\begin{align*}I_k&=\int_{D_n^k}P_k\bar{f}_k e^{-R_{k,\tau}}\, dA\\
			&=\frac 1 \pi \int_{\tilde{D}_n}(\frac {P_k}{f_k})\circ \psi_{k,t}\cdot (|f_k|^2\cdot e^{-R_{k,\tau}})\circ \psi_{k,t}\cdot \frac {1}
			{\sqrt{2\Delta Q}}\circ \phi_k^{-1}\circ\psi_{k,t}\cdot (1+\bigO(t))\, dt|dw|.
		\end{align*}
		
		Inserting the expression \eqref{efk} for $f_k$, using \eqref{fjupp} and using $\frac{1}{2}\log \Delta Q=\re h_k$ on $C_k$, we rewrite the integral $I_k$ as
		$$r_k^{j+1/2}e^{-nq_k(\infty)/2}e^{-h_k(\infty)/2}\frac 1 {\pi\sqrt{2}}\int_{\tilde{D}_n}\frac {P_k}{\sqrt{\phi_k'\circ\phi_k^{-1}}}\circ \psi_{k,t}(w)\, e^{-nt^2}\cdot (1+\bigO(t))\, dt\,|dw|.
		$$

		But by the mean-value theorem, since $P_k(\infty)=0$
		$$\oint_\T \frac {P_k}{\sqrt{\phi_k'\circ\phi_k^{-1}}}\circ \psi_{k,t}(w)\,|dw|=
		\frac {P_k}{\sqrt{\phi_k'\circ\phi_k^{-1}}}\circ\psi_{k,t} (\infty)=0$$
		for $k=1,2$.
		
		It follows that
		$$|I_k|\lesssim r_k^{j+1/2}e^{-nq_k(\infty)/2} e^{-h_k(\infty)/2}\int_{\tilde{D}_n}\bigO(\delta_n)\cdot \Big|\frac {P_k}{\sqrt{\phi_k'\circ\phi_k^{-1}}}\circ\psi_{k,t}(w)\Big|\cdot e^{-nt^2}\, dt\, |dw|,$$
		where the implied constant depends on the maximum of $1/\sqrt{\Delta Q}$ over $C_k$.

		Since $1/\sqrt{\phi_k'}$ is bounded on $\phi_k^{-1}(\D_e(\rho))$ we obtain
		$$|I_k|\lesssim \delta_n r_k^{j+1/2}e^{-nq_k(\infty)/2}\int_{\tilde{D}_n}|P_k\circ\psi_{k,t}(w)| e^{-nt^2}\, dt\, |dw|.$$

		Using the Cauchy-Schwarz inequality we now deduce that
		\begin{align*}|I_k|&\lesssim \delta_n r_k^{j+1/2}e^{-nq_k(\infty)/2}\Bigl(\int_\R e^{-nt^2}\, dt\Bigr)^{1/2}\Bigl(\int_{D_{n}^k}|P_k|^2e^{-nR_\tau}\, dA\Bigr)^{1/2}\\
			&\lesssim \frac {\sqrt{\log n}}{n^{3/4}}r_k^{j+1/2}e^{-nq_k(\infty)/2}\|p\|_{nQ}.
		\end{align*}

		In view of Lemma \ref{lemm} it follows that
		$$|I_1+I_2|\lesssim \delta_n\|p\|_{nQ}\|\chi\Phi_{j,n}\|_{nQ}.$$
		We stress that a similar estimate is true for any $s$, by replacing $Q$ by $\tilde{Q}$.

		To estimate the contribution coming from integration over $\C\setminus(\phi_1^{-1}(D_n^1)\cup \phi_2^{-1}(D_n^2))$, we argue similarly as for the norm-computations as in the previous section. Using Lemma \ref{negl0},
		and the Cauchy-Schwarz inequality, we deduce that if $|s|\le \log n$
		\begin{align*}\Big|\int_{\C\setminus (\phi_1^{-1}(D_{n}^1)\cup \phi_2^{-1}(D_{n}^2))}p\overline{\chi\Phi_{j,n}}\, e^{-n\tilde{Q}}\, dA\,\Big|\lesssim n^{-N}\|p\|_{n\tilde{Q}}\|\chi\Phi_{j,n}\|_{n\tilde{Q}},\end{align*}
		where $N>0$ is as large as we please. Adding up the contributions, we finish the proof of the lemma.
	\end{proof}

	\subsection{Correcting $\Phi_{j,n}$ to a monic polynomial} For $j$ close to $n$, we shall now estimate the norm $\tilde{h}_{j,n}(s)$.
	We accomplish this by approximating with the quasi-polynomial $\chi\Phi_{j,n}$ using
	an estimate for the solution to a constrained $\dbar$-problem. We turn to the details.

	Let $u$ be the norm-minimal solution in $L^2_{n\tilde{Q}}$ to the $\dbar$-problem:
	\begin{enumerate}
		\item \label{1} $\dbar u=\dbar \chi\cdot \Phi_{j,n}$,
		\item \label{2} $u(z)=\bigO(z^{j-1})$ as $z\to\infty$.
	\end{enumerate}

	\begin{lem} \label{bresk} For $|s|\le \log n$ and $j=n+\bigO(\log^2 n)$ we have the estimate
		\begin{equation}\label{fagel}\|u\|_{n\tilde{Q}}^2\lesssim n^{\ell}\int_\C |\dbar\chi|^2\cdot |\Phi_{j,n}|^2e^{-n\tilde{Q}}\, dA,\end{equation}
		where $\ell$ depends only on the sup-norm $\|\omega\|_\infty$ and the implied constant depends only on $Q$.
	\end{lem}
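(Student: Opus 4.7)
The plan is to apply H\"ormander's weighted $L^2$-estimate for $\dbar$ to the compactly supported datum $v:=\dbar\chi\cdot\Phi_{j,n}$. Since $\chi=0$ on $K$ and $\chi=1$ on $\C\setminus K'$, $v$ is supported in the compact set $K'\setminus K\subset\Int(C_1)$, on which $\Delta Q\ge c_0>0$ by the choice of $K$. For $|s|\le\log n$, $\Delta\tilde Q=\Delta Q+O(\|\wfun\|_\infty\log n/n)\ge c_0/2$ on $\supp v$ for $n$ large.

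I first construct a modified weight $\psi_n$ on $\C$ enjoying the three properties: (a) $\psi_n=n\tilde Q$ on a fixed open neighbourhood $\mathcal U$ of $\supp v$; (b) $\psi_n$ is subharmonic on the locus where it is finite, with $\Delta\psi_n\gtrsim n$ on $\supp v$; (c) $\psi_n(z)\ge 2(j+1)\log|z|$ for $|z|$ large, so that any function holomorphic off $\supp v$ of finite $L^2_{\psi_n}$-norm is forced to be $O(z^{j-1})$ at infinity. Such $\psi_n$ is obtained from $n\tilde Q$ by (i) adding a smooth non-negative bump, supported disjointly from $\mathcal U$, to compensate for any failure of subharmonicity of $\tilde Q$ coming from the perturbation $-(s/n)\wfun$, and (ii) replacing $n\tilde Q$ outside a large disc by the regularized maximum with $2(j+1)\log|z|+C$ (this is possible since $n\tilde Q\gg 2j\log|z|$ near infinity, by the growth hypothesis on $Q$).

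H\"ormander's theorem then furnishes a solution $u_H$ of $\dbar u_H=v$ satisfying
$$\|u_H\|_{\psi_n}^2\le\int_\C\frac{|v|^2}{\Delta\psi_n}\,e^{-\psi_n}\,dA\lesssim\frac{1}{n}\int|v|^2\,e^{-n\tilde Q}\,dA,$$
using (a)--(b) on $\supp v$. Outside $\supp v$, $u_H$ is holomorphic, and property (c) forces $u_H(z)=O(z^{j-1})$ at infinity. In particular $u_H$ is admissible for the constrained $\dbar$-problem defining $u$, so by minimality $\|u\|_{n\tilde Q}\le\|u_H\|_{n\tilde Q}$.

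It remains to pass from $\|u_H\|_{\psi_n}$ to $\|u_H\|_{n\tilde Q}$ with only a polynomial-in-$n$ loss. The two weights agree on $\mathcal U$ and, more generally, on a large disc $B_R$ up to a bounded (compactly supported) correction coming from the bump in step (i); outside $B_R$, $u_H$ is a polynomial of degree $\le j-1$, and a direct monomial-by-monomial comparison of the two weights there shows the loss is polynomial in $n$, with exponent controlled by $\|\wfun\|_\infty$. Assembling the estimates gives the stated bound $\|u\|_{n\tilde Q}^2\lesssim n^\ell\int|\dbar\chi|^2|\Phi_{j,n}|^2 e^{-n\tilde Q}\,dA$. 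The main obstacle is precisely the simultaneous arrangement of (a)--(c) so that the tail comparison on $\C\setminus B_R$ costs only a polynomial factor in $n$; this in turn requires tracking the size of the bump in (i) against the $O((\log n)/n)$-scale of the perturbation $(s/n)\wfun$, which pins down $\ell$ as a function of $\|\wfun\|_\infty$.
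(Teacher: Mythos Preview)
Your overall strategy (H\"ormander's $L^2$-estimate plus a growth condition forcing the polynomial constraint) matches the paper's, but the weight you build is the wrong one, and this produces a genuine gap.

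You try to make $n\tilde Q$ itself subharmonic by adding a bump supported off $\mathcal U$, claiming the only obstruction is the perturbation $-(s/n)\omega$ of size $O((\log n)/n)$. But the standing hypotheses only give $\Delta Q>0$ along $\partial S^*$ and on $S$; nothing prevents $\Delta Q<0$ on open subsets of $\C\setminus S^*$ where $Q$ is finite (indeed the examples in Section~\ref{ex1} avoid this only by setting $Q=+\infty$ there). Wherever $\Delta Q<0$, your bump $b$ must satisfy $\Delta b\ge -n\Delta Q\gtrsim n$, forcing $\|b\|_\infty\gtrsim n$, and then the comparison $\|u_H\|_{n\tilde Q}^2\le e^{\|b\|_\infty}\|u_H\|_{\psi_n}^2$ costs an exponential factor, not a polynomial one. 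Your final paragraph tracks only the $(s/n)\omega$ contribution and misses this entirely. A secondary issue: $u_H$ is not a polynomial outside $B_R$ (it equals $\Phi_{j,n}$ minus a polynomial there), so ``monomial-by-monomial comparison'' does not apply as written.

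The paper sidesteps all of this by taking the weight $\phi=n\check Q_\tau+\eps\log(1+|z|^2)$ with $\tau=j/n$. The obstacle function $\check Q_\tau$ is globally subharmonic by construction (as a supremum of subharmonic functions), equals $Q$ on the support of $\dbar\chi\subset S_\tau$, and satisfies $\check Q_\tau\le Q$ everywhere. So H\"ormander gives $\|u_0\|_\phi^2\lesssim n^{-1}\int|\dbar\chi|^2|\Phi_{j,n}|^2e^{-nQ}\,dA$ directly, and since $e^{-nQ}\le e^{-n\check Q_\tau}$ the passage to $\|u_0\|_{nQ}$ is free. The polynomial loss $n^\ell$ arises only at the very end, from the trivial bound $n^{-C}\lesssim e^{s\omega}\lesssim n^C$ for $|s|\le\log n$, which converts between $nQ$ and $n\tilde Q$. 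The growth $\phi(z)\sim 2(j+\eps)\log|z|$ then forces the entire function $\chi\Phi_{j,n}-u_0$ to be a polynomial of degree $\le j-1$, giving the constraint. The moral: use $\check Q_\tau$, not $\tilde Q$, as the H\"ormander weight.
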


	\begin{proof} We apply the technique of H\"{o}rmander estimates (e.g.~\cite{HW}) but with a slight extra twist.

		First apply the estimate in \cite[Section IV.4.2]{Hor} to the strictly subharmonic weight function
		$$\phi(z)=n\check{Q}_\tau(z)+\eps\log(1+|z|^2),$$
		where we take $\tau=j/n$ and $0<\eps<1/2$.

		For large $n$, $\dbar\chi$ is supported in the $\tau$-droplet $S_\tau=\{\check{Q}_\tau=Q\}$.
		
		It follows
		from \cite[(4.2.6)]{Hor} that the $L^2_\phi$-minimal solution $u_0$ to the problem $\dbar u_0=\dbar(\chi\Phi_{j,n})$ satisfies
		\begin{equation}\label{phino}\int_\C|u_0|^2e^{-\phi}\, dA\lesssim \int_\C\frac {|\dbar\chi|^2|\Phi_{j,n}|^2}{\Delta \phi}e^{-\phi}\, dA.\end{equation}

		Using the growth of $(Q-\check{Q}_\tau)(z)$ near infinity, it follows that
		\begin{equation}\label{gruk0}\int_\C|u_0|^2e^{-nQ}\, dA\lesssim \frac 1 n \int_\C|\dbar\chi|^2|\Phi_{j,n}|^2e^{-nQ}\, dA.\end{equation}

		Next, since the function $\wfun$ is bounded, there is a constant $C$ such that, as $n\to\infty$
		$$n^{-C}\lesssim e^{s\wfun}\lesssim n^C,\qquad (|s|\le\log n).$$
		It follows from this and \eqref{gruk0} that
		\begin{equation}\label{gruk1}\int_\C|u_0|^2e^{-n\tilde{Q}}\, dA\lesssim  n^\ell\int_\C|\dbar\chi|^2|\Phi_{j,n}|^2e^{-n\tilde{Q}}\, dA,\end{equation}
		where $\ell$ is an absolute constant.

		Now note that the function $\Psi_{j,n}:=u_0-\dbar(\chi\Phi_{j,n})$ is entire. Moreover, by our choice of $\tau$ and \eqref{phino}, we have that $$\int_\C|\Psi_{j,n}(z)|^2\frac {dA(z)}{(1+|z|^2)^{j+\eps}}<\infty.$$
		This implies that $\Psi_{j,n}(z)/z^{j-1}$ has a removable singularity at infinity, i.e., $\Psi_{j,n}(z)$ must be a polynomial of degree at most $j-1$.

		It follows that
		$u_0$ satisfies both \eqref{1} and \eqref{2} above.
	\end{proof}

	Since $j=n+\bigO(\log^2 n)$, we know from Lemma \ref{lemm} that the main contribution to the norm of $\chi\Phi_{j,n}$ comes from integration over $\phi_1^{-1}(D_n^1)\cup\phi_2^{-1}(D_n^2)$.

	Moreover, for large $n$ we have (by choosing $\delta>0$ somewhat smaller if necessary) that the support of $\dbar\chi$ is contained in the set $\calA_{1,n}$ in Lemma \ref{negl0}.
	It hence follows easily from Lemma \ref{bresk} and Lemma \ref{negl0} that
	\begin{equation}\label{bla}\|u\|_{n\tilde{Q}}^2\lesssim n^{-N_1+\ell}\|\chi\Phi_{j,n}\|_{n\tilde{Q}}^2,\end{equation}
	$N_1>0$ is arbitrary and the implied constant is uniform for $|s|\le\log n$.

	We next define an entire function $E_{j,n}$ by
	$$E_{j,n}(z)=\chi(z)\Phi_{j,n}(z)-u(z),$$
	and observe that $E_{j,n}(z)=z^j+O(z^{j-1})$ as $z\to\infty$. Hence $E_{j,n}(z)$ is a monic polynomial of degree $j$,
	and (with $N=N_1-\ell$)
	$$\|E_{j,n}-\chi \Phi_{j,n}\|_{n\tilde{Q}}\lesssim n^{-N}\|\chi\Phi_{j,n}\|_{n\tilde{Q}},\qquad (n\to\infty).$$

	It follows that
	$$\|E_{j,n}\|_{n\tilde{Q}}=\|\chi\Phi_{j,n}\|_{n\tilde{Q}}\cdot (1+\bigO(n^{-N})).$$

	Moreover, by Lemma \ref{appo}, if $p(z)$ is a polynomial of degree at most $j-1$, then
	\begin{align}|(p,E_{j,n})_{n\tilde{Q}}|&\le|(p,\chi \Phi_{j,n})_{n\tilde{Q}}|+|(p,E_{j,n}-\chi \Phi_{j,n})_{n\tilde{Q}}|\nonumber \\
		&\lesssim \delta_n\|p\|_{n\tilde{Q}}\|E_{j,n}\|_{n\tilde{Q}},\label{malign}
	\end{align}
	where the implied constant is uniform for $|s|\le \log n$.

	Now let $\tilde{\pi}_{j-1,n}$ be the orthogonal projection of $L^2_{n\tilde{Q}}$ on the subspace of polynomials of degree at most $j-1$. We get that
	$$E_{j,n}-\tilde{\pi}_{j-1,n}(E_{j,n}) = \tilde{p}_{j,n},$$
	 and also by \eqref{malign} we have $\|\tilde{\pi}_{j-1,n}(E_{j,n})\|_{n\tilde{Q}}\lesssim\delta_n\|E_{j,n}\|_{n\tilde{Q}}$, so
	$$\|\tilde{p}_{j,n}-E_{j,n}\|_{n\tilde{Q}}\lesssim \delta_n\|E_{j,n}\|_{n\tilde{Q}}.$$

	Altogether, we have found that
	\begin{equation}\label{bpo}\|\tilde{p}_{j,n}-\chi \Phi_{j,n}\|_{n\tilde{Q}}\lesssim \delta_n\|\chi \Phi_{j,n}\|_{n\tilde{Q}}.\end{equation}
	As a consequence,
	$$\|\tilde{p}_{j,n}\|_{n\tilde{Q}}=\|\chi\Phi_{j,n}\|_{n\tilde{Q}}\cdot (1+\bigO(\delta_n)).$$

	We obtain the following theorem.

	\begin{thm} \label{insert}  If $j=n+\bigO(\log^2 n)$ then
			$$\tilde{h}_{j,n}(s)=\sqrt{\frac{2\pi}{n}}( r_1^{2j+1}e^{-nq_1(\infty)}e^{-h_1(\infty)}+ e^s r_2^{2j+1}e^{-nq_2(\infty)}e^{-h_2(\infty)})\cdot (1+\bigO(\delta_n)),$$
		where the implied constant is uniform for $|s|\le \log n$.
	\end{thm}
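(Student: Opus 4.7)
The theorem statement is essentially a summary of everything that has been derived in the previous two subsections, so my plan is to simply assemble the pieces rather than introduce new ideas. The key observation is that $\tilde{h}_{j,n}(s) = \|\tilde{p}_{j,n}\|_{n\tilde{Q}}^2$ and Lemma \ref{lemm} already computes $\|\chi\Phi_{j,n}\|_{n\tilde{Q}}^2$ with exactly the claimed leading-order shape, uniformly for $|s|\le \log n$. What remains is to check that the monic quasi-polynomial $\chi\Phi_{j,n}$ and the true monic orthogonal polynomial $\tilde{p}_{j,n}$ have norms that differ only by a factor $1 + \bigO(\delta_n)$.

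First, I would construct $E_{j,n} := \chi\Phi_{j,n} - u$, where $u$ is the norm-minimal solution to $\dbar u = \dbar\chi\cdot \Phi_{j,n}$ with $u(z) = \bigO(z^{j-1})$ at infinity. Lemma \ref{bresk} gives the H\"ormander-type bound $\|u\|_{n\tilde Q}^2 \lesssim n^\ell \int |\dbar \chi|^2 |\Phi_{j,n}|^2 e^{-n\tilde Q} dA$, and since the support of $\dbar \chi$ lies in the annular region $\calA_{1,n}$ covered by Lemma \ref{negl0}, the right-hand side is dominated by $n^{-N_1+\ell}\|\chi\Phi_{j,n}\|_{n\tilde Q}^2$ for any $N_1$. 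Hence $E_{j,n}$ is a monic polynomial of degree $j$ with $\|E_{j,n} - \chi\Phi_{j,n}\|_{n\tilde Q} \lesssim n^{-N}\|\chi\Phi_{j,n}\|_{n\tilde Q}$.

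Second, I would subtract from $E_{j,n}$ its orthogonal projection onto polynomials of degree at most $j-1$ to obtain $\tilde{p}_{j,n}$. Lemma \ref{appo} controls $|(p,\chi\Phi_{j,n})_{n\tilde Q}|$ by $\delta_n \|p\|_{n\tilde Q}\|\chi\Phi_{j,n}\|_{n\tilde Q}$; combined with the previous step this yields the projection estimate $\|\tilde \pi_{j-1,n}(E_{j,n})\|_{n\tilde Q} \lesssim \delta_n \|E_{j,n}\|_{n\tilde Q}$. Therefore
\[
\|\tilde p_{j,n} - \chi\Phi_{j,n}\|_{n\tilde Q} \lesssim \delta_n \|\chi\Phi_{j,n}\|_{n\tilde Q},
\]
so $\|\tilde p_{j,n}\|_{n\tilde Q}^2 = \|\chi\Phi_{j,n}\|_{n\tilde Q}^2 \cdot (1+\bigO(\delta_n))$. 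Substituting the expression from Lemma \ref{lemm} produces the claimed asymptotic for $\tilde h_{j,n}(s)$.

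The only delicate point, and the one I would emphasize, is uniformity of every $\bigO$-constant for $|s|\le \log n$. The potentially dangerous factor is the $n^\ell$ appearing in Lemma \ref{bresk}, which arises from replacing $e^{-nQ}$ by $e^{-n\tilde Q}$ through the bound $n^{-C}\lesssim e^{s\wfun}\lesssim n^C$. This factor is harmless here because Lemma \ref{negl0} provides an arbitrarily large negative power $n^{-N_1}$ from the exponential suppression of $|\Phi_{j,n}|^2 e^{-n\tilde Q}$ off the rings $\phi_k^{-1}(D_n^k)$, which absorbs any fixed polynomial in $n$. Similarly, in Lemma \ref{appo} the $\delta_n$ gain comes from the mean-value identity $\oint_\T P_k/\sqrt{\phi_k'\circ\phi_k^{-1}}\circ\psi_{k,t}\,|dw| = 0$ at $\infty$, which is insensitive to $s$. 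Tracking these constants carefully through the three lemmas is the main bookkeeping step, but no new analytic obstacle arises.
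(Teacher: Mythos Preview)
Your proposal is correct and follows essentially the same route as the paper: construct the monic polynomial $E_{j,n}=\chi\Phi_{j,n}-u$ via the $\dbar$-estimate of Lemma~\ref{bresk}, absorb the $n^{\ell}$ loss using Lemma~\ref{negl0}, then project off the lower-degree part using Lemma~\ref{appo} to obtain $\|\tilde p_{j,n}\|_{n\tilde Q}=\|\chi\Phi_{j,n}\|_{n\tilde Q}(1+\bigO(\delta_n))$ and conclude with Lemma~\ref{lemm}. Your discussion of the uniformity in $|s|\le\log n$ is also in line with how the paper handles it.
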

	
	\subsection{Asymptotics for wavefunctions in the bifurcation regime}
	From \eqref{bpo} we can also obtain pointwise estimates for the monic orthogonal polynomial $p_{j,n}(z)$.
	
	However, results become more transparent if we pass to the $j$th weighted orthonormal polynomial (or \textit{wavefunction})
	\begin{equation}\label{wavefunction}w_{j,n}(z)=w_{j,n}(z;s):=\gamma_{j,n}\tilde{p}_{j,n}(z)e^{-n\tilde{Q}(z)/2}\end{equation}
	where the leading coefficient $\gamma_{j,n}>0$ is chosen so that $\{w_{j,n}\}_{j=0}^{n-1}$ is an orthonormal basis for the subspace
	$$\calW_n:=\{p(z)e^{-n\tilde{Q}(z)/2}\,;\, p\,\text{is a polynomial of degree at most}\, n-1\}$$ of $L^2(\C, dA)$.
	
	We will here focus on asymptotics in the bifurcation regime.
	
	In this case, all interesting asymptotics takes place in the domain $\calB_1\cup \Ext C_1$ where $\calB_1=\{z\,;\,\dist(z,C_1)\le M\delta_n\}$ for a suitable (large) constant $M$.
	
	Intuitively, $|w_{j,n}(z)|$ peaks along two curves, one near $C_1$ and another near $C_2$, and is negligible outside a small neighbourhood of $C_1\cup C_2$, and the maximum size is of order $\bigO(n^{1/4})$.
	
	To describe the situation more concretely, we first note that, by Theorem \ref{insert}, the leading coefficient $\gamma_{j,n}$ satisfies
	$\gamma_{j,n}=c_{j,n}^{-1/2}\cdot (1+\bigO(\delta_n))$, where
	\begin{align}c_{j,n}:=\sqrt{\frac{2\pi}{n}}( r_1^{2j+1}e^{-nq_1(\infty)}e^{-h_1(\infty)}+ e^s r_2^{2j+1}e^{-nq_2(\infty)}e^{-h_2(\infty)}).\end{align}
	
	It is therefore natural to approximate $w_{j,n}(z)$ by the function
	\begin{equation}\label{wwf}F_{j,n}(z):=c_{j,n}^{-1/2}\Phi_{j,n}(z)e^{-n\tilde{Q}(z)/2}.\end{equation}

	We have the following theorem.

	\begin{thm} \label{gondor} Suppose that $j=n+\bigO(\log^2 n)$. 
		Then for
		$z\in \calB_1\cup\Ext C_1$
		\begin{equation}\label{fjodor}|w_{j,n}(z)-F_{j,n}(z)|\lesssim (\sqrt{\log n})
			e^{-n(Q-\check{Q}_{\tau})(z)/2}.\end{equation}
	\end{thm}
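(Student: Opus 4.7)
The plan is to upgrade the $L^2$-estimate implicit in \eqref{bpo} to the claimed pointwise estimate by means of a weighted Bernstein--Walsh inequality for polynomials of degree at most $j$.

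On $\calB_1\cup\Ext C_1$ the cutoff $\chi$ is identically one, so $\Phi_{j,n}$ is holomorphic there, and I would set $g := \gamma_{j,n}\tilde p_{j,n} - c_{j,n}^{-1/2}\Phi_{j,n}$, a holomorphic function on this region, so that $w_{j,n}-F_{j,n} = g\,e^{-n\tilde Q/2}$. With $\tau = j/n$, the identity $-n\tilde Q = -n\check Q_\tau -n(Q-\check Q_\tau)+s\wfun$ and uniform boundedness of $s\wfun(z)$ show that \eqref{fjodor} is equivalent to
\[
|g(z)|^2 e^{-n\check Q_\tau(z)}\lesssim \log n,\qquad z\in\calB_1\cup\Ext C_1.
\]
The $L^2$-bound $\|g\|_{n\tilde Q}\lesssim \delta_n$ follows by combining Theorem \ref{insert} (which yields $\gamma_{j,n} = c_{j,n}^{-1/2}(1+\bigO(\delta_n))$), Lemma \ref{lemm}, and \eqref{bpo}.

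Since $g$ is not a genuine polynomial, I would approximate it by one using the $\dbar$-corrected monic polynomial $E_{j,n}$ from Section \ref{bifur1}: writing
\[
g = \gamma_{j,n}\bigl(\tilde p_{j,n}-E_{j,n}\bigr) + \gamma_{j,n}\bigl(E_{j,n}-\Phi_{j,n}\bigr) + \bigl(\gamma_{j,n}-c_{j,n}^{-1/2}\bigr)\Phi_{j,n},
\]
the first summand is a polynomial of degree $\leq j-1$ with $L^2_{n\tilde Q}$-norm $\lesssim \delta_n$; applying the weighted Bernstein--Walsh--Saff--Totik pointwise inequality
\[
|p(z)|^2 e^{-n\check Q_\tau(z)} \lesssim n\,\|p\|_{n\tilde Q}^2\qquad (\deg p\leq j,\ \tau = j/n),
\]
which is a classical tool in weighted potential theory (cf.~\cite{ST} or \cite[Section 3]{AC}), yields $\lesssim n\delta_n^2 = \log n$. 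The second summand has super-algebraically small norm by Lemma \ref{bresk} and is negligible. The third summand is handled by direct substitution of \eqref{pjn}: using the obstacle identity $\check Q_\tau = 2\tau\log|\phi_{1,\tau}|+\re q_{1,\tau}$ on $\Ext S^*_\tau$ together with $n\tau = j$, one sees that $|\Phi_{j,n}(z)|^2 e^{-n\check Q_\tau(z)}\lesssim c_{j,n}/\sqrt n$ uniformly, and together with $|\gamma_{j,n}-c_{j,n}^{-1/2}|^2\lesssim \delta_n^2/c_{j,n}$ this contributes $\lesssim \delta_n^2/\sqrt n\ll \log n$.

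The main obstacle is the weighted polynomial pointwise inequality in the form stated above, valid uniformly on $\calB_1\cup\Ext C_1$ and across the $C^{1,1}$-boundary $\d S^*_\tau$. A more self-contained route is to exploit the local harmonicity of $\check Q_\tau$ off $S^*_\tau$: writing $\check Q_\tau = \re\psi_\tau$ locally for a holomorphic $\psi_\tau$ on each simply connected piece of $\Ext S^*_\tau$, the function $|g|^2 e^{-n\check Q_\tau} = |g\,e^{-n\psi_\tau/2}|^2$ is subharmonic, and the mean-value inequality on a disk of radius $c/\sqrt n$ produces the same sup-$L^2$ inequality; near $\d S^*_\tau$ the residual discrepancy is absorbed using that $Q-\check Q_\tau$ vanishes quadratically there. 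One should also verify that the norm estimates of Section \ref{bifur1}, proved there for $j = n+\bigO(\log^2 n)$, extend to the wider range $j = n+\bigO(\sqrt{n\log n})$ of the present statement; this is a routine modification of the saddle-point arguments in Lemma \ref{negl0} and Lemma \ref{lemm}.
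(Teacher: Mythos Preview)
Your decomposition via $E_{j,n}$ is more elaborate than what the paper does, and the handling of the third summand contains a real gap. You write that ``using the obstacle identity $\check Q_\tau = 2\tau\log|\phi_{1,\tau}|+\re q_{1,\tau}$ together with $n\tau=j$, one sees that $|\Phi_{j,n}(z)|^2 e^{-n\check Q_\tau(z)}\lesssim c_{j,n}/\sqrt n$ uniformly.'' But $\Phi_{j,n}$ is built from the \emph{fixed} data $(\phi_1,q_1)$ at $\tau=1$ (see \eqref{pjn}), not from $(\phi_{1,\tau},q_{1,\tau})$; substituting gives
\[
|\Phi_{j,n}(z)|^2 e^{-n\check Q_\tau(z)}=\frac{r_1^{2j+1}}{e^{nq_1(\infty)}e^{h_1(\infty)}}\,|\phi_1'(z)|\,e^{\re h_1(z)}\,\Bigl|\frac{\phi_1(z)}{\phi_{1,\tau}(z)}\Bigr|^{2j}e^{n\re(q_1(z)-q_{1,\tau}(z))},
\]
and the last two factors do not collapse. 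Rewriting the exponent as $n(\check Q-\check Q_\tau)(z)+2(j-n)\log|\phi_1(z)|$ and using $\tfrac{\d}{\d\tau}\check Q_\tau=2\log|\phi_{1,\tau}|$, one finds it equals $2n\int_1^\tau(\log|\phi_1|-\log|\phi_{1,s}|)\,ds=O\bigl(n(\tau-1)^2\bigr)=O(\log n)$, i.e.\ a multiplicative factor $n^{C}$ with $C$ a geometric constant that need not be small. So neither your stated bound $c_{j,n}/\sqrt n$, nor even the bound $n\,c_{j,n}$ that your argument actually requires, follows from this computation.

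The paper's proof sidesteps the issue by never isolating $\Phi_{j,n}$ against $\check Q_\tau$. It works directly with the holomorphic difference $q:=\tilde p_{j,n}-\Phi_{j,n}$ on a neighbourhood of $\Ext\Gamma_\tau$ (where $\Gamma_\tau=\d_*S_\tau\subset\calB_1$): the local pointwise--$L^2$ estimate yields $|q|\,e^{-n\tilde Q/2}\lesssim\sqrt n\,\|\tilde p_{j,n}-\chi\Phi_{j,n}\|_{n\tilde Q}\lesssim\sqrt{\log n}\,\sqrt{c_{j,n}}$ on $\calB_1$, and then the maximum principle applied to the subharmonic function $\tfrac1n\log|q|^2$ (which grows no faster than $2\tau\log|z|$ at infinity, since both terms are monic of degree $j$) against the majorant $\check Q_\tau$ propagates the bound to all of $\Ext\Gamma_\tau$. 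The residual discrepancy between $\gamma_{j,n}$ and $c_{j,n}^{-1/2}$ is then absorbed through $(\gamma_{j,n}-c_{j,n}^{-1/2})\,\tilde p_{j,n}$, to which Lemma~\ref{basic} applies directly because $\tilde p_{j,n}$ is a \emph{genuine} polynomial of degree $j$. Your ``alternative self-contained route'' via subharmonicity and the mean-value inequality is essentially this argument; applying it to $q$ rather than to the pieces of your decomposition recovers the paper's proof and avoids the problematic third term altogether.
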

	
	\begin{proof} First assume that $z\in\calB_1$.
		From a standard pointwise-$L^2$ estimate (see for example \cite[Lemma 2.4]{A})
		$$|\tilde{p}_{j,n}(z)-\Phi_{j,n}(z)|e^{-n\tilde{Q}(z)/2}\lesssim \sqrt{n} \|\tilde{p}_{j,n}-\chi\Phi_{j,n}\|_{n\tilde{Q}}.$$
		From \eqref{bpo} and Theorem \ref{insert} we thus have
		\begin{equation}\label{bby}|p_{j,n}(z)-\Phi_{j,n}(z)|e^{-n\tilde{Q}(z)/2}\lesssim \sqrt{\log n}\sqrt{c_{j,n}}
			.\end{equation}
		
		Now consider the outer boundary of the $\tau$-droplet,
		$$\Gamma_{\tau}=\d_* S_{\tau}.$$
		For $n$ large enough we have $\Gamma_\tau\subset \calB_1$.

		For $z\in\Ext \Gamma_{\tau}$ we introduce the analytic function
		$$q(z)=\tilde{p}_{j,n}(z)-\Phi_{j,n}(z)$$ and the weighted counterpart
		$f=q\cdot e^{-n\tilde{Q}/2}$. Observe that
		$$\frac 1 n \log|q(z)|^2=\frac 1 n\log |f(z)|^2+\tilde{Q}(z),$$
		where the left hand side $u(z):=\frac 1 n\log|q(z)|^2$ is subharmonic on $\Ext(\Gamma_{\tau})$ and $u(z)\lesssim \tau \log|z|^2$ as $z\to\infty$.
		
		On $\Gamma_{\tau}$ we have $u\le (\const+K_n )/n+Q$ where $K_n=\log(c_{j,n}\log n)$.
		
		A suitable version of the maximum principle implies that $u\le (\const+K_n)/n+\hat{Q}_{\tau}$ on $\Ext(\Gamma_\tau)$, so that \eqref{bby} holds in this case as well.
		
		Dividing through by $\sqrt{c_{j,n}}$ in \eqref{bby} and recalling that $\gamma_{j,n}=c_{j,n}^{-1/2}\cdot (1+\bigO(\delta_n))$, we finish the proof.
	\end{proof}
	
	\begin{rem} From the form of $\Phi_{j,n}(z)$ (see \eqref{pjn}, \eqref{altpjn}) it is not hard to see that the maximum size of $w_{j,n}(z)$ is $\bigO(n^{1/4})$.
		Since $\sqrt{\log n}\ll n^{1/4}$, $F_{j,n}(z)$ gives a good approximation near the peak-set of $w_{j,n}(z)$.
	\end{rem}

		\subsection{Asymptotics for lower degree orthogonal polynomials} \label{alo} We shall now consider lower order terms $\tilde{h}_{j,n}(s)=\|\tilde{p}_{j,n}\|_{n\tilde{Q}}^2$ for $j\le n-\log^2 n$. Our goal is to prove that $\log(\tilde{h}_{j,n}(s)/\tilde{h}_{j,n}(0))$ is ``negligible'' as $n\to\infty$.

	For given $\tau\le 1$ we write $\check{Q}_\tau$ for the obstacle function in potential $Q$ which increases near infinity as
	$$\check{Q}_\tau(z)=2\tau\log|z|+\bigO(1), \qquad (z\to\infty).$$
	We refer to \cite[Section 3]{AC} for details about $\check{Q}_\tau$.

	We recall the following pointwise-$L^2$ estimate from \cite[Lemma 3.7]{AC}.

	\begin{lem}\label{basic} Let $p(z)$ be a polynomial of degree $j\le n$ and let $\tau =  \frac{j}{n}$. Then there is a constant $C$ (depending on $Q$) such that for all $z\in\C$
		$$|p(z)|e^{-nQ(z)/2}\le C\sqrt{n}\|p\|_{nQ}e^{-n(Q-\check{Q}_\tau)(z)/2}.$$
	\end{lem}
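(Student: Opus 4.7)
The plan is to establish the equivalent pointwise bound
\begin{equation*}
|p(z)|^2 e^{-n\check{Q}_\tau(z)} \leq C^2 n\,\|p\|_{nQ}^2, \qquad z \in \C,
\end{equation*}
via a two-step argument: a maximum-principle reduction to the coincidence set $S_\tau^* := \{Q = \check{Q}_\tau\}$, followed by a weighted sub-mean value estimate on $S_\tau^*$. Set $u(z) := |p(z)|^2 e^{-n\check{Q}_\tau(z)}$. On $\C \setminus S_\tau^*$ the obstacle $\check{Q}_\tau$ is harmonic, so $\log u = 2\log|p| - n\check{Q}_\tau$ is subharmonic there as a sum of the subharmonic $2\log|p|$ and a harmonic term; hence $u$ itself is subharmonic on $\C \setminus S_\tau^*$. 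Since $\deg p \leq j = n\tau$ while $\check{Q}_\tau(z) = 2\tau\log|z| + \bigO(1)$ at infinity, $u$ remains bounded at $\infty$, so Phragm\'en--Lindel\"of on the unbounded domain gives
\begin{equation*}
\sup_{\C} u \leq \max\bigl(\sup_{S_\tau^*}|p|^2 e^{-nQ},\ u(\infty)\bigr),
\end{equation*}
using $\check{Q}_\tau = Q$ on $S_\tau^*$.

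For the sub-mean step, fix $z_0 \in S_\tau^*$. Since $S_\tau^* \subset S^*$ (by monotonicity of obstacle functions in $\tau$) and $Q \in C^\infty$ on a neighborhood of $S^*$ by the standing hypothesis of Section~\ref{cge}, a second-order Taylor expansion gives, uniformly for $z_0 \in S_\tau^*$,
\begin{equation*}
nQ(w) = nQ(z_0) + 2n\re[\partial Q(z_0)(w - z_0)] + \bigO(n|w - z_0|^2).
\end{equation*}
The key idea is to absorb the linear part by introducing the holomorphic function $q(w) := p(w)\exp\bigl(-n\,\partial Q(z_0)(w - z_0)\bigr)$. Then $|q(z_0)| = |p(z_0)|$ and $|q(w)|^2 = |p(w)|^2 e^{-2n\re[\partial Q(z_0)(w - z_0)]}$; applying the sub-mean value inequality to the subharmonic $|q|^2$ on the disk $D := D(z_0, 1/\sqrt{n})$ (on which the Taylor remainder is $\bigO(1)$), and multiplying and dividing by $e^{-nQ(w)}$, yields
\begin{equation*}
|p(z_0)|^2 e^{-nQ(z_0)} = |q(z_0)|^2 e^{-nQ(z_0)} \leq \frac{Cn}{\pi}\int_D |p(w)|^2 e^{-nQ(w)}\, dA(w) \leq Cn\|p\|_{nQ}^2.
\end{equation*}
This establishes the desired inequality on $S_\tau^*$.

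The only minor obstacle is controlling the limit $u(\infty) = |a_j|^2 e^{-n\gamma_\tau}$ (with $a_j$ the leading coefficient of $p$ and $\gamma_\tau$ the Robin constant of the $\tau$-droplet) appearing in the reduction. This can be handled by applying the same absorption-plus-sub-mean trick in a ring $\{R \leq |z| \leq 2R\}$ at large $R$, where $\check{Q}_\tau$ is harmonic and admits a local holomorphic primitive, producing an analogous $L^2$ bound that controls the leading coefficient; alternatively one invokes the standard Robin-constant estimate for leading coefficients of polynomials in weighted $L^2$ spaces. Modulo this bookkeeping, the substantive work is the absorption trick at interior points of $S_\tau^*$, which hinges on exactly the $C^2$-smoothness of $Q$ near the coincidence set.
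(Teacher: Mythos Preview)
The paper does not prove this lemma; it simply cites \cite[Lemma 3.7]{AC}. Your argument is the standard one (and essentially what \cite{AC} does): a weighted Bergman sub-mean-value inequality on the coincidence set $S_\tau^*$, combined with a subharmonicity/maximum-principle argument to propagate the bound to $\C\setminus S_\tau^*$. The absorption trick with $q(w)=p(w)\exp(-n\,\partial Q(z_0)(w-z_0))$ is exactly right, and your claim $S_\tau^*\subset S^*$ holds because $\check{Q}_\tau$ is non-decreasing in $\tau$.

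One simplification: the ``minor obstacle'' you flag at $u(\infty)$ is not actually an obstacle. Since $j=n\tau$ exactly, the change of variable $z=1/w$ turns $u(z)=|p(z)|^2e^{-n\check{Q}_\tau(z)}$ into $|\tilde p(w)|^2e^{-n\tilde h(w)}$ with $\tilde p(w)=w^jp(1/w)$ polynomial and $\tilde h$ harmonic near $0$; hence $\log u$ extends subharmonically across $\infty$, and the maximum principle on $\hat{\C}\setminus S_\tau^*$ directly gives $\sup_\C u\le\sup_{S_\tau^*}|p|^2e^{-nQ}$ without any separate treatment of the leading coefficient. Your suggested workaround (sub-mean on a large ring) would run into trouble because $e^{-nQ}\ll e^{-n\check{Q}_\tau}$ there, so the $L^2_{nQ}$-norm does not directly control the ring integral; the subharmonic-extension observation bypasses this entirely.
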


		Let $\mathcal{E}$ be a small but fixed neighbourhood of the droplet $S$ on which $\omega =0$.
		
		\begin{lem}\label{small_out} If $p(z)$ is a polynomial of degree $j\le n- \log^2 n$ and $|s|\le \log n$ then
			$$\int_{\C\setminus \calE}|p|^2e^{-n\tilde{Q}}\, dA\lesssim n^{-N} \|p\|_{n\tilde{Q}}^2 \,,$$
			where $N>0$ is arbitrary and the implied constant depends only on $N$.
		\end{lem}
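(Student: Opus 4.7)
The plan is to combine the pointwise $L^{2}$ bound of Lemma~\ref{basic} with a uniform potential-theoretic lower bound on $Q-\check{Q}_\tau$ outside $\calE$. First I would observe that, since $\omega$ is bounded (say $\|\omega\|_\infty\le M_0$) and $|s|\le\log n$, the factor $e^{s\omega}$ satisfies $n^{-M_0}\le e^{s\omega}\le n^{M_0}$; hence $e^{-n\tilde Q}\le n^{M_0}e^{-nQ}$ pointwise while $\|p\|_{nQ}^{2}\le n^{M_0}\|p\|_{n\tilde Q}^{2}$. Applying Lemma~\ref{basic} with $\tau=j/n$ then yields
\[
\int_{\C\setminus\calE}|p|^{2}e^{-n\tilde Q}\,dA\;\lesssim\; n^{2M_0+1}\,\|p\|_{n\tilde Q}^{2}\int_{\C\setminus\calE}e^{-n(Q-\check{Q}_\tau)}\,dA,
\]
so everything reduces to showing that $\int_{\C\setminus\calE}e^{-n(Q-\check{Q}_\tau)}\,dA=\bigO(n^{-N_0})$ for arbitrarily large $N_0$, uniformly in $\tau\le 1-\log^{2}n/n$.

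The heart of the matter is the potential-theoretic estimate
\[
(Q-\check{Q}_\tau)(z)\;\ge\;c(1-\tau),\qquad z\in(\C\setminus\calE)\cap\{|z|\le R\},
\]
for each fixed large $R$ and some $c>0$ independent of $\tau$ close to $1$. Writing $Q-\check{Q}_\tau=(Q-\check{Q})+(\check{Q}-\check{Q}_\tau)$, the first summand is $\ge 0$, so the work is to bound the second from below by $c(1-\tau)$. Using that $\check{Q}_\tau$ is monotone increasing in $\tau$, the difference $\check{Q}-\check{Q}_\tau$ is nonnegative, harmonic off $S^{*}\cup S_\tau$, and grows like $2(1-\tau)\log|z|+\bigO(1)$ at infinity; a maximum principle argument then gives $\check{Q}-\check{Q}_\tau\gtrsim (1-\tau)$ on compact subsets of $\C\setminus S^{*}$. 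The delicate point is the outpost curve $C_{2}\subset\C\setminus\calE$, where $Q-\check{Q}$ vanishes identically; here I would invoke the explicit representation \eqref{idem} of $\check{Q}$ on $\Ext C_{1}$, together with the analogous representation for $\check{Q}_\tau$ (whose coincidence set $S_\tau$ no longer reaches $C_{2}$ once $\tau<1$), to obtain a derivative formula $\partial_\tau\check{Q}_\tau(z)\to 2\,g_{\Ext S}(z,\infty)>0$ as $\tau\uparrow 1$ uniformly on $C_{2}$. This yields $\check{Q}(z)-\check{Q}_\tau(z)\ge 2(1-\tau)\log(r_{2}/r_{1})+\bigO\bigl((1-\tau)^{2}\bigr)$ on $C_{2}$ as required.

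Combined with $1-\tau\ge\log^{2}n/n$, the bound above gives $e^{-n(Q-\check{Q}_\tau)(z)}\le e^{-c\log^{2}n}=n^{-c\log n}$ on the bounded portion $\{|z|\le R\}\setminus\calE$, which beats any fixed power of $n$. The tail integral over $\{|z|>R\}$ is handled by the confining growth of $Q$ together with $Q-\check{Q}_\tau\ge 2(1-\tau)\log|z|+\bigO(1)$ at infinity, which makes the contribution exponentially small in $n$ once $R$ is chosen large enough; adding the two pieces finishes the proof. The main obstacle will be the potential-theoretic estimate near $C_{2}$: monotonicity of $\check{Q}_\tau$ in $\tau$ gives the right sign immediately, but extracting the uniform linear rate $c(1-\tau)$ as $\tau\uparrow 1$ at the outpost, where the coincidence set itself jumps, demands either the explicit exterior formula \eqref{idem} or a careful Green's function comparison.
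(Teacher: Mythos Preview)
Your proposal is correct and follows essentially the same route as the paper. Both arguments reduce, via Lemma~\ref{basic} and the trivial bound $e^{s\omega}\le n^{M_0}$, to estimating $\int_{\C\setminus\calE}e^{-n(Q-\check{Q}_\tau)}\,dA$, and both pivot on the fact that near the outpost $C_2$ one has $\check{Q}(z)-\check{Q}_\tau(z)=2(1-\tau)\log|\phi_1(z)|+\bigO((1-\tau)^2)$; the paper obtains this by directly citing \cite[Lemma 4.4]{AC} (the formula \eqref{out_est}), while you propose to derive it from the derivative formula $\partial_\tau\check{Q}_\tau(z)\to 2\log|\phi_1(z)|$, which is the same thing unpacked. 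The only cosmetic difference is that the paper phrases the resulting decay as $|\phi_1(z)|^{2(j-n)}\le c^{-2\log^2 n}$ with $c=\inf_{\mathcal N}|\phi_1|>1$, whereas you phrase it as $e^{-cn(1-\tau)}\le e^{-c\log^2 n}$; these are identical once one notes $\log|\phi_1|=\log(r_2/r_1)$ on $C_2$.
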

		
		\begin{proof}
			It follows from \cite[Lemma 4.4]{AC} that
			\begin{equation}\label{out_est}
				e^{n(\check{Q}_\tau(z)-\check{Q}(z))} = \big(1 + \bigO(n(1-\tau)^2)\big) \cdot |\phi_1(z)|^{2(j-n)},
			\end{equation}
			as $n\to \infty$, locally uniformly on $\C\setminus \calE$.
			
			Using Lemma \ref{basic} we obtain
			$$
			\int_{\C\setminus \calE}|p|^2e^{-n\tilde{Q}}\, dA \lesssim \sqrt{n} e^{s} \|{p\|}_{n\tilde{Q}}^2 \int\limits_{\C\setminus \calE} e^{-n(Q(z)-\check{Q}_\tau(z))} dA.
			$$
			
			On a neighbourhood $\mathcal{N}$ of the outpost we use that  $\check{Q}(z)-Q(z) \leq 0$ and \eqref{out_est} to obtain
			$$
			\int\limits_{\mathcal{N}} e^{-n(Q(z)-\check{Q}_\tau(z))} dA \lesssim \int\limits_{\mathcal{N}} |\phi_1(z)|^{2(j-n)} dA = \bigO(n^{-N}),
			$$
			as $n\to \infty$. In the last step we have used that $|\phi_1(z)|\geq c>1$ on $\calN$.
			
			For $z\in(\C\setminus\calE)\setminus \mathcal{N}$ we have $\check{Q}_\tau(z)-Q(z) < -c$ for some constant $c>0$. Together with the growth assumption on $Q$ in \eqref{growth_Q} this gives the same estimate in $(\C\setminus\calE)\setminus \mathcal{N}$.

		\end{proof}

		Let us write $p_{j,n}(z)$ and $\tilde{p}_{j,n}(z)$ for the monic orthogonal polynomials of degree $j$ in $L^2_{nQ}$ and $L^2_{n\tilde{Q}}$ respectively, where $\tilde{Q}=Q-\frac s n \omega$, $|s|\le \log n$.
		
		Recall that
		$\tilde{h}_{j,n}(0)=\|p_{j,n}\|_{nQ}^2$ and $\tilde{h}_{j,n}(s)=\|\tilde{p}_{j,n}\|_{n\tilde{Q}}^2$.

		\begin{lem} \label{L-lemma} Suppose that $j\le n-\log^2 n$. Then
			$$\log \frac {\tilde{h}_{j,n}(s)}{\tilde{h}_{j,n}(0)}=\bigO(n^{-N}),\qquad (n\to\infty)$$
			where $N>0$ is arbitrarily large and the implied constant depends only on $N$.
		\end{lem}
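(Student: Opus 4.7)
The plan is to exploit the fact that $\omega$ vanishes on the neighbourhood $\calE$ of $S$, so on $\calE$ we have $\tilde{Q}=Q$, while outside $\calE$ Lemma \ref{small_out} says that polynomials of degree $j\le n-\log^2 n$ carry only a negligible share of their weighted $L^2$-mass. Since $\tilde{p}_{j,n}$ and $p_{j,n}$ are norm-minimal among monic polynomials of degree $j$ in their respective spaces, a double inequality will pinch $\tilde{h}_{j,n}(s)/\tilde{h}_{j,n}(0)$ to $1+\bigO(n^{-N})$.

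The first step is to show that for any holomorphic polynomial $p$ of degree $j\le n-\log^2 n$ and any $|s|\le\log n$,
\begin{equation}\label{normcompare}
\|p\|_{n\tilde{Q}}^2=\|p\|_{nQ}^2\bigl(1+\bigO(n^{-N})\bigr),
\end{equation}
with $N$ as large as we please. Splitting each norm as an integral over $\calE$ plus an integral over $\C\setminus\calE$, the $\calE$-parts coincide because $\omega\equiv 0$ there, and hence
$$\bigl|\|p\|_{n\tilde{Q}}^2-\|p\|_{nQ}^2\bigr|=\left|\int_{\C\setminus\calE}|p|^2e^{-nQ}(e^{s\omega}-1)\,dA\right|\le (e^{|s|\|\omega\|_\infty}+1)\int_{\C\setminus\calE}|p|^2e^{-nQ}\,dA.$$
For $|s|\le\log n$ the prefactor is polynomially bounded in $n$, while Lemma \ref{small_out} (applied with $s=0$) bounds the remaining integral by $n^{-N'}\|p\|_{nQ}^2$ for arbitrary $N'$. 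Choosing $N'$ large enough to absorb the polynomial prefactor yields \eqref{normcompare}.

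The second step is the pinching argument. Since $\tilde{p}_{j,n}$ minimizes $\|\cdot\|_{n\tilde{Q}}$ over monic polynomials of degree $j$, and $p_{j,n}$ is a competitor,
$$\tilde{h}_{j,n}(s)=\|\tilde{p}_{j,n}\|_{n\tilde{Q}}^2\le\|p_{j,n}\|_{n\tilde{Q}}^2\stackrel{\eqref{normcompare}}{=}\|p_{j,n}\|_{nQ}^2\bigl(1+\bigO(n^{-N})\bigr)=\tilde{h}_{j,n}(0)\bigl(1+\bigO(n^{-N})\bigr).$$
Conversely, applying the dual comparison and using that $\tilde{p}_{j,n}$ is monic of degree $j$, so a competitor for the $nQ$-norm,
$$\tilde{h}_{j,n}(0)=\|p_{j,n}\|_{nQ}^2\le\|\tilde{p}_{j,n}\|_{nQ}^2\stackrel{\eqref{normcompare}}{=}\|\tilde{p}_{j,n}\|_{n\tilde{Q}}^2\bigl(1+\bigO(n^{-N})\bigr)=\tilde{h}_{j,n}(s)\bigl(1+\bigO(n^{-N})\bigr).$$
Combining these two inequalities gives $\tilde{h}_{j,n}(s)/\tilde{h}_{j,n}(0)=1+\bigO(n^{-N})$, and taking the logarithm finishes the proof.

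The main (and essentially only) subtlety is bookkeeping the polynomial loss $e^{|s|\|\omega\|_\infty}\le n^{\|\omega\|_\infty}$ in step one; this is not a real obstacle because Lemma \ref{small_out} allows an arbitrarily large $N$, so the polynomial factor is easily swallowed. Everything else is an application of the variational characterization of the monic orthogonal polynomial.
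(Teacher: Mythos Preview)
Your proof is correct. It differs from the paper's in that you use the extremal (norm-minimizing) characterization of the monic orthogonal polynomial directly, sandwiching $\tilde{h}_{j,n}(s)/\tilde{h}_{j,n}(0)$ between two applications of the norm comparison \eqref{normcompare}. The paper instead argues via orthogonal projection: it shows that $p_{j,n}$ is nearly orthogonal in $L^2_{n\tilde{Q}}$ to all polynomials of lower degree (using Lemma~\ref{small_out} on the inner products), deduces that $\tilde{\pi}_{j-1,n}(p_{j,n})$ has negligible norm, and then uses $\tilde{p}_{j,n}=p_{j,n}-\tilde{\pi}_{j-1,n}(p_{j,n})$ together with the same norm comparison you prove. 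Both routes rest on the same two facts---$\omega\equiv 0$ on $\calE$ and Lemma~\ref{small_out}---but yours is shorter and avoids the projection step entirely; the paper's version has the incidental by-product of showing that $p_{j,n}$ and $\tilde{p}_{j,n}$ are themselves close in $L^2_{n\tilde{Q}}$, which is not needed here.
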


		\begin{proof} With $j$ satisfying the assumption, let $q(z)$ be a polynomial of degree less than $j$.
			
			Then $(p_{j,n},q)_{nQ}=0$. But since $\omega=0$ on $\calE$, Lemma \ref{small_out} and the Cauchy-Schwarz inequality imply
			$$0=(p_{j,n},q)_{nQ}=(p_{j,n},q)_{n\tilde{Q}}+\|p_{j,n}\|_{nQ}\|q\|_{nQ}\cdot \bigO(n^{-N}).$$
			
			Now consider the orthogonal projection $\tilde{\pi}_{j-1,n}$ of $L^2_{n\tilde{Q}}$ onto the space of polynomials of degree at most $j-1$. The above shows that
			$$\|\tilde{\pi}_{j-1,n}(p_{j,n})\|=\bigO(n^{-N}).$$
			
			Since both $p_{j,n}$ and $\tilde{p}_{j,n}$ are monic polynomials, we have
			$$p_{j,n}-\tilde{\pi}_{j-1,n}(p_{j,n})=\tilde{p}_{j,n}.$$
			By integration, using again that $\omega=0$ on $\calE$, we find that
			$$\tilde{h}_{j,n}(0)+\tilde{h}_{j,n}(0)\cdot \calO(n^{-N})=\tilde{h}_{j,n}(s),$$
			which finishes the proof.
	\end{proof}

	\section{Number of points near the outpost} \label{proof: outpost} We are now ready to prove Theorem \ref{main1}.

	Let us write
	$$h_{j,n}^0=r_1^{2j+1}e^{-nq_1(\infty)}e^{-h_1(\infty)},\qquad h_{j,n}^1=r_2^{2j+1}e^{-nq_2(\infty)}e^{-h_2(\infty)}.$$

	Inserting the asymptotics in Theorem \ref{insert} (using Lemma \ref{L-lemma} to discard lower-order terms) we deduce that the cumulant generating function $F_{n,\wfun}$ of $N_n=\fluct_n \wfun$ satisfies (with $\delta_n$ given by \eqref{deltan})
	\begin{align*}F_{n,\wfun}(s)&=\log Z_{n,s\wfun}-\log Z_{n,0}\\
		&=\sum_{j=n-\log^2 n}^{n-1}\log\biggl(1+e^s\frac {h_{j,n}^1}{h_{j,n}^0}\biggr)-\sum_{j=n-\log^2n}^{n-1}\log\biggl(1+\frac {h_{j,n}^1}{h_{j,n}^0}\biggr)+\bigO(\delta_n\log^2n).
	\end{align*}

	However since $h_2(\infty)-h_1(\infty)=c$
	$$\sum_{j=n-\log^2 n}^{n-1}\log\biggl(1+e^s\frac {h_{j,n}^1}{h_{j,n}^0}\biggr)=\sum_{j=n-\log^2 n}^{n-1}\log \biggl(1+e^se^{-c}\bigl(\frac {r_2}{r_1}\bigr)^{2j+1}e^{n(q_1(\infty)-q_2(\infty))}\biggr).$$

	From Lemma \ref{lura}, $$\frac {e^{nq_1(\infty)}}{e^{nq_2(\infty)}}=\bigl(\frac {r_1} {r_2}\bigr)^{2n}$$ so the last sum above simplifies as
	$$\sum_{j=n-\log^2 n}^{n-1}\log\biggl(1+e^{s}e^{-c}\bigl(\frac {r_2}{r_1}\bigr)^{2(j-n)+1}\biggr).$$

	Shifting the summation index, the last sum is recognized in terms of the $q$-Pochhammer symbol \eqref{qpoc}:
	$$\log\biggl[\bigl(-e^{s}e^{-c}\frac{r_1}{r_2};\bigl(\frac {r_1}{r_2}\bigr)^2\bigr)_\infty \biggr]+\bigO(n^{-N}),$$
	where the integer $N$ may be taken as large as we please.

	Comparing with the formula \eqref{cgf_he} for the cumulant generating function of a Heine distributed random variable $X$ with parameters
	$$\theta=\frac {r_1}{r_2} e^{-c},\qquad q=\bigl(\frac {r_1}{r_2}\bigr)^2,$$
	we now recognize that, as $n\to\infty$,
	$F_{n,\wfun}(s)=F_X(s)+\bigO(\delta_n\log^2n ).$

	The proof of Theorem \ref{main1} is complete. q.e.d.

	\section{Bifurcation for orthogonal polynomials near a spectral gap}\label{bifur2}
	In the remainder of this note, we assume that $Q(z)$ is a spectral gap potential as in Section \ref{csg}. We also fix $s>0$ and consider the perturbed potential
	$$\tilde{Q}=Q-\frac s n \wfun$$
	where $\wfun=0$ in a neighbourhood of $S_{\tau_*}$ and $\wfun=1$ in a neighbourhood of $S\setminus S_{\tau_*}$.

	The monic orthogonal polynomial $p_{j,n}(z)$ of degree $j$ in $L^2_{n\tilde{Q}}$ undergoes a bifurcation
	as $j$ passes the critical value $n\tau_*$. As the details are similar to the case of an outpost potential,
	we will allow ourselves to be brief.
	
	For the critical regime of $j$ with
	\begin{equation}\label{voff}|j-n\tau_*|\le \log^2 n\end{equation} we consider monic quasi-polynomials of the form
	$$\Phi_{j,n}(z)=\frac {r_{1,\tau_*}^{j+1/2}}{e^{nq_{1,\tau_*}(\infty)} e^{h_1(\infty)/2}}\sqrt{\phi_{1,\tau_*}'(z)}\phi_{1,\tau_*}(z)^je^{nq_{1,\tau_*}(z)/2}e^{h_1(z)/2}.$$
	
	Here, for $k=1,2$, the function $q_{k,\tau_*}(z)$ is the bounded holomorphic function on $\Ext C_{k}$ which is real at infinity and solves the Dirichlet problem
	$$\re q_{k,\tau_*}=Q\qquad \text{on}\qquad C_{k}.$$

	Note that if $\check{Q}_{\tau_*}$ is the obstacle function which grows like $2\tau_*\log|z|+\bigO(1)$ as $z\to\infty$ then (for $z\in\Ext C_2$)
	$$\check{Q}_{\tau_*}(z)=2\tau_*\log|\phi_{1,\tau_*}(z)|+\re q_{1,\tau_*}(z)=2\tau_*\log|\phi_{2,\tau_*}(z)|+\re q_{2,\tau_*}(z).$$

	By arguing as in Lemma \ref{lura}, we now obtain the following result.

	\begin{lem} \label{lura2} For $z\in \Ext C_{2}$ we have the identity (Lemma \ref{lura2})
		$$\phi_{1,\tau_*}(z)e^{\frac {\tau_*} 2 q_{1,\tau_*}(z)}=\phi_{2,\tau_*}(z)e^{\frac {\tau_*} 2 q_{2,\tau_*}(z)},$$
		and
		$$\frac 1 {r_{1,\tau_*}}e^{\frac {\tau_*} 2 q_{1,\tau_*(\infty)}}=\frac 1 {r_{2,\tau_*}}e^{\frac {\tau_*} 2 q_{2,\tau_*(\infty)}}.$$
		For $z\in \Ext C_{2}$ we have the alternative formula
		$$\Phi_{j,n}(z)=\frac {r_{2,\tau_*}^{j+1/2}}{e^{nq_{2,\tau_*}(\infty)}e^{h_2(\infty)/2}}\sqrt{\phi_{2,\tau_*}'(z)}\phi_{2,\tau_*}(z)^je^{nq_{2,\tau_*}(z)/2}e^{ h_2(z)/2}.$$

	\end{lem}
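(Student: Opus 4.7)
My plan is to adapt the proof of Lemma \ref{lura} almost verbatim to the mass-$\tau_*$ setting; the three assertions of the lemma come from a maximum-principle argument, an asymptotic identification at infinity, and a direct algebraic substitution, respectively. The starting point is the double representation $\check{Q}_{\tau_*}(z)=2\tau_*\log|\phi_{k,\tau_*}(z)|+\re q_{k,\tau_*}(z)$ displayed immediately before the lemma. I would first justify it by observing that for each $k=1,2$ the right-hand side is harmonic on $\Ext(C_k)$, agrees with $Q$ on $C_k$ (because $|\phi_{k,\tau_*}|\equiv 1$ and $\re q_{k,\tau_*}=Q$ there), and grows like $2\tau_*\log|z|+\bigO(1)$ at infinity, so it coincides with $\check{Q}_{\tau_*}$ throughout $\Ext(C_k)$ by the same maximum-principle argument used in Lemma \ref{lura}.

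On the common domain $\Ext(C_2)$, equating the two representations and exponentiating shows that the function
$$z\mapsto\frac{\phi_{1,\tau_*}(z)\,e^{\tfrac{\tau_*}{2}q_{1,\tau_*}(z)}}{\phi_{2,\tau_*}(z)\,e^{\tfrac{\tau_*}{2}q_{2,\tau_*}(z)}}$$
has modulus identically one; since $\phi_{k,\tau_*}$ is nowhere zero on $\Ext(C_k)$, this ratio is holomorphic. By the maximum/minimum modulus principle the ratio is a unimodular constant; evaluating at $z=\infty$, where $\phi_{k,\tau_*}(z)\sim z/r_{k,\tau_*}$ and $q_{k,\tau_*}(\infty)\in\R$, the limit is real and positive, forcing the constant to equal $+1$. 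This yields the first identity of the lemma, and passing to leading order as $z\to\infty$ in that identity recovers the second.

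The alternative formula for $\Phi_{j,n}(z)$ on $\Ext(C_2)$ then follows by direct substitution into the defining expression for $\Phi_{j,n}$. Using the first identity of the lemma to trade the subscript-$1$ exponential/power factors for their subscript-$2$ counterparts (up to an explicit constant), together with $\phi_{2,\tau_*}=(r_{1,\tau_*}/r_{2,\tau_*})\phi_{1,\tau_*}$ (whence $\phi_{2,\tau_*}'=(r_{1,\tau_*}/r_{2,\tau_*})\phi_{1,\tau_*}'$), the link $h_2=h_1+c$ coming from the Dirichlet decomposition \eqref{dir2}, and the second identity of the lemma to absorb the normalization constants at infinity, every subscript-$1$ datum is swapped with its subscript-$2$ analogue; the accumulated powers of $r_{k,\tau_*}$ collapse to the prefactor $r_{2,\tau_*}^{j+1/2}$ and the $h$-factors match via the constant shift by $c$.

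The only substantive obstacle, and the reason this is not a pure copy-paste from Lemma \ref{lura}, is verifying in the first step that $C_2$ lies in the coincidence set $\{\check{Q}_{\tau_*}=Q\}$ of the mass-$\tau_*$ obstacle problem even though $C_2\not\subset S_{\tau_*}$. This is where the spectral gap hypothesis enters, in particular the requirement \eqref{a1} that $C_2$ is a level curve of $|\phi_{1,\tau_*}|$, together with the compatibility condition (C): at mass $\tau_*$ the configuration $(S_{\tau_*},C_2)$ plays precisely the role of an outpost configuration, so the outpost-potential theory of Section~\ref{secop} guarantees $\check{Q}_{\tau_*}=Q$ on $C_2$ and the whole argument proceeds exactly as for Lemma \ref{lura}.
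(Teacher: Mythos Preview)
Your proposal is correct and follows essentially the same route as the paper, which simply says ``by arguing as in Lemma~\ref{lura}'' after displaying the double representation of $\check{Q}_{\tau_*}$. The maximum-principle identification of $\check{Q}_{\tau_*}$ with $2\tau_*\log|\phi_{k,\tau_*}|+\re q_{k,\tau_*}$, the unimodular-quotient argument, the evaluation at infinity, and the algebraic substitution for the alternative formula are exactly what the paper intends.

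One comment on your final paragraph: you rightly notice that the second representation (the one with $k=2$) requires $\check{Q}_{\tau_*}=Q$ on $C_2$, i.e.\ that $C_2$ lies in the mass-$\tau_*$ coincidence set, and that this is not entirely trivial since $C_2\not\subset S_{\tau_*}$. However, your resolution (``the outpost-potential theory of Section~\ref{secop} guarantees $\check{Q}_{\tau_*}=Q$ on $C_2$'') is somewhat circular, because in Section~\ref{secop} the inclusion $C_2\subset S^*$ is an \emph{assumption}, not a conclusion. The paper does not address this point either; it simply asserts the double representation before stating the lemma. So while your instinct to flag this is good, your justification does not close the gap any more than the paper does. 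A cleaner way to see it (for the spectral gap potential) is to argue that for $\tau$ just above $\tau_*$ the new droplet component must lie inside the outer ring and, by monotonicity, degenerates to a subset of $S^*_{\tau_*}$ as $\tau\downarrow\tau_*$; combined with the level-curve hypothesis~\eqref{a1} and the $C^{1,1}$-matching of $\check{Q}$ across $C_2$, this forces the coincidence at $C_2$.
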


	Arguing as Section \ref{bifur1}, it is not hard to prove that if $j$ is in the regime \eqref{voff}, then $\chi\Phi_{j,n}$
	is a good approximation to the monic orthogonal polynomial $p_{j,n}$, where the smooth function $\chi$ is identically $1$ in a small neighbourhood of $\Ext (C_1)$ and vanishes outside a slightly larger neighbourhood.
	
	If $j$ is outside of this bifurcation regime, we can instead rely on a small modification of Lemma \ref{L-lemma}.

	We now state our main result on asymptotics for the squared norm of the order $j$ monic orthogonal polynomial
	$$\tilde{h}_{j,n}(s):=\|\tilde{p}_{j,n}\|_{n\tilde{Q}}^2.$$

	\begin{thm} \label{insert2} Assume that  $|j-n\tau_*|\le \log^2 n$, then as $n\to \infty$, the norm $\tilde{h}_{j,n}(s)$ satisfies the following asymptotic
			$$\tilde{h}_{j,n}(s)=\sqrt{\frac{2\pi}{n}}( r_{1,\tau_*}^{2j+1}e^{-nq_{1,\tau_*}(\infty)}e^{-h_1(\infty)}+ e^s r_{2,\tau_*}^{2j+1}e^{-nq_{2,\tau_*}(\infty)}e^{-h_2(\infty)})\cdot (1+\bigO(\delta_n)),$$
		where the implied constants are uniform for $|s|\le \log n$. Here $\delta_n=\sqrt{\log n/n}$.
	\end{thm}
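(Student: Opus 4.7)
The plan is to mirror the argument of Section \ref{bifur1} almost verbatim, replacing throughout the $\tau=1$ objects by their $\tau_*$-counterparts: $\phi_k$ by $\phi_{k,\tau_*}$, $q_k$ by $q_{k,\tau_*}$, $\check{Q}$ by $\check{Q}_{\tau_*}$, with Lemma \ref{lura2} playing the role of Lemma \ref{lura}. The relevant harmonic function is $V_{\tau_*}(z):=2\tau_*\log|\phi_{1,\tau_*}(z)|+\re q_{1,\tau_*}(z)$, which is harmonic on $\Ext C_1$ and coincides there with $\check{Q}_{\tau_*}$; crucially, $V_{\tau_*}=Q$ on both $C_1$ and $C_2$. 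First, I would multiply $\Phi_{j,n}$ by a smooth cutoff $\chi$ vanishing on a compact set $K\subset \Int S_{\tau_*}$ and equal to $1$ outside a slightly larger $K'\subset \Int S_{\tau_*}$, so that $\chi\Phi_{j,n}\in C^\infty(\C)$.

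Next I would establish the analog of Lemma \ref{negl0}: for each $k=1,2$ and any $N>0$, by choosing $M$ large enough and $\delta>0$ small enough,
\[
\int_{\{M\delta_n\le \dist(z,C_k)\le \delta\}}|\chi\Phi_{j,n}|^2e^{-n\tilde{Q}}\,dA=\bigO(n^{-N-1/2})\, r_{k,\tau_*}^{2j+1}e^{-nq_{k,\tau_*}(\infty)}.
\]
The key is the quadratic lower bound $(Q-V_{\tau_*})(z)\ge c\,\dist(z,C_k)^2$ valid on both sides of $C_k$, which follows from Taylor expansion together with the standing assumption $\Delta Q>0$ on $\d S^*$. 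The main contribution to $\|\chi\Phi_{j,n}\|_{n\tilde Q}^2$ then comes from integration over the $M\delta_n$-tubes $\phi_{k,\tau_*}^{-1}(D_n^k)$, and is evaluated in flow coordinates exactly as in \eqref{redu}--\eqref{mainc}, using the Jacobian \eqref{jac}, the identity $\re h_k=\tfrac{1}{2}\log\Delta Q$ on $C_k$, and the values $\wfun=0$ near $C_1$, $\wfun=1$ near $C_2$. This yields the claimed leading expression.

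With the norm asymptotic in hand, the analog of Lemma \ref{appo} follows by the same mean-value argument applied to $P_k=P_{k,j,n,p}$ defined via $\Lambda_{k,j,n}[P_k]=p$ (which vanishes at infinity as $\deg p<j$), giving $|(p,\chi\Phi_{j,n})_{n\tilde Q}|\lesssim \delta_n\|p\|_{n\tilde Q}\|\chi\Phi_{j,n}\|_{n\tilde Q}$ uniformly for $|s|\le\log n$. The $\dbar$-correction step goes through unchanged: the H\"ormander estimate with weight $n\check{Q}_\tau+\eps\log(1+|z|^2)$ (where $\tau=j/n$) produces $u$ with $\dbar u=\dbar\chi\cdot\Phi_{j,n}$ and $u(z)=\bigO(z^{j-1})$ at infinity, whose $L^2_{n\tilde Q}$-norm is negligible because $\dbar\chi$ is supported in a compact part of $\Int S_{\tau_*}$ where $|\Phi_{j,n}|^2e^{-n\tilde Q}$ is exponentially small (by the analog of Lemma \ref{negl0}). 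Setting $E_{j,n}:=\chi\Phi_{j,n}-u$ yields a monic polynomial of degree $j$ whose distance to $\tilde p_{j,n}$ (measured via the orthogonal projection onto polynomials of degree $<j$) is $\bigO(\delta_n)\|\chi\Phi_{j,n}\|_{n\tilde Q}$. Hence $\|\tilde p_{j,n}\|_{n\tilde Q}^2=\|\chi\Phi_{j,n}\|_{n\tilde Q}^2\cdot(1+\bigO(\delta_n))$, which is the statement.

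The main obstacle, and the only genuine point of departure from Section \ref{bifur1}, is the geometry near $C_2$. Unlike the outpost case where $C_2$ lay in the exterior of the droplet, here $C_2$ is a boundary component of the connected set $S\setminus S_{\tau_*}$, and $V_{\tau_*}$ differs from $Q$ on the outer side of $C_2$ (since $\check{Q}_{\tau_*}<Q$ on $S\setminus S_{\tau_*}$ for $\tau_*<1$). Once one verifies that $V_{\tau_*}$ admits a harmonic continuation across $C_2$ into a neighbourhood of $S\setminus S_{\tau_*}$ (automatic, since $V_{\tau_*}$ is defined everywhere in $\Ext C_1$) and that the quadratic separation $(Q-V_{\tau_*})(z)\ge c\,\dist(z,C_2)^2$ holds on both sides of $C_2$, the flow-coordinate analysis proceeds as before and all remaining steps are a careful bookkeeping exercise.
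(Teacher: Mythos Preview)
Your proposal is correct and follows exactly the route the paper indicates: the paper itself does not write out a proof of Theorem \ref{insert2} but simply states that ``it is straightforward to write down a detailed proof \ldots\ which resembles Theorem \ref{insert}; we omit details here, to avoid tedious repetitions.'' Your outline is precisely that detailed proof, with the correct substitutions $\phi_k\mapsto\phi_{k,\tau_*}$, $q_k\mapsto q_{k,\tau_*}$, $\check{Q}\mapsto\check{Q}_{\tau_*}$ and Lemma \ref{lura2} in place of Lemma \ref{lura}. Your observation about the geometry near $C_2$ is apt but not really a departure: once one recognizes that the $\tau_*$-obstacle problem sees $C_2$ as an outpost (i.e.\ $C_2\subset S_{\tau_*}^*\setminus S_{\tau_*}$), the quadratic separation $(Q-\check{Q}_{\tau_*})(z)\ge c\,\dist(z,C_2)^2$ on both sides of $C_2$ follows exactly as in the outpost case.
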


	It is straightforward to write down a detailed proof of Theorem \ref{insert2} which resembles Theorem \ref{insert}; we omit details here, to avoid tedious repetitions.
	
	\smallskip
	
	We also note the following theorem on asymptotics for the wavefunctions $w_{j,n}(z)=\gamma_{j,n}p_{j,n}(z)e^{-n\tilde{Q}(z)/2}$ in the bifurcation regime. (The proof of Theorem \ref{gondor} goes through essentially unchanged.)
	
	\begin{thm}\label{gondor2} Suppose that $j=n\tau_*+\bigO(\log^2n)$ and let $z\in \calB_1\cup\Ext C_1$.
		Then, with the natural interpretations (substituting $r_k\mapsto r_{k,\tau_*}$ and so on) the asymptotics in Theorem \ref{gondor} holds for $w_{j,n}(z)$.
	\end{thm}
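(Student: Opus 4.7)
The plan is to adapt the proof of Theorem \ref{gondor} with the natural $\tau_*$-dependent substitutions, leveraging Theorem \ref{insert2} as the replacement for Theorem \ref{insert}. The key ingredient required is the analog of the $L^2$-approximation bound \eqref{bpo}, namely
\begin{equation*}
\|\tilde{p}_{j,n}-\chi\Phi_{j,n}\|_{n\tilde{Q}}\lesssim \delta_n\|\chi\Phi_{j,n}\|_{n\tilde{Q}},
\end{equation*}
valid in the bifurcation regime $|j-n\tau_*|\le \log^2 n$, where now $\Phi_{j,n}$ is the monic quasi-polynomial built from $\phi_{1,\tau_*}$, $q_{1,\tau_*}$, $h_1$ as in Section \ref{bifur2}. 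This bound is obtained by repeating the three-step scheme of Section \ref{bifur1}: the norm computation of Lemma \ref{lemm} (now replaced by Theorem \ref{insert2}), the approximate orthogonality of Lemma \ref{appo}, and the H\"ormander/$\bar{\partial}$-correction of Lemma \ref{bresk}. All three steps go through verbatim after the substitutions $r_k\mapsto r_{k,\tau_*}$, $\phi_k\mapsto\phi_{k,\tau_*}$, $q_k\mapsto q_{k,\tau_*}$, $\check{Q}\mapsto \check{Q}_{\tau_*}$, since the only property used is that $C_1$ and $C_2$ are level curves of $|\phi_{1,\tau_*}|$ together with the compatibility condition (C), which is imposed also in Section \ref{csg}.

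Given this $L^2$-bound, I would convert it to a pointwise estimate on $\mathcal{B}_1$ exactly as in the outpost case: apply the standard pointwise-$L^2$ inequality (\cite[Lemma 2.4]{A}) to the polynomial $\tilde{p}_{j,n}-\Phi_{j,n}$ to obtain
\begin{equation*}
|\tilde{p}_{j,n}(z)-\Phi_{j,n}(z)|e^{-n\tilde{Q}(z)/2}\lesssim \sqrt{n}\,\|\tilde{p}_{j,n}-\chi\Phi_{j,n}\|_{n\tilde{Q}}\lesssim \sqrt{\log n}\,\sqrt{c_{j,n}},
\end{equation*}
where now $c_{j,n}$ is the constant read off from Theorem \ref{insert2} (with $r_{k,\tau_*}$, $q_{k,\tau_*}$ replacing $r_k$, $q_k$). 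This handles the estimate \eqref{fjodor} on $\mathcal{B}_1$.

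For $z\in \mathrm{Ext}\,\Gamma_\tau$ with $\tau=j/n\approx \tau_*$, the outer boundary $\Gamma_\tau=\partial_* S_\tau$ sits close to $C_1$ and hence is contained in $\mathcal{B}_1$ for $n$ large. I would then form $q(z)=\tilde{p}_{j,n}(z)-\Phi_{j,n}(z)$, set $u(z)=\tfrac{1}{n}\log|q(z)|^2$, and use that $u$ is subharmonic on $\mathrm{Ext}(\Gamma_\tau)$ with the growth $u(z)\lesssim \tau\log|z|^2$ at infinity. The pointwise bound established on $\Gamma_\tau\subset \mathcal{B}_1$ propagates by the maximum principle to give $u(z)\le (\mathrm{const}+K_n)/n+\check{Q}_\tau(z)$ throughout $\mathrm{Ext}\,\Gamma_\tau$, where $K_n=\log(c_{j,n}\log n)$. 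This yields \eqref{fjodor} (with $\check{Q}$ replaced by $\check{Q}_{\tau_*}$, up to a negligible $\bigO((1-\tau/\tau_*)\log|z|)$ absorbed into the error) on $\mathrm{Ext}\,C_1$. Finally, dividing by $\sqrt{c_{j,n}}$ and using that $\gamma_{j,n}=c_{j,n}^{-1/2}(1+\bigO(\delta_n))$ (a direct consequence of Theorem \ref{insert2}) converts the estimate for $\tilde{p}_{j,n}$ into the corresponding one for $w_{j,n}$.

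The only genuine obstacle is verifying that the approximate orthogonality argument of Lemma \ref{appo} carries over: it relies on the mean-value-type cancellation
\begin{equation*}
\oint_{\T}\frac{P_k}{\sqrt{\phi_{k,\tau_*}'\circ\phi_{k,\tau_*}^{-1}}}\circ\psi_{k,t}(w)\,|dw|=0,
\end{equation*}
which uses that $P_k$ vanishes at infinity (because $\deg p<j$). This remains valid in the spectral gap setting, so there is no real obstruction; the rest is accountancy of constants under the substitutions. The proof therefore proceeds essentially verbatim, and I would present it by stating the two inputs (the norm asymptotic from Theorem \ref{insert2} and the $L^2$-approximation bound), then citing the outpost argument for the remaining steps.
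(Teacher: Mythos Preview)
Your proposal is correct and matches the paper's own approach: the paper simply states that ``the proof of Theorem \ref{gondor} goes through essentially unchanged,'' and your write-up is a faithful elaboration of exactly that, carrying through the $\tau_*$-dependent substitutions and invoking Theorem \ref{insert2} in place of Theorem \ref{insert}.
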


	\section{Number of particles near the ring $S\setminus S_{\tau_*}$} \label{numgap}
	We now prove Theorem \ref{mth2}. To this end, our main tool is Theorem \ref{insert2}.

	Let us write
	$$h_{j,n}^0:=r_{1,\tau_*}^{2j+1}e^{-nq_{1,\tau_*}(\infty)}e^{-h_1(\infty)},\qquad h_{j,n}^1:=r_{2,\tau_*}^{2j+1}e^{-nq_{2,\tau_*}(\infty)}e^{-h_2(\infty)}.$$

	By the formula \eqref{cgf_ring} and Theorem \ref{insert2}, the cumulant generating function of $\fluct_n\wfun$ satisfies the asymptotic
	\begin{align*}&F_{n,\wfun}(s)=\log Z_{n,s\wfun}-ns(1-\tau_*)-\log Z_{n,0}\\
		&=\sum_{|j-n\tau_*|\le \log^2 n }\log(h_{j,n}^0+e^sh_{j,n}^1)-ns(1-\tau_*)-\sum_{|j-n\tau_*|\le \log^2 n }\log(h_{j,n}^0+h_{j,n}^1)+\bigO(\delta_n\log^2 n )\nonumber \\
		&=\sum_{j=\lfloor n\tau_*- \log^2 n\rfloor}^{\lfloor n\tau_*\rfloor-1}\log(h_{j,n}^0+e^sh_{j,n}^1)-\sum_{j=\lfloor n\tau_*-\log^2 n \rfloor}^{\lfloor n\tau_*\rfloor-1}\log(h_{j,n}^0+h_{j,n}^1)\nonumber \\
		&+\sum_{j=\lfloor  n\tau_*\rfloor }^{\lfloor n\tau_*+\log^2 n \rfloor}\log(e^{-s}h_{j,n}^0+h_{j,n}^1)-\sum_{j=\lfloor n\tau_*\rfloor}^{\lfloor n\tau_*+\log^2 n \rfloor}\log(h_{j,n}^0+h_{j,n}^1)+\bigO(\delta_n\log^2 n ).\nonumber
	\end{align*}

	To relate the above to the Heine distribution, we first use the identity
	$$\frac {e^{nq_{1,\tau_*}(\infty)}}{e^{nq_{2,\tau_*}(\infty)}}=\bigl(\frac {r_{1,\tau_*}}{r_{2,\tau_*}}\bigr)^{2n\tau_*},$$
	to write
	\begin{align*}
		&\sum_{j=\lfloor n\tau_*-\log^2 n \rfloor}^{\lfloor n\tau_*\rfloor-1}\log(h_{j,n}^0+e^sh_{j,n}^1)-\sum_{j=\lfloor n\tau_*-\log^2 n \rfloor}^{\lfloor n\tau_*\rfloor-1}\log(h_{j,n}^0+h_{j,n}^1)\\
		&=\sum_{j=\lfloor n\tau_*-\log^2 n \rfloor}^{\lfloor n\tau_*\rfloor-1}\log\bigl(1+e^s\frac {h_{j,n}^1}{h_{j,n}^0}\bigr)-\sum_{j=\lfloor n\tau_*-\log^2 n \rfloor}^{\lfloor n\tau_*\rfloor-1}\log\bigl(1+\frac {h_{j,n}^1}{h_{j,n}^0}\bigr)\\
		&=\sum_{j=\lfloor n\tau_*-\log^2 n \rfloor}^{\lfloor n\tau_*\rfloor-1}\log\bigl(1+e^se^{-c}\bigl(\frac {r_{2,\tau_*}}{r_{1,\tau_*}}\bigr)^{2(j-n\tau_*)+1}\bigr)\\
		&\qquad \qquad -\sum_{j=\lfloor n\tau_*-\log^2 n \rfloor}^{\lfloor n\tau_*\rfloor-1}\log\bigl(1+e^{-c}\bigl(\frac {r_{2,\tau_*}}{r_{1,\tau_*}}\bigr)^{2(j-n\tau_*)+1}\bigr).
	\end{align*}

	Shifting the summation index and writing
	$$x_n=\{n\tau_*\}=n\tau_*-\lfloor n\tau_*\rfloor$$
	we rewrite the above as
	\begin{align*}\sum_{j=0}^{\lfloor \log^2 n \rfloor}&\log \bigl(1+e^se^{-c}\bigl(\frac {r_{1,\tau_*}}{r_{2,\tau_*}}\bigr)^{2j+1+2x_n}\bigr)\\
		&\quad\quad -\sum_{j=0}^{\lfloor \log^2 n \rfloor}\log \bigl(1+e^{-c}\bigl(\frac {r_{1,\tau_*}}{r_{2,\tau_*}}\bigr)^{2j+1+2x_n}\bigr)+\bigO(n^{-N})\\
		&=\log\bigl[\bigl(-e^se^{-c}\bigl(\frac {r_{1,\tau_*}}{r_{2,\tau_*}}\bigr)^{1+2x_n};\bigl(\frac {r_{1,\tau_*}}{r_{2,\tau_*}}\bigr)^2\bigr)_\infty\bigr]\\
		&\qquad\qquad-\log\bigl[\bigl(-e^se^{-c}\bigl(\frac {r_{1,\tau_*}}{r_{2,\tau_*}}\bigr)^{1+2x_n};\bigl(\frac {r_{1,\tau_*}}{r_{2,\tau_*}}\bigr)^2\bigr)_\infty\bigr]+\bigO(n^{-N}),\end{align*}
	where $N>0$ is at our disposal. (As always $(z;q)_\infty$ denotes the $q$-Pochhammer symbol \eqref{qpoc}.)

	By Lemma \ref{vwe}, we recognize the above as an approximation (to within $\bigO(n^{-N})$)
	of the cumulant generating function of $X^+$ where $X^+_n\in \He(\theta_n^+,q)$,
	$$\theta_n^+=e^{-c}\bigl(\frac {r_{1,\tau_*}}{r_{2,\tau_*}}\bigr)^{1+2x_n},\qquad q=\bigl(\frac {r_{1,\tau_*}}{r_{2,\tau_*}}\bigr)^2.$$

	Similarly,
	\begin{align*}
		&\sum_{j=\lfloor n\tau_*\rfloor }^{\lfloor n\tau_*+\log^2 n \rfloor}\log(e^{-s}h_{j,n}^0+h_{j,n}^1)-\sum_{j=\lfloor n\tau_*\rfloor}^{\lfloor n\tau_*+\log^2 n \rfloor}\log(h_{j,n}^0+h_{j,n}^1)\\
		&=\sum_{j=\lfloor n\tau^*\rfloor}^{\lfloor n\tau^*+\log^2 n \rfloor}\log\bigl(e^{-s}\frac {h_{j,n}^0}{h_{j,n}^1}+1\bigr)-\sum_{j=\lfloor n\tau^*\rfloor}^{\lfloor n\tau^*+\log^2 n\rfloor}\log\bigl(\frac {h_{j,n}^0}{h_{j,n}^1}+1\bigr)\\
		&=\sum_{j=\lfloor n\tau^*\rfloor}^{\lfloor n\tau^*+\log^2 n \rfloor}\log \bigl(1+e^{-s}e^{c}\bigl(\frac {r_{1,\tau_*}}{r_{2,\tau_*}}\bigr)^{2(j-n\tau_*)+1}\bigr)\\
		&\qquad\qquad-\sum_{j=\lfloor n\tau^*\rfloor}^{\lfloor n\tau^*+\log^2 n \rfloor}\log\bigl(1+e^c\bigl(\frac {r_{1,\tau_*}}{r_{2,\tau_*}}\bigr)^{2(j-n\tau_*)+1}\bigr).
	\end{align*}

	Shifting the summation index the above becomes
	\begin{align*}\sum_{j=0}^{\lfloor\log^2 n \rfloor}\log\bigl(1+e^{-s}e^c\bigl(\frac {r_{1,\tau_*}}{r_{2,\tau_*}}\bigr)^{2j+1-2x_n}\bigr)-\sum_{j=0}^{\lfloor\log^2 n \rfloor}\log\bigl(1+e^c\bigl(\frac {r_{1,\tau_*}}{r_{2,\tau_*}}\bigr)^{2j+1-2x_n}\bigr).\end{align*}

	By Lemma \ref{vwe}, we recognize the above, to within an error of $\bigO(n^{-N})$, as
	the cumulant generating function of $X^-_n\sim \He(\theta_n^-,q)$ where
	$$\theta_n^-=e^c\bigl(\frac {r_{1,\tau_*}}{r_{2,\tau_*}}\bigr)^{1-2x_n},\qquad q=\bigl(\frac {r_{1,\tau_*}}{r_{2,\tau_*}}\bigr)^2.$$

	This finishes the proof of Theorem \ref{mth2}. q.e.d.

	\section{Decomposition of smooth functions} \label{decomp} In this section, we prove Lemma \ref{de1}. Thus assume that $Q$ is a spectral gap potential as in Section \ref{csg} and fix an arbitrary function $f\in C^\infty_b(\C)$.

	Let $G$ and $U$ be the bounded and the unbounded component of $\C\setminus S$ respectively, cf.~Figure \ref{fig2}. We must show that $f$ can be decomposed as
	$$f=\re g+f_0+\para\omega$$
	where
	$g$ is holomorphic and bounded in $G\cup U$, $f_0=0$ on $\d S$, and $\omega$ is a function with $\omega=0$ in a neighbourhood of $S_{\tau_*}$ and $\omega=1$ in a neighbourhood of $S\setminus S_{\tau_*}$; all functions are smooth.

	The case when $f$ is supported in a small neighbourhood of $U$ is is already treated in \cite[Lemma 5.1]{AM}, so we will here focus on the only remaining case of interest, i.e., when $f$ is supported in a small neighbourhood of the closure $\overline{G}$.

	Let $H(z)$ be the harmonic function on $G$ which solves the Dirichlet problem $H=f$ on $\d G$. We recall from e.g.~ \cite{B,GM} that $H(z)$ is smooth up to the boundary.
	
	As in \eqref{diric} we write $H=\re g_1+\para\varpi$ where
	$g_1(z)$ is holomorphic in $G$ and $\varpi(z)$ is the harmonic function on $G$ with boundary values $0$ on $C_1$ and $1$ on $C_2$; $\para$ is a real constant.

	Since $g_1(z)$ is smooth up to the boundary, we can extend this function to a smooth function $g(z)$ on $\C$, which is supported in a small neighbourhood of $\overline{G}$. (Whitney's extension theorem.)
	
	Likewise we extend the function $\varpi(z)$ to a smooth function $u(z)$
	which is supported in a small neighbourhood of $\overline{G}$.

	We finally extend $H(z)$ to $\C$ via $H:=\re g+\para u$ and define
	$$f_0:=f-H+\para u-\para\omega$$
	and observe that $f_0$ is smooth with $f_0=0$ on $\d S$. $\qed$

	\section{Gaussian fluctuations} \label{brief}

	In this section we prove Theorem \ref{m7}. Thus we assume that $Q$ is a spectral gap potential as in Section \ref{csg}.

	We assume that the test function $f\in C_b^\infty(\C)$ is of class $\calG$, i.e.,
	$$f=\re g+f_0=g/2+\bar{g}/2+f_0$$
	where $g$ is holomorphic in each connected component of $\hat{\C}\setminus S$ and $f_0=0$ on $\d S$; all functions are smooth on $\C$. We may assume that $f_0$ supported in a small neighbourhood of $S$.

	Following \cite{ACC1,AM}, we will write
	$$f_+:=g/2,\qquad  f_-:=\bar{g}/2.$$

	We shall next show that the proof in \cite{AM,ACC1} can be modified to the present situation. In the next section, we shall find that the more general Theorem \ref{main3} can be proven with only a slight extra twist.

	We begin by recalling (in a suitably adapted form) the main tool: the limit Ward identity.

	\subsection{Limit Ward identity} Let $v(z)$ be a suitable complex-valued function on $\C$.
	
	We assume throughout that $v$ is bounded, Lipschitz continuous and uniformly smooth in $\C\setminus (\d S)$. (The last condition means that both $v|_S$ and $v|_{\C\setminus S}$ are smooth up to the boundary.

	Next fix a smooth, bounded function $h$ and consider the potential
	$$\tilde{Q}=Q-\frac h n.$$

	Recall that $d\sigma=\Delta Q\cdot\1_S\, dA$ denotes the equilibrium measure.

	Now define functionals (signed measures) $\nu_n$ and $\tilde{\nu}_n$ by
	\begin{align*}\nu_n(f)=\E_n(\fluct_n f)=n(\sigma_n-\sigma)(f),\qquad
		\tilde{\nu}_n(f)=\tilde{\E}_n(\fluct_n f)=n(\tilde{\sigma}_n-\sigma)(f),
	\end{align*}
	where $\E_n$ is the expectation with respect to $Q$ and $\tilde{\E}_n$ is with respect to $\tilde{Q}$.

	For $z\in\C$ we write $k_z$ for the Cauchy kernel
	$$k_z(w)=\frac 1 {z-w}.$$

	Following \cite{AM} we now introduce two basic functions $D_n(z)$ and $\tilde{D}_n(z)$ by
	$$D_n(z)=\nu_n(k_z),\qquad \tilde{D}_n(z)=\tilde{\nu}_n(k_z).$$

	We have the following result, known as ``limit Ward identity''.

	\begin{lem} \label{skog}Let $v(z)$ be bounded, Lipschitz continuous and uniformly smooth in $\C\setminus (\d S)$. Then there is a number $\beta>0$ such that
		$$\int_\C[v\Delta Q+\dbar v\cdot \d(Q-\check{Q})]\tilde{D}_n\, dA=-\frac 1 2\sigma(\d v)-\sigma(v\cdot \d h)+\bigO(n^{-\beta}).$$
	\end{lem}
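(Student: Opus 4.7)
The plan is to adapt the derivation of the limit Ward identity from \cite{AM} and \cite[Section 6]{ACC1} to the perturbed potential $\tilde Q = Q - h/n$, so that the extra perturbation produces precisely the additional term $-\sigma(v\,\d h)$ on the right-hand side.

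I would begin from the exact Ward identity for the Gibbs measure in potential $\tilde Q$, obtained by applying the infinitesimal change of variables $z_i\mapsto z_i+\eps v(z_i)$ in the partition function and differentiating at $\eps=0$:
\begin{equation*}
\tilde{\E}_n\left[\sum_{i\ne j}\frac{v(z_i)-v(z_j)}{z_i-z_j}\right]=n\tilde{\E}_n\left[\sum_i v(z_i)\,\d\tilde Q(z_i)\right]-\tilde{\E}_n\left[\sum_i \d v(z_i)\right].
\end{equation*}
Since $v$ is merely Lipschitz and smooth off $\d S$, I would first approximate $v$ by compactly supported smoothings, apply the identity, and then absorb the jump across $\d S$ using the Lipschitz bound times the mass of the one-point function in a thin boundary neighbourhood. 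Writing $\d\tilde Q=\d Q-n^{-1}\d h$ isolates the perturbation, and converting $n^{-1}\sum_i(\cdots)$ into an integral against the empirical measure $\tilde\sigma_n=\sigma+n^{-1}\tilde\nu_n$ yields the term $-\sigma(v\,\d h)$ to leading order.

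The next step is to rewrite the cross-diagonal double sum as a quadratic expression in $\tilde D_n$. Using
$$\frac{v(z)-v(w)}{z-w}=-\frac{1}{\pi}\int_\C\frac{\dbar v(\zeta)}{(\zeta-z)(\zeta-w)}\,dA(\zeta),$$
modulo a boundary correction from the non-compact support of $v$, the left-hand side becomes $-\int \dbar v\cdot[\text{Cauchy transform of }\tilde\sigma_n]^2\,dA$ plus a diagonal piece. Expanding the square around the equilibrium Cauchy transform $\d\check Q(z)=\int(z-w)^{-1}\,d\sigma(w)$ and cancelling the leading $n^2$-terms against $n\tilde{\E}_n[\sum_i v(z_i)\,\d Q(z_i)]$, using $\d Q=\d\check Q$ on $S$, leaves an identity in which $\tilde D_n$ appears linearly, weighted by $v\Delta Q+\dbar v\cdot\d(Q-\check Q)$, plus a quadratic-in-$\tilde D_n$ remainder and the diagonal term $\tilde{\E}_n[\sum_i\d v(z_i)]$.

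Finally, I would pass to the limit using the apriori estimates on $\tilde D_n$ from \cite{AM,ACC1}: it is $L^1$-concentrated on a $\bigO(\delta_n)$-neighbourhood of $\d S$, with controlled total variation. Under these bounds the quadratic-in-$\tilde D_n$ remainder is $\bigO(n^{-\beta})$ for some $\beta>0$, the diagonal term contributes $-\tfrac{1}{2}\sigma(\d v)$ (the factor $1/2$ arising from the symmetrization in the double sum, exactly as in \cite{AM}), and the remaining piece is precisely the left-hand side of the stated identity. The main obstacle I expect will be the careful treatment of the non-smoothness of $v$ across $\d S$ in the multi-component setting of a spectral-gap potential: the jump of $\d(Q-\check Q)$ combines with the jumps in the derivatives of $v$ to produce boundary integrals that have to be shown to telescope into the $\sigma(\d v)$ term along each of the curves $C_1$ and $C_2$, while retaining a uniform $\bigO(n^{-\beta})$ rate inherited from the Berezin-kernel asymptotics of \cite{AM,ACC1}.
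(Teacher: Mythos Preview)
The paper does not supply an inline proof of this lemma; it simply refers to \cite[Proposition 1.2]{ACC1} and remarks that the argument there rests on Ward's identity from \cite{AM} together with apriori estimates for the correlation kernel away from the boundary. Your proposal is precisely a sketch of that cited argument: start from the exact Ward identity obtained by the infinitesimal reparametrisation $z_i\mapsto z_i+\eps v(z_i)$, split off the perturbation $\d\tilde Q=\d Q-n^{-1}\d h$ to produce the $-\sigma(v\,\d h)$ term, rewrite the off-diagonal double sum via the Cauchy kernel identity, expand around the equilibrium Cauchy transform $\d\check Q$, and control the remainders using the bulk/edge apriori estimates from \cite{AM,ACC1}. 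This is the correct route and coincides with what the cited references do.

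Two small points of caution. First, the apriori input you invoke is not an $L^1$-bound on $\tilde D_n$ itself but rather decay estimates for the correlation kernel (equivalently the Berezin kernel) at positive distance from $\d S$, which in \cite{ACC1} are what drive the $\bigO(n^{-\beta})$ rate; it is worth phrasing the limiting step in those terms. Second, your explanation of the factor $\tfrac12$ in front of $\sigma(\d v)$ is slightly off: it does not come from ``symmetrization'' of the double sum but from the normalization when one divides the Ward identity by $2n$ after cancelling the $n^2$-order terms, exactly as in \cite[eq.~(1.6)--(1.8)]{ACC1}. Neither point affects the validity of your outline.
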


	A detailed proof is given in \cite[Proposition 1.2]{ACC1}. We remark that the proof given there is based on the previous analysis of connected droplets in \cite{AM}, using Ward's identity and some apriori estimates for the correlation kernel away from the boundary.

	\subsection{Asymptotics for $\nu_n(f)$}

	First consider the function $$p(z):=f_+(z)+f_0(z)=g(z)/2+f_0(z).$$

	Introduce functions $v_0,v_+,v_-$ by
	$$v_+=\frac {\dbar f_+}{\Delta Q},\qquad v_0=\frac {\dbar f_0}{\Delta Q}\1_S+\frac {f_0}{\d(Q-\check{Q})}\1_{\C\setminus S},\qquad v_-=\frac {\d f_-}{\Delta Q}=\bar{v}_+.$$

	These functions meet the requirements on $v$ in Lemma \ref{skog}.

	It is straightforward to check that for $v=v_0+v_+$,
	$$v\Delta Q+\dbar v\cdot \d(Q-\check{Q})=\dbar p\qquad \text{on}\qquad \C\setminus \d S.$$

	By Lemma \ref{skog}, we thus have, as $n\to\infty$,
	\begin{equation}\label{weve}\int_\C \dbar p\cdot \tilde{D}_n=-\frac 1 2\sigma(\d v)-\sigma(v\cdot \d h)+\bigO(n^{-\beta}).\end{equation}

	Hence, setting $h=0$,
	$$\nu_n(p)=\frac 1 2 \sigma(\d v)+\bigO(n^{-\beta}).$$
	Similarly (or by taking complex conjugates)
	$$\nu_n(f_-)=\frac 1 2 \sigma(\dbar v_-)+\bigO(n^{-\beta}).$$

	Now recall that $L(z)$ denotes a fixed smooth function which equals to $\log\Delta Q$ in a small neighbourhood of $S$ and vanishes identically outside of a slightly larger neighbourhood.

	By the definition of $v_0$ and straightforward computations,
	\begin{align*}\sigma(\d v_0)&=\int_S \Delta f_0-\int_S\dbar f_0\d\log \Delta Q\\
		&=\int_S\Delta f_0+\int_Sf_0\Delta L.\end{align*}

	Similarly, one verifies easily that
	$$\sigma(\d v_+)=\int_S\Delta f_+-\int_\C\dbar f_+\cdot\d L=\int_S\Delta f_++\int_\C f_+\Delta L$$
	and
	$$\sigma(\dbar v_-)=\int_S\Delta f_-+\int_\C f_-\Delta L.$$

	Adding up, and using $f^S=f_0\1_S+f_-+f_+$,
	\begin{align}\nu_n(f)&=\int_S \Delta f+\int_S f_0\Delta L+\int_\C (f_-+f_+)\Delta L+\bigO(n^{-\beta})\nonumber \\
		&=\int_S \Delta f+\int_\C f^S\Delta L\, dA+\bigO(n^{-\beta}).\label{u1}
	\end{align}

	\subsection{Asymptotics for $\tilde{\nu}_n(f)-\nu_n(f)$} Again decompose $f=f_++f_-+f_0$ and write $v=v_++v_0+v_-$. By Green's formula $\tilde{\nu}_n(f_+)=-\int \tilde{D}_n\cdot \dbar f_+$ we have
	$$-\tilde{\nu}_n(f_+)=\int_\C[v_+\Delta Q+\dbar v_+\cdot \d(Q-\check{Q})]\cdot \tilde{D}_n.$$
	Using the relationship \eqref{weve} for $p=f_0+f_+$ and $v=v_0+v_+$
	we obtain
	\begin{align*}\tilde{\nu}_n(p)-\nu_n(p)&=\sigma(v\cdot\d h)+\bigO(n^{-\beta})\\
		&=\int_S \dbar p\cdot\d h+\bigO(n^{-\beta}).
	\end{align*}

	By a similar argument (or just taking complex conjugates)
	$$\tilde{\nu}_n(f_-)-\nu_n(f_-)=\int_S\d f_-\cdot \dbar h+\bigO(n^{-\beta}).$$

	Adding the above, we obtain, as $n\to\infty$,
	\begin{equation}\label{u2}\tilde{\nu}_n(f)-\nu_n(f)=\int_S(\dbar f_+\cdot \d h+ \dbar f_0\cdot \d h+
		\d f_-\cdot \dbar h)+\bigO(n^{-\beta}).\end{equation}
	
	By a computation in \cite[p. 1191]{AM} we have
	\begin{equation}\label{u3}\int_S (\dbar f_+\cdot \d h+ \dbar f_0\cdot \d h+
		\d f_-\cdot \dbar h)=\frac 1 4 \int_\C \nabla (f^S)\bigcdot \nabla (h^S),\end{equation}
	the ``$\bigcdot$'' is the standard dot product in $\C=\R^2$.
	
	\subsection{Proof of Theorem \ref{m7}} \label{babbel} Fix a smooth function $f\in\calG$.

	We start by writing $F_{n,f}(t)=\log\E_n (e^{t\fluct_n f})$ in the form (cf.~\cite[Lemma 1.8]{ACC1})
	$$F_{n,f}(t)=\int_0^t\tilde{\E}_{n,sf}(\fluct_n f)\, ds$$
	where $\tilde{\E}_{n,h}$ is expectation with respect to $\tilde{Q}=Q-\frac 1 n h$.

	Now put
	$$e_f=\frac 1 2\int_S\Delta f+\frac 1 2\int_\C f^S\Delta L,\qquad v_f=\frac 1 4 \int_\C |\nabla (f^S)|^2.$$

	By \eqref{u2} and \eqref{u3}  with $h=sf$ and \eqref{u1} we have
	$$\tilde{\E}_{n,sf}(\fluct_n f)-e_f=sv_f+\bigO(n^{-\beta}).$$

	Integrating, we obtain
	$$F_{n,f}(t)=te_f+\frac {t^2} 2 v_f+\bigO(n^{-\beta}).$$

	It remains to prove that $e_f$ is given in terms of Neumann's jump as in \eqref{ef}. We refer to \cite[Section 6.5]{ACC1} for a detailed derivation of this relationship. q.e.d.

	\section{Proof of Theorem \ref{main3}} \label{Pmain3}

	Take $f\in C^\infty_b(\C)$ and write $f=f_1+f_2$ where
	$f_1\in\calG$ and $f_2=\para\omega\in\calH$, according to Lemma \ref{de1}.
	
	Consider the joint cgf
	$$F_n(s,t):=\log \E_n \exp(s\fluct_n f_1+t\fluct_n f_2).$$
	
	Also write
	$$h:=sf_1+tf_2.$$
	
	By Theorem \ref{mth2} we have that
	\begin{equation}\label{chob}
		F_n(0,t)=F_{\para X_n^+}(t)+F_{-\para X_n^-}(t)+\bigO(\delta_n),
	\end{equation}
	where $X_n^+$, $X_n^-$ have the required Heine distributions and the implied constant is uniform for $|t|\le \log n$.
	
	Also, in notation of Section \ref{babbel},
	$$\frac {\d F_n}{\d s}=\tilde{\E}_{n,h}(\fluct_n f_1),$$
	so
	$$F_n(t,t)=F_n(0,t)+\int_0^t \tilde{\E}_{n,h}(\fluct_n f_1)\, ds.$$
	
	But by \eqref{u1} and \eqref{u2}, since $f_2$ is locally constant on $S$
	\begin{align}\label{box}\tilde{\E}_{n,h}(\fluct_n f_1)=e_{f_1}+sv_{f_1}
		+ \bigO(n^{-\beta}).\end{align}

	Integrating in \eqref{box} and using \eqref{chob} we obtain
	\begin{align*}F_n(t,t)&=F_{\para X_n^+}(t)+F_{-\para X_n^-}(t)+F_Y(t)+\bigO(n^{-\beta}),\end{align*}
	where $Y$ is Gaussian with expectation $e_{f_1}$ and variance $v_{f_1}$. q.e.d.

\subsection*{Acknowledgements} The authors thank the Institut Mittag-Leffler for the hospitality during the program "Random Matrices and Scaling Limits" where this work was partly conducted. The second author acknowledges support from the Royal Swedish Academy of Sciences G.S. Magnuson's fund.

\end{document}